\numberwithin{equation}{section} %mitä liene tämä tarkoittaa?
\newtheorem{proposition}{Proposition}
\newtheorem{theorem}{Theorem}
\newtheorem{lemma}{Lemma}
\newtheorem{corollary}{Corollary}
\theoremstyle{definition}
\newtheorem{definition}{Definition}
\newtheorem{example}{Example}
\newtheorem{remark}{Remark}
\newcommand{\hI}{\mathcal{I}} %generic index set I
\newcommand{\sfa}{\mathsf{A}}
\newcommand{\sfb}{\mathsf{B}}
\newcommand{\sfg}{\mathsf{G}}
\newcommand{\N}{\mathbb N} %natural
\newcommand{\R}{\mathbb R} %real
\newcommand{\C}{\mathbb C} %complex
\newcommand{\T}{\mathbb T} %circle
\newcommand{\fii}{\varphi} % hassu fii
\newcommand{\hil}{{\mathcal H}} % -''- 
\newcommand{\ki}{\mathcal{K}} %other Hilbert space
\newcommand{\id}{\mathds1} %identity operator, MIKÄ OIS HYVÄ? PAKSU YKKÖNEN?
\newcommand{\tr}[1]{\mathrm{tr}\left[#1\right]} %trace
\def\<{\langle} %mainiot sulut! <
\def\>{\rangle} %mainiot sulut! >
\newcommand{\ket}[1]{|#1\rangle} %ket
\newcommand{\kb}[2]{|#1 \rangle\langle #2|} %ketbra
\newcommand{\Ao}{\mathsf{A}} %generic observable = \sfa
\newcommand{\Qo}{\mathsf{Q}} %generic observable = \sfq
\newcommand{\Mo}{\mathsf{M}} %generic observable = \sfm
\newcommand{\ov}{\overline} %ylleviivaus
\newcommand{\mc}[1]{\mathcal{#1}}
\newcommand{\msf}[1]{\mathsf{#1}}
\newcommand{\tj}{\vartheta}
\newcommand{\Set}{\mathbb{X}}
\newcommand{\Orb}{\mathcal{O}} %RATOJEN JOUKKO
\newcommand{\orb}{\Omega}      %RATA
\begin{document}

\title[Optimal covariant quantum measurements]{Optimal covariant quantum measurements}

\author{Erkka Haapasalo}
\email{erkkath@gmail.com}
\address{Department of Physics, Fudan University, 200433 Shanghai, China (PRoC)}

\author{Juha-Pekka Pellonp\"a\"a}
\email{juhpello@utu.fi}
\address{Turku Centre for Quantum Physics, Department of Physics and Astronomy, University of Turku, FI-20014 Turku, Finland}

\begin{abstract}
%\newline
%
%\noindent
We discuss symmetric quantum measurements and the associated covariant observables modelled, respectively, as instruments and positive-operator-valued measures. The emphasis of this work are the optimality properties of the measurements, namely, extremality, informational completeness, and the rank-1 property which contrast the complementary class of (rank-1) projection-valued measures. The first half of this work concentrates solely on finite-outcome measurements symmetric w.r.t.\ finite groups where we derive exhaustive characterizations for the pointwise Kraus-operators of covariant instruments and necessary and sufficient extremality conditions using these Kraus-operators. We motivate the use of covariance methods by showing that observables covariant with respect to symmetric groups contain a family of representatives from both of the complementary optimality classes of observables and show that even a slight deviation from a rank-1 projection-valued measure can yield an extreme informationally complete rank-1 observable. The latter half of this work derives similar results for continuous measurements in (possibly) infinite dimensions. As an example we study covariant phase space instruments, their structure, and extremality properties.
%PACS numbers: 03.65.Ta, 03.67.--a
\end{abstract}
\maketitle

%%%%%%%%%%%%%%%%%%%%%%%%%%%%%%%%%%%%%%%%%%%%%%%%%%%%%%%%%%%

\section{Introduction}

Let us concentrate on a quantum system described by the Hilbert space $\hil$. In quantum mechanical description, observables are represented as (normalized) positive operator valued measures (POVMs) and states are density operators, i.e.\ trace-1 positive operators. If the outcome space $\Set$ is finite, one can see a POVM $\Mo$ as a collection of positive operators $\Mo_x$, $x\in\Set$, summing up to the identity $\id=\id_\hil$ (normalization). The number $\tr{\rho\Mo_x}$ is interpreted as a probability to get $x$ in the measurement of $\Mo$ when the system is prepared in the state $\rho$. In the complete description of a measurement, we need to specify how the detection of an outcome $x$ affects the input state $\rho$ and this is done by an instrument $\hI$, originally introduced by Davies and Lewis \cite{QTOS76,DaviesLewis}, which is a collection of completely positive linear maps $\rho\mapsto\hI_x(\rho)$, where $\hI_x(\rho)$ is a (non-normalized) output state conditioned by $x$, and $\sum_{x\in\Set}\tr{\hI_x(\rho)}\equiv1$ so that $\sum_{x\in\Set}\hI_x(\rho)$ is the unconditioned total state. Note that the output states $\hI_x(\rho)$ may reside in a different Hilbert space $\ki$. Moreover, we say that $\hI$ measures a POVM $\Mo$, or is an $\Mo$--instrument, if $\tr{\hI_x(\rho)}\equiv \tr{\rho\Mo_x}$. For more details on quantum measurement theory, we refer to \cite{kirja}.

Observables are charaterized by symmetries. For example, position observables transform covariantly under the position shifts (translations) generated by the momentum operator. It is well known that, in addition to the sharp position (i.e.\ the spectral measure of the position operator), there are infinitely many unsharp position POVMs which all are smearings of the sharp one. To define a symmetric or covariant POVM, one must start by fixing a symmetry of the outcome space. For this, we need an appropriate (finite) symmetry group $G$ which acts on $\Set$, i.e.\ any $g\in G$ `transforms' or `shifts' an outcome $x$ into $gx\in\Set$. The neutral element $e\in G$ does nothing: $ex=x$ (and $ge=g=eg$). Moreover, we let $\Set$ be a $G$-space, i.e.,\ $g(hx)=(gh)x$ for all $g,\,h\in G$ and $x\in\Set$. Also, $G$ is assumed to act on the operator space of the system: any operator $A$ in the Heisenberg picture transforms into $\alpha_g(A):=U(g)AU(g)^*$ where $U(g)$ is a unitary operator\footnote{In the Schr\"odinger picture, any density operator $\rho$ transforms to $U(g)^*\rho U(g)$ under the action of $g\in G$.} and $g\mapsto\alpha_g$ is a group homomorphism of $G$ into the automorphism group of the operator algebra. This means that we may choose $g\mapsto U(g)$ to be a projective unitary representation, i.e.,\ there is a multiplier or 2-cocycle $m:G\times G\to\T$ such that $U(gh)=m(g,h)U(g)U(h)$ for all $g,\,h\in G$. The 2-cocycle conditions read
$$
m(e,g)=m(g,e)=1,\qquad m(g,h)m(gh,k)=m(g,hk)m(h,k)
$$
for all $g,\,h,\,k\in G$. Now, by definition, a covariant POVM $\Mo$ satisfies the following covariance (or equivariance) condition:
\begin{equation}\label{kovarianssiehto}
\Mo_{gx}=U(g) \Mo_x U(g)^*,\qquad g\in G,\;x\in\Set,
\end{equation}
that is, for any unit vector $\psi\in\hil$, the shifted probability distribution
$x\mapsto \<\psi|\Mo_{gx}\psi\>$ is the same as $x\mapsto \<\psi_g|\Mo_{x}\psi_g\>$  where $\psi_g=U(g)^*\psi$ is the symmetrically transformed input state. Thus, changing the initial state should only move the probability distribution without deforming its shape. One can see the condition \eqref{kovarianssiehto} as a generalization of canonical quantization of the classical variable $x$ \cite{Holevo}, or as the definition of the generalized imprimitivity system \cite{Cattaneo, QTOS76, Mackeykirja, varadarajan}. 

%%%%%%%%%%%%%%%%%%%%%%%%%%%%%%%%%%%%

Entire measurement settings can be symmetric in the sense that applying symmetry transformations on input states is the same as registering transformed values and obtaining conditional output states which are symmetrically transformed. We keep the above finite $G$-space $\Set$ and the input representation $U$ fixed and introduce output system symmetries via a projective unitary representation $g\mapsto V(g)$ operating on the output system Hilbert space $\mc K$. Symmetry of the measurement described by an instrument $\mc I$ now means that
\begin{equation}\label{eq:InstrKovEhto}
\mc I_{gx}\big(U(g)\rho U(g)^*\big)=V(g)\mc I_{x}(\rho)V(g)^*
\end{equation}
for all $x\in\Set$, $g\in G$, and all input states $\rho$. In this case we say that $\mc I$ is {\it $(\Set,U,V)$--covariant}. It easily follows that the observable measured by an $(\Set,U,V)$--covariant instrument is covariant w.r.t.\ $U$, i.e.,\ {\it $(\Set,U)$--covariant}.% and the associated channel $\Phi$ is {\it $(U,V)$--covariant}, i.e.,
%$$
%\Phi\big(U(g)\rho U(g)^*\big)=V(g)\Phi(\rho)V(g)^*,\qquad g\in G,\quad\rho\in\mc T(\hil).
%$$

%%%%%%%%%%%%%%%%%%%%%%%%%%%%%%%%%%%%

Let us make some further definitions on our $G$-space $\Set$. Let $\Orb$ be the set of the orbits $Gx=\{gx\,|\,g\in G\}\subseteq\Set$. Thus, the outcome space is the disjoint union of the orbits, $\Set=\biguplus_{\orb\in\Orb}\orb$. From now on, for any $\orb\in\Orb$ and $x\in\orb$, we first {\it fix} $x_\orb\in\orb$ (i.e.\ $\orb=Gx_\orb$) and then $g_x\in G$ such that $x=g_x x_\orb$. By defining stability subgroups $H_\orb=\{h\in G\,|\,hx_\orb=x_\orb\}$ we see that $g_x$ is not necessarily unique. Indeed, for all $h\in H_\orb$, one gets $(g_xh)x_\orb=g_xx_\orb=x\in\orb$ and, from \eqref{kovarianssiehto},
$$
\Mo_{x}=\Mo_{g_xx_\orb}=U(g_x) \Mo_{x_\orb} U(g_x)^*=\Mo_{(g_xh)x_\orb}=U(g_x)U(h) \Mo_{x_\orb}U(h)^* U(g_x)^*.
$$
so that
$
U(h) \Mo_{x_\orb} = \Mo_{x_\orb}U(h).
$
By denoting $K_\orb:=\Mo_{x_\orb}$ we have proven the following theorem:
\begin{theorem}\label{theor:CovObsBasic}
A POVM $\Mo$ is covariant if and only if
$
\Mo_x=U(g_x)K_\orb U(g_x)^*
$ for all $x\in\orb\in\Orb$ where $K_\orb$ is a positive operator such that
$K_\orb U(h)=U(h)K_\orb$, $h\in H_\orb$. Now $\Mo$ is normalized exactly when 
$K:=\sum_{\orb\in\Orb}\sum_{x\in \orb}U(g_x)K_\orb U(g_x)^*=\id$. 
\end{theorem}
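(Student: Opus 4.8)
The plan is to establish the equivalence in both directions, check that the proposed parametrization is well defined, and then read off the normalization condition by summing over the disjoint orbit decomposition of $\Set$.

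The forward implication is already essentially carried out in the paragraph preceding the statement: assuming $\Mo$ covariant, I would set $K_\orb:=\Mo_{x_\orb}$, note $K_\orb\ge0$ because $\Mo$ is a POVM, invoke \eqref{kovarianssiehto} to get $\Mo_x=\Mo_{g_xx_\orb}=U(g_x)K_\orb U(g_x)^*$, and reuse the displayed computation to conclude $U(h)K_\orb=K_\orb U(h)$ for $h\in H_\orb$ (the $2$-cocycle $m$ drops out because $|m|\equiv1$).

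Before the converse, the key preliminary step is to verify that $U(g)K_\orb U(g)^*$ depends only on the point $gx_\orb\in\orb$, so that $\Mo_x=U(g_x)K_\orb U(g_x)^*$ is unambiguous despite the noted non-uniqueness of $g_x$. If $g'x_\orb=gx_\orb$ then $h:=g^{-1}g'\in H_\orb$, and writing $U(g')=U(gh)=m(g,h)U(g)U(h)$ with $|m(g,h)|=1$ and using $U(h)K_\orb=K_\orb U(h)$ (so that $U(h)K_\orb U(h)^*=K_\orb$) gives $U(g')K_\orb U(g')^*=U(g)K_\orb U(g)^*$. With this in hand, each $\Mo_x=U(g_x)K_\orb U(g_x)^*$ is positive since $U(g_x)$ is unitary and $K_\orb\ge0$, and covariance follows: for $x\in\orb$ and $g\in G$ one has $gx=(gg_x)x_\orb\in\orb$, hence by the preliminary step $\Mo_{gx}=U(gg_x)K_\orb U(gg_x)^*$; expanding $U(gg_x)=m(g,g_x)U(g)U(g_x)$ and cancelling the unit-modulus phase yields $\Mo_{gx}=U(g)\big(U(g_x)K_\orb U(g_x)^*\big)U(g)^*=U(g)\Mo_xU(g)^*$, which is \eqref{kovarianssiehto}.

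Finally, since $\Set=\biguplus_{\orb\in\Orb}\orb$ is a finite disjoint union, the POVM normalization $\sum_{x\in\Set}\Mo_x=\id$ is literally the asserted identity $K=\sum_{\orb\in\Orb}\sum_{x\in\orb}U(g_x)K_\orb U(g_x)^*=\id$, and no convergence or measurability issues arise because $\Set$ is finite. The only point demanding genuine care is the bookkeeping with the multiplier $m$ and the coset representatives $g_x$: the commutation hypothesis $K_\orb U(h)=U(h)K_\orb$ is precisely what makes $U(g)K_\orb U(g)^*$ a function of $gx_\orb$ alone, after which everything is routine.
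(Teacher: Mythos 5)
Your proof is correct and follows essentially the same route as the paper's own argument, which is embedded in the paragraph preceding the theorem statement; the paper explicitly carries out only the forward direction (defining $K_\orb:=\Mo_{x_\orb}$ and deriving the $H_\orb$-commutation), leaving the converse and the well-definedness of $x\mapsto U(g_x)K_\orb U(g_x)^*$ implicit, whereas you spell these out, including the multiplier bookkeeping that makes $|m|\equiv1$ cancel. No gaps.
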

\noindent
Note that if $\Mo$ is not normalized (i.e.\ $K\ne\id$) but $K$ is invertible, one can define a normalized covariant POVM as the collection of effects $K^{-1/2}\Mo_x K^{-1/2}$, $x\in\Set$.\footnote{Indeed, $U(g)KU(g)^*=K$ so that $K$ and thus $K^{-1/2}$
 commutes with any $U(g)$. Note that the eigenvalues of $K$ (and $K^{-1/2}$) are positive.}
Moreover, we note that there is necessarily no nontrivial solution $\Mo$ for \eqref{kovarianssiehto}. 
For example, if there is only one orbit, $\Orb=\{\Set\}$, and the subrepresentation $h\mapsto U(h)$ of $H_\Set$ is irreducible, then $K_\Set=k\id$, $k\ge 0$  (by Schur's lemma). Thus, $\Mo_x=k\id$ for all $x\in\Set$.

We also obtain a similar preliminary characterization for covariant instruments which we will further refine later in this work.

\begin{theorem}\label{theor:CovInstrBasic}
An instrument $\mc I$ is $(\Set,U,V)$--covariant if and only if
$$
\mc I_x(\rho)=V(g_x)\Lambda_\orb\big(U(g_x)^*\rho U(g_x)\big)V(g_x)^*
$$
for all $x\in\orb\in\Orb$ where $\Lambda_\orb$ is a completely positive linear map such that $\Lambda_\orb\big(U(h)\rho U(h)^*\big)=V(h)\Lambda_\orb(\rho)V(h)^*$ for all $h\in H_\orb$ and all input states $\rho$.
Clearly, the normalization condition $\sum_{x\in\Set}\tr{\hI_x(\rho)}\equiv1$ holds if and only if  
$
\sum_{\orb\in\Orb}\sum_{x\in\orb}\tr{\Lambda_\orb\big(U(g_x)^*\rho U(g_x)\big)}\equiv1.
$
\end{theorem}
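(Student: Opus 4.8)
The plan is to follow the derivation of Theorem~\ref{theor:CovObsBasic} given just above, only now carrying completely positive maps along instead of positive operators. For the ``only if'' direction, assume $\mc I$ is $(\Set,U,V)$--covariant and, for each orbit $\orb\in\Orb$, put $\Lambda_\orb:=\mc I_{x_\orb}$; this is completely positive and linear, being a component of an instrument. Inserting $g=g_x$ and the base point $x_\orb$ into \eqref{eq:InstrKovEhto} gives $\mc I_x\big(U(g_x)\sigma U(g_x)^*\big)=V(g_x)\Lambda_\orb(\sigma)V(g_x)^*$ for every state $\sigma$, and since $U(g_x)$ is unitary the substitution $\sigma=U(g_x)^*\rho U(g_x)$ yields the asserted formula for $\mc I_x$. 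Inserting instead $g=h\in H_\orb$ together with $x_\orb$ (for which $hx_\orb=x_\orb$) into \eqref{eq:InstrKovEhto} gives precisely $\Lambda_\orb\big(U(h)\rho U(h)^*\big)=V(h)\Lambda_\orb(\rho)V(h)^*$, the required $H_\orb$--covariance of $\Lambda_\orb$.

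For the ``if'' direction, suppose the $\Lambda_\orb$ have the stated properties. First one checks that the proposed formula for $\mc I_x$ is independent of the choice of $g_x$: if $g_x'$ also satisfies $g_x'x_\orb=x$, then $g_x'=g_xh$ for some $h\in H_\orb$, and expanding $U(g_xh)$ and $V(g_xh)$ through the respective $2$-cocycles produces only modulus-one phase factors that cancel under the conjugations, while the identity $\Lambda_\orb\big(U(h)^*\tau U(h)\big)=V(h)^*\Lambda_\orb(\tau)V(h)$ --- equivalent to the $H_\orb$--covariance, obtained by replacing $\rho$ with $U(h)^*\tau U(h)$ --- absorbs the leftover $U(h)$'s and $V(h)$'s. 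Each $\mc I_x$ is then completely positive and linear, being $\Lambda_\orb$ preceded by the conjugation $\rho\mapsto U(g_x)^*\rho U(g_x)$ and followed by $\tau\mapsto V(g_x)\tau V(g_x)^*$. To verify \eqref{eq:InstrKovEhto}, fix $g\in G$ and $x\in\orb$; since $gg_xx_\orb=gx=g_{gx}x_\orb$, the independence just proved lets us evaluate $\mc I_{gx}$ using $gg_x$ in place of $g_{gx}$, and expanding $U(gg_x)$ and $V(gg_x)$ via the cocycles (the phases again cancelling), together with $U(gg_x)^*U(g)\rho U(g)^*U(gg_x)=U(g_x)^*\rho U(g_x)$, gives $\mc I_{gx}\big(U(g)\rho U(g)^*\big)=V(g)\mc I_x(\rho)V(g)^*$. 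The normalization equivalence is then immediate: summing $\tr{\mc I_x(\rho)}$ over $x\in\orb$ and $\orb\in\Orb$ and substituting the formula turns $\sum_{x\in\Set}\tr{\mc I_x(\rho)}\equiv1$ into the stated condition on the $\Lambda_\orb$.

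I expect no conceptual difficulty here; the only real bookkeeping is keeping the two $2$-cocycles straight whenever group elements are combined and confirming that the non-uniqueness of $g_x$ introduces no ambiguity. Since every multiplier is $\T$-valued, all such phases disappear under conjugation by the corresponding unitaries, so this reduces to a routine verification analogous to the one behind Theorem~\ref{theor:CovObsBasic}.
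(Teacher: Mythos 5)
Your proof is correct and takes essentially the same approach as the paper, which simply observes that one may take $\Lambda_\orb=\mc I_{x_\orb}$ and that the claim then follows from \eqref{eq:InstrKovEhto}. Your version supplies the bookkeeping --- the cancellation of the $2$-cocycle phases under conjugation and the independence of the formula from the particular choice of $g_x$ --- that the paper's one-sentence justification leaves implicit.
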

\noindent One easily sees that we may choose $\Lambda_\orb=\mc I_{x_\orb}$ for any $\orb\in\Orb$, and the theorem immediately follows using Equation \eqref{eq:InstrKovEhto}.
Furthermore, the normalization condition above simplifies to 
$
\sum_{\orb\in\Orb}(\#H_\orb)^{-1}\sum_{g\in G}\tr{\Lambda_\orb\big(U(g)^*\rho U(g)\big)}\equiv1
$
where $\#S$ is the number of elements in a set $S$.

It is important to note that, if $\Mo$ is an $(\Set,U)$--covariant POVM (i.e.\ satisfies \eqref{kovarianssiehto})
and $V$ is a projective representation of the same group $G$ in any Hilbert space $\ki$ there exists an
$(\Set,U,V)$--covariant $\Mo$--instrument. Namely, for any $\orb\in\Orb$, choose a state $\sigma'_\orb$ of $\ki$ and define the $H_\orb$--invariant state $\sigma_\orb:=(\# H_\orb)^{-1}\sum_{h\in H_\orb}V(h)\sigma'_\orb V(h)^*$ and an instrument 
$$
\mc I^{\rm nuc}_x(\rho):=\tr{\rho\Mo_x}
V(g_x)\sigma_\orb V(g_x)^*
$$
for all $x\in\orb\in\Orb$. Note that, in the context of Theorem \ref{theor:CovInstrBasic}, $\Lambda_\orb(\rho)=\tr{\rho\Mo_{x_\orb}}\sigma_\orb$.
Operationally, in the measurement of $\Mo$ with $\mc I^{\rm nuc}$, if $x$ is obtained (with the probability $\tr{\rho\Mo_x}$) then the output state is $\sigma_x=V(g_x)\sigma_\orb V(g_x)^*$ which does not depend on the input state $\rho$.
Such an instrument is called measure-and-prepare or nuclear \cite{CyHe77}.

Typically there are infinitely many covariant observables so we can ask which are the optimal POVMs  $\Mo$ satisfying the condition \eqref{kovarianssiehto}.
In our previous paper \cite{OptObs}, we studied six optimality criteria of observables. First, we showed that two of these  properties are equivalent with the POVM $\Mo$ being of rank 1 (i.e.\ $\Mo_x=\kb{d_x}{d_x}$ or $\Mo_x=0$ for any $x$): the observable $\Mo$ determines the future of the system (i.e.\ 
any $\Mo$--instrument is nuclear)
 and $\Mo$ is free from the classical noise caused by post-processing of the measurement data. There is also another source of classical noise, namely, the mixing of POVMs. Extreme observables cannot be presented as convex mixtures of observables (`coin tossing between measurements') and, thus, they are free from this type of noise.\footnote{In this paper, we also discuss extremal elements of the smaller set of covariant POVMs or instruments. Such extreme points always exist but they are not necessarily extreme within the entire set of POVMs or instruments. } Projection valued POVMs (PVMs) are automatically extreme and they are also free from quantum noise of pre-processing (i.e.\ one cannot irreversibly send the input state through a channel and then measure another observable to get the same probabilities). The third important property of PVMs is that they determine their values with probabilistic certainty (i.e.\ for any $x$ one finds a state $\rho$ such that $\tr{\rho\Mo_x}=1$ if $\Mo_x\ne0$). The final optimality criterion we studied was the ability of the POVM to completely determine the initial state or the `past' of the system (i.e.\ $\tr{\rho\Mo_x}\equiv\tr{\rho'\Mo_x}$ implies $\rho'=\rho$). Such observables are called informationally complete (IC). See also \cite{BuKeDAPeWe2005,BuLa89} for earlier studies of these optimality properties.
Thus, we  essentially ended up to two mutually complementary classes of optimal POVMs: 
\begin{itemize}
\item[(a)] projection valued rank-1 POVMs and 
\item[(b)] informationally complete extreme (rank-1) POVMs. 
\end{itemize}
We emphasise that a covariance system characterised by  \eqref{kovarianssiehto} might not allow 
rank-1, extreme, PVM, or IC solutions. In the worst case, none such optimal solutions exist (e.g.\ a system with only a trivial solution, an example of which was given just after Theorem \ref{theor:CovObsBasic}).

In the $D$-dimensional Hilbert space $\hil$, any IC extreme observable (is rank-1 and) has exactly $D^2$ non-zero effects $\Mo_x=|d_x\>\<d_x|$ which form a linearly independent set. Similarly, any rank-1 PVM $\Mo_x=|d_x\>\<d_x|$ has $D$ (linearly independent) non-zero projections which form the usual `basis measurement.' Indeed, now $\<d_x|d_y\>=\delta_{xy}$ for non-zero vectors $d_x$ and $d_y$. Since our optimality classes (a) and (b) are clearly disjoint (i.e.\ the determination of the values and the past are complementary properties) we cannot force any observable to be optimal in all six ways above. What one can do is to assume that some optimality criteria hold only approximately and there are `continuous' transformation from one class to the other class of properties. We will exhibit examples of this kind of transformations which also preserve covariance.

The common criterion in both optimality classes (a) and (b) is the rank-1 property which we assume from now on. Clearly, a covariant POVM $\Mo$ is of rank 1 if and only if, for any orbit $\orb\in\Orb$, its `seed'  is of the form $K_\orb=\kb{d_\orb}{d_\orb}$ where $d_\orb$ is a common eigenvector\footnote{Since $U(h)\kb{d_\orb}{d_\orb}=\kb{d_\orb}{d_\orb}U(h)$ exactly when $U(h)d_\orb=c d_\orb$, $c\in\T:=\{c\in\C\,|\,|c|=1\}$. If $H_\orb\ni h\mapsto U(h)$ is irreducible then $d_\orb=0$ as otherwise $\C d_\orb$ would be a nontrivial invariant subspace.} for all unitary operators $U(h)$, $h\in H_\orb$, or $d_\orb=0$. Hence, we may choose $d_x=U(g_x)d_\orb$, $x\in\orb$. If $d_\orb=0$ then all operators $M_x=\kb{d_x}{d_x}$ vanish in the orbit $\orb$ so that the outcomes of that orbit are never registered in any measurement of $\Mo$. In this case, one can redefine $\Set$ to be the union of all orbits where $\Mo$ is not zero.

If $\Mo$ belongs to class (a) (i.e.\ is a PVM) then it has exactly $D$ non-zero (mutually orthogonal) unit vectors $d_x$. For example, if there is only one orbit $\orb=\Set$ and $H_\Set=\{e\}$ then both $G$ and $\Set$ has exactly $D$ elements (i.e.\ any $x=g_x x_\orb$ where $g_x$ is unique) we may take any orthonormal basis $\{d_x\}_{x\in\Set}$ of a $D$-dimensional Hilbert space and define a unitary representation $U(g):=\sum_{x\in\Set}|d_{gx}\>\<d_x|$ to get a covariant rank-1 PVM $\Mo_x:=\kb{d_x}{d_x}$. In this case, we see that \eqref{kovarianssiehto} cannot have a extreme IC solution (since we would need $D^2$ non-zero effects). However, one can extend the covariance structure in such a way that it may also admit an extreme IC solution: We extend the group action $G\times\Set\ni(g,x)\mapsto gx\in\Set$ to the Cartesian product $\Set^2:=\Set\times\Set$ into $G\times\Set^2\ni\big(g,(x,y)\big)\mapsto g(x,y):=(gx,gy)\in\Set^2$ and interpret any covariant POVM $\Mo_x$ as a covariant POVM $\overline\Mo_{(x,y)}:=\delta_{xy}\Mo_x$ with the value space $\Set^2$ of $D^2$ elements. Note that $U$ remains the same. Clearly, $\overline\Mo$ is supported on the diagonal $\{(x,x)\,|\,x\in\Set\}\cong\Set$ and it can be seen as a (trivial) joint measurement of $\Mo$ with itself.\footnote{A POVM $(\sfg_{(x,y)})$ is a joint observable for POVMs $(\sfa_x)$ and $(\sfb_y)$ if $\sum_y \sfg_{(x,y)}=\sfa_x$ and $\sum_x \sfg_{(x,y)}=\sfb_y$.}
A question is whether there is a covariant extreme IC solution for this enlarged system. 

Next we will see that any covariant rank-1 POVM is a projection (and postprocessing) of the above type covariant rank-1 PVM. Indeed, let $\Mo_x=U(g_x)\kb{d_\orb}{d_\orb}U(g_x)^*$, $x\in\orb\in\Orb$, be a covariant rank-1 POVM which need not be normalized since we can normalize it later (see Remark \ref{normalization}). To obtain rank-1 PVM as above we can take the following steps:
\begin{enumerate}
\item
Define a new (finite) outcome space $\Set':=\Orb\times G$ and a POVM
$$
\Mo'_{\orb,g}:=\frac1{\#H_\orb}U(g)\kb{d_\orb}{d_\orb}U(g)^*,\qquad\orb\in\Orb,\quad g\in G,
$$
Clearly, $\Mo'_{\orb,g_x}=\Mo'_{\orb,g_xh}=\Mo_x/\#H_\orb$, $x\in\orb$, $h\in H_\orb$, so that if $\Mo$ is normalized then $\Mo'$ is also normalized to $\id$ and $\Mo$ is a post-processing of $\Mo'$,
$$
\Mo_x=\sum_{h\in H_\orb} \Mo'_{\orb,g_xh},\qquad x\in\orb,
$$
that is, any measurement of $\Mo'$ can be viewed as a measurement of $\Mo$. Note that $\Mo'$ is also covariant when $\Set'$ is equipped with the $G$-action
$g(\orb,g'):=(\orb,gg')$ and the orbits are $\{\orb\}\times G$, $\orb\in\Orb$.

\item
Consider then a covariant Na\u{\i}mark dilation\footnote{The dilation is minimal if and only if $d_\orb\ne 0$ for all $\orb\in\Orb$ (i.e.\ $\Mo_x\ne 0$ for all $x\in\Set$).} of $\Mo'$: The $(\#\Orb\#G)$--dimensional dilation space is spanned by orthonormal vectors $|\orb,g\>$, $\orb\in\Orb$, $g\in G$. Now
$$
J:=\sum_{\orb\in\Orb}\frac{1}{\sqrt{\#H_\orb}}\sum_{g\in G}\kb{\orb,g}{d_\orb}U(g)^*
$$
and the canonical (rank-1) PVM $\Qo_{\orb,g}:=\kb{\orb,g}{\orb,g}$ are such that
$$
\Mo'_{\orb,g}\equiv J^*\Qo_{\orb,g}J.
$$
Clearly, $\Mo'$ is normalized if and only if $J$ is an isometry (i.e.\ $J^*J=\id$). Thus, any measurement of the normalized POVM $\Mo'$ can be seen as a measurement of $\Qo$ when the states are restricted to the range (subspace) of the Na\u{\i}mark projection $JJ^*$. Note that also $\Qo$ is covariant. Indeed, if $m$ is the Schur multiplier (2-cocycle) of the projective unitary representation $g\mapsto U(g)$ one can define a multiplier (left regular) representation
$$
V(g):=\sum_{\orb\in\Orb}\sum_{g'\in G}\overline{m(g,g')}\kb{\orb,gg'}{\orb,g'}
$$
such that $V(gg')=m(g,g')V(g)V(g')$, $V(g)J=JU(g)$ and 
$
\Qo_{g(\orb,g')}=V(g)\Qo_{\orb,g'}V(g)^*.
$
\item We can extend the group $G$ and assume that the multiplier $m(g,g')\equiv 1$. Indeed, as shown in  Appendix A, one can suppose that there exists a (minimal) positive integer $p\le \#G$ such that $m(g,g')^p=1$ for all $g,\,g'\in G$ and $m(e,e)=1$. Define then the (multiplicative) cyclic group $\<t\>=\{1,t,t^2,\ldots,t^{p-1}\}$ where $t:=\exp(2\pi i/p)$ so that $m(g,g')\in\<t\>$, i.e.\ $m(g,g')=t^{q(g,g')}$ where $q(g,g')\in\{0,1,\ldots,p-1\}$. Now a central extrension group (induced by $m$) is a finite set $G_m:=G\times\<t\>$ equipped with the multiplication $(g,t^k)(g',t^\ell):=\big(gg',\overline{m(g,g')}t^{k+\ell}\big)$. Since $m(g,e)=m(e,g)=m(e,e)=1$ one sees that $(e,1)$ is the identity element of $G_m$ and $(g,t^k)^{-1}=\Big(g^{-1},\overline{m(g,g^{-1})}^{-1}t^{-k}\Big)$. Defining unitary operators $\tilde U(g,t^k):=t^k U(g)$ one gets the unitary representation of $G_m$, i.e.\ $\tilde U\big((g,t^k)(g',t^\ell)\big)=\tilde U(g',t^\ell)\tilde U(g,t^k)$ with the constant cocycle. Futhermore, the action $gx$ extends trivially: $(g,t^k)x:=gx$ and we get
$$
\Mo_{(g,t^k)x}=\Mo_{gx}=U(g)\Mo_x U(g)^*= \tilde U(g,t^k)\Mo_x\tilde U(g,t^k)^*.
$$
Hence, $\Mo$ can be seen as a covariant POVM with respect to the larger group $G_m$. Note that if already $m(g,g')\equiv 1$ one has $p=1$, $\<t\>=\{1\}$ and $G_m\cong G$ via $(g,1)\mapsto g$. To conclude, one can replace $G$ with $G_m$ (and elements $g$ with pairs $(g,t^k)$) everywhere in items (1) and (2) and put $m(g,g')\equiv1$. 

\item If  $m(g,g')\equiv1$ then $V(g)=\sum_{\orb\in\Orb}\sum_{g'\in G}\kb{\orb,gg'}{\orb,g'}$ is just a permutation $\pi(g')=gg'$ acting on the basis vectors $|\orb,g'\>$ for a fixed $\orb$. Thus, one can view $G$ as a subgroup of the symmetric group ${\rm Sym}(G):=\{\pi:\,G\to G\,|\,\pi \text{ is bijective}\}$. Especially, $V$ extends to the unitary representation $\overline V(\pi):=\sum_{\orb\in\Orb}\sum_{g'\in G}\kb{\orb,\pi(g')}{\orb,g'}$, $\pi\in{\rm Sym}(G)$, which is a direct sum of the representations $\pi\mapsto \sum_{g'\in G}\kb{\orb,\pi(g')}{\orb,g'}$.
Note that the PVM $\Qo_{\orb,g}=\kb{\orb,g}{\orb,g}$ of item (2) is also covariant with respect to the larger group ${\rm Sym}(G)$:
$
\Qo_{\orb,\pi(g)}=V(\pi)\Qo_{\orb,g}V(\pi)^*.
$
Finally, we can simply number the elements of $G$, $G=\{g_1,g_2,\ldots,g_{\#G}\}$, and identify $G$ (respectively, ${\rm Sym}(G)$) with $\{1,2,\ldots,\#G\}$ (resp.\ the permutations of the integers in question).

\end{enumerate}

Above, we have a method for constructing optimal POVMs. Namely, one can start from item (4) and go backwards. As we have seen, the basic building block of a covariant POVM is a rank-1 PVM $\Qo^D_n:=\kb{n}{n}$, $n\in\Set_{D}:=\{1,\ldots,D\}$, which is covariant with respect to the symmetric group $S_D={\rm Sym}(\Set_D)$ which act in an $D$--dimensional Hilbert space with an orthonormal basis $\{|1\>,\,|2\>,\ldots,|D\>\}$ via the representation $U(\pi)=\sum_{n=1}^D\kb{\pi(n)}{n}$.\footnote{Note that, in item (4), $D=\#G$ and $\ket n=\ket{\orb,g_n}$.} To get an IC extreme POVM we first enlarge the outcome space $\Set_{D}$ to the Cartesian product $\Set_{D}^{2}=\{(n,m)\,|\,1\le n,\,m\le D\}$ where $S_D$ acts via $\pi(n,m):=\big(\pi(n),\pi(m)\big)$. Identify $\Set_{D}$ with the diagonal of $\Set_{D}^{2}$. In Example \ref{ex:genSym}, we define a continuous family of  covariant rank-1 IC extreme POVMs (with outcome space $\Set_{D}^{2}$) with the end point $\Qo^D$. We want to stress that the connective POVMs are also extreme and thus they are not (classical) convex mixtures. In dimension three ($D=3$) this is an easy exercise which we demonstrate next. 

\begin{example}\label{ex:Sym}
Consider the permutation group $S_3$ of a three element set $\Set_3=\{1,2,3\}$. Its generators are permutations $(12)$ and $(13)$. The other permutations are $e=(1)=(12)(12)$, $(123)=(13)(12)$, $(132)=(12)(13)$, and $(23)=(12)(13)(12)$. By definition, $S_3$ operates on $\{1,2,3\}$ by permuting its elements (e.g.\  $(23)1=1$, $(23)2=3$ ja $(23)3=2$). As before, $S_3$ operates also on the nine element set $\Set_3^2=\{1,2,3\}\times\{1,2,3\}$ [e.g.\ $(23)(1,3):=\big((23)1,(23)3\big)=(1,2)$]. Let the Hilbert space be three dimensional, fix its orthonormal basis $\{\ket1,\ket2,\ket3\}$ and define a unitary representation by $U(\pi)=\sum_{n=1}^3\kb{\pi(n)}{n}$, $\pi\in S_3$, that is,
\begin{eqnarray*}
U(12)=\kb21+\kb12+\kb33, & U(13)=\kb31+\kb22+\kb13, \\
 U(1)=\kb11+\kb22+\kb33, & U(123)=\kb21+\kb32+\kb13, \\ 
 U(132)=\kb31+\kb12+\kb23, & U(23)=\kb11+\kb32+\kb23.
\end{eqnarray*}

\begin{enumerate}
\item We have $\Set_3^2=\orb\uplus\orb'$ where the orbits are $\orb=\{(1,1),(2,2),(3,3)\}\cong\Set_3$ and $\orb'=\{(1,2),(2,1),(1,3),(3,1),(2,3),(3,2)\}$ from where we pick points $x_\orb=(1,1)$ ja $x_{\orb'}=(1,2)$.
\item Stability subgroups are $H_{\orb}=\{(1),(23)\}$ and $H_{\orb'}=\{(1)\}$.
\item Since $H_{\orb'}$ is trivial, its seed $K_{\orb'}$ can be an arbitrary positive operator. 
On the other hand, the seed $K_{\orb}\ge0$ must commute with 
$$
U(23)=\kb11+\kb32+\kb23=1\cdot(\kb11+\kb{\fii^{23}_+}{\fii^{23}_+})-1\cdot\kb{\fii^{23}_-}{\fii^{23}_-},$$ 
where the eigenvectors are of the form
$\fii^{ij}_\pm:=2^{-1/2}(\ket i\pm\ket j)$, $i,j\in\{1,2,3\}$, so that 
$$
K_{\orb}=
a\kb11+b\kb{1}{\fii^{23}_+}+\ov b\kb{\fii^{23}_+}{1}+c\kb{\fii^{23}_+}{\fii^{23}_+}
+d\kb{\fii^{23}_-}{\fii^{23}_-},
$$
where the complex numbers satisfy the following conditions: $a,\,c,\,d\ge 0$ ja $ac\ge|b|^2$.
\item Choose $g_{(1,1)}=(1)$, $g_{(2,2)}=(12)$ and $g_{(3,3)}=(13)$ for $\orb$ and
  $g_{(1,2)}=(1)$,
$g_{(2,1)}=(12)$,
$g_{(1,3)}=(23)$,
$g_{(3,1)}=(132)$,
$g_{(2,3)}=(123)$, and
$g_{(3,2)}=(13)$ for $\orb'$.
\item Finally, we normalize the following covariant POVM (where $a,\,c,\,d\ge 0$ and $ac\ge|b|^2$)
\begin{eqnarray*}
\Mo_{(1,1)}&=&K_{\orb}=a\kb11+
b\kb{1}{\fii^{23}_+}+\ov b\kb{\fii^{23}_+}{1}+c\kb{\fii^{23}_+}{\fii^{23}_+}
+d\kb{\fii^{23}_-}{\fii^{23}_-}, \\ 
\Mo_{(2,2)}&=&U(12) K_{\orb} U(12)^*=a\kb22+b\kb{2}{\fii^{13}_+}+\ov b\kb{\fii^{13}_+}{2}+c\kb{\fii^{13}_+}{\fii^{13}_+}
+d\kb{\fii^{13}_-}{\fii^{13}_-},\\
\Mo_{(3,3)}&=&U(13) K_{\orb}U(13)^*=a\kb33+b\kb{3}{\fii^{21}_+}+\ov b\kb{\fii^{21}_+}{3}+c\kb{\fii^{21}_+}{\fii^{21}_+}
+d\kb{\fii^{21}_-}{\fii^{21}_-}, \\
\Mo_{(1,2)}&=&K_{\orb'}\ge0 \\
\Mo_{(2,1)}&=&U(12) K_{\orb'} U(12)^*, \\
\Mo_{(1,3)}&=&U(23) K_{\orb'} U(23)^*, \\
\Mo_{(3,1)}&=&U(132) K_{\orb'} U(132)^*, \\
\Mo_{(2,3)}&=&U(123) K_{\orb'} U(123)^*, \\
\Mo_{(3,2)}&=&U(13) K_{\orb'} U(13)^*.
\end{eqnarray*}
If the operators $\Mo_{(n,m)}$ are linearly independent (resp.\ rank-1) then the normalized operators $K^{-1/2}\Mo_{(n,m)}K^{-1/2}$, $K=\sum_{n,m=1}^3\Mo_{(n,m)}$, are also linearly independent (resp.\ rank-1).
\end{enumerate}
Note that the matrices of the first three operators are
\begin{eqnarray*}
\Mo_{(1,1)}&=&
\begin{pmatrix}
a & b' & b' \\
\ov{b'} & c' & c' \\
\ov{b'} & c' & c' 
\end{pmatrix}
+d'
\begin{pmatrix}
0 & 0 & 0 \\
0 & 1 & -1 \\
0 & -1 & 1
\end{pmatrix}, \\ 
\Mo_{(2,2)}&=&
\begin{pmatrix}
c' & \ov{b'} & c'\\
b'  & a & b' \\
c' & \ov{b'} & c' 
\end{pmatrix}
+d'
\begin{pmatrix}
1 & 0 & -1 \\
0 & 0 & 0 \\
-1 & 0 & 1
\end{pmatrix}, \\ 
\Mo_{(3,3)}&=&
\begin{pmatrix}
c' & c' & \ov{b'} \\
c' & c' &  \ov{b'} \\
b' & b' & a 
\end{pmatrix}
+d'
\begin{pmatrix}
1 & -1 & 0 \\
-1 & 1 & 0 \\
0 & 0 & 0
\end{pmatrix},
\end{eqnarray*}
where $b'=2^{-1/2}b$, $c'=c/2$, and $d'=d/2$ (now $ac'\ge|b'|^2$).
\begin{itemize}
\item
$\Mo$ is rank-1 iff $K_{\orb}=\kb{d_\orb}{d_\orb}$ and $K_{\orb'}=\kb{d_{\orb'}}{d_{\orb'}}$. Now $K_{\orb}=\kb{d_\orb}{d_\orb}\ne 0$ iff either $ac=|b|^2\ne 0$ (i.e.\ $ac'=|b'|^2$) and $d=0$, or  $a=b=c=0$ and $d>0$.
\item $\Mo$ is a rank-1 PVM\footnote{Note that we can always choose the basis such that a rank-1 PVM is the corresponding diagonal `basis measurement.'} if  $a=1$ ja $b=c=d=0$ and $K_{\orb'}=0$ (i.e.\ $\Mo_{(n,m)}=\delta_{nm}\kb n n$).
\item A  rank-1 $\Mo$ is IC extreme (after normalization) iff the nine effects $\Mo_{(n,m)}$ are linearly independent. By direct calculation, this happens if we choose $K_\orb=\kb11$
and $K_{\orb'}=\kb{d_{\orb'}}{d_{\orb'}}$ where $d_{\orb'}=\alpha\big(e^{-i\pi/8}\ket1+e^{i\pi/8}\ket2\big)$, $\alpha>0$.  
For the properly normalized POVM, see Example \ref{ex:genSym}.
\end{itemize}
To conclude, we have a continuous ($\alpha$--indexed) family of covariant rank-1 IC extreme POVMs  whose ($\alpha=0$) end point is a covariant rank-1 PVM. The POVMs with $\alpha>0$ and $\alpha=0$ represent the two complementary optimality classes.
It is interesting to see that in the case $\alpha\approx0$ we get an IC POVM which is `almost' a PVM. \hfill $\triangle$ \end{example}

Using similar methods as above, we may extend an $(\Set,U,V)$--covariant instrument into an instrument whose values are described by $\Orb$ and $G$ and whose symmetries are simply described by permutations of the elements of $G$. Let $m_U$ (resp.\ $m_V$) be the multiplier associated with $U$ (resp.\ with $V$). In particular, through a similar group extension method, picking a (minimal) positive integer $p\leq\# G$ such that $m_U(g,h)^p=1=m_V(g,h)^p$ for all $g,\,h\in G$, we may essentially assume that $U$ and $V$ are ordinary unitary representations, i.e.,\ $m_U(g,h)=1=m_V(g,h)$ for all $g,\,h\in G$.

In what follows, we elaborate the description of $(\Set,U,V)$--covariant instruments and consider the convex set  of $(\Set,U,V)$--covariant instruments and its extreme points. In particular, we see that covariant instruments can be described by pointwise Kraus-operators given by a set of single-point Kraus operators of very particular form which we call $(\Set,U,V)$--intertwiners. After this, we consider the consequences of these results for covariant POVMs and channels. Motivated by the importance of the symmetric group, we give generalizations of the results of Example \ref{ex:Sym} for general symmetric groups and corresponding covariant POVMs in Example \ref{ex:genSym}. We will see that, in general we can determine a family of observables covariant w.r.t.\  the symmetric group in any finite-dimensional system where the disjoint optimality classes (a) and (b) are both represented and that representatives from both classes can be chosen arbitrarily close one another. After this, we generalize many of these results for measurements with continuous value spaces and possibly infinite-dimensional input and output systems.

\section{Instruments covariant with respect to a finite group}\label{sec:finInstr}

We fix Hilbert spaces $\hil$ (input system) and $\mc K$ (output system) and a finite set $\Set$ (measurement outcomes). We denote by $\mc L(\hil)$ the set of (bounded) linear operators on $\hil$ and by $\mc U(\hil)$ the group of unitary operators on $\hil$. We use the same notations for the output system Hilbert space $\mc K$ and, moreover, denote by $\mc L(\hil,\mc K)$ the set of (bounded) linear operators defined on $\hil$ and taking values in $\mc K$. As in this and a couple of the following sections we concentrate on finite dimensional systems, we can disregard the notion of boundedness for now. We assume $\Set$ to be a $G$-space for a finite group $G$, and retain the related notations fixed earlier. Let us fix an orbit $\orb\in\Orb$. We denote by $\hat{H}_{\orb}$ the representation dual of $H_{\orb}$, i.e.,\ the set of unitary equivalence classes of irreducible unitary representations of $H_{\orb}$. We pick a representative for every element of $\hat{H}_{\orb}$ and we denote these representatives typically by $\eta:H_{\orb}\to\mc U(\mc K_\eta)$ and the corresponding equivalence class we denote by $[\eta]$. This convention should cause no confusion. We denote, for any $[\eta]\in\hat{H}_{\orb}$, the dimension of $\mc K_\eta$ by $D_\eta\in\N:=\{1,2,3,\ldots\}$ and fix an orthonormal basis $\{e_{\eta,i}\}_{i=1}^{D_\eta}$ for $\mc K_\eta$. We denote, for any $[\eta]\in\hat{H}_{\orb}$,
$$
\eta_{i,j}(h):=\<e_{\eta,i}|\eta(h)e_{\eta,j}\>,\qquad i,\,j=1,\ldots,D_\eta,\quad h\in H_{\orb}.
$$
As we identify $\orb$ with $G/H_{\orb}$, we pick a section $s_{\orb}:\orb\to G$ (i.e.,\ $s_{\orb}(x)H_{\orb}$ corresponds to $x$ for any $x\in\orb$) such that $s_{\orb}(x_{\orb})=e$.\footnote{Note that we have used the notation $g_x$ for $s_\orb(x)$ for all $x\in\orb$ in Introduction, but this notation would be slightly cumbersome in the following discussion. Also recall that we have fixed a reference point $x_\orb\cong H_\orb=G_{x_\orb}$ for any orbit $\orb\cong G/H_\orb$.} Using these, we define, for all $[\eta]\in\hat{H}_{\orb}$, the cocycles $\zeta^\eta:G\times\orb\to\mc U(\mc K_\eta)$ through
$$
\zeta^\eta(g,x)=\eta\big(s_{\orb}(x)^{-1}g^{-1}s_{\orb}(gx)\big),\qquad g\in G,\quad x\in\orb,
$$
and define the cocycle $\zeta^\pi:G\times\orb\to\mc U(\hil_\pi)$ in exactly the same way whenever $\pi:H_{\orb}\to\mc U(\hil_\pi)$ is a unitary representation in some Hilbert space $\hil_\pi$. Note that the cocycle conditions
\begin{equation}\label{eq:cocycle}
\zeta^\pi(gh,x)=\zeta^\pi(h,x)\zeta^\pi(g,hx),\qquad\zeta^\pi(e,x)=\id_{\hil_\pi}
\end{equation}
hold for any $g,\,h\in G$ and $x\in\orb$. In addition, for any $h\in H_{\orb}$, $\zeta^\pi(h^{-1},x_{\orb})=\pi(h)$.
Finally, we denote by $\zeta^\eta_{i,j}:G\times\orb\to\C$ the matrix element functions of $\zeta^\eta$ in the basis $\{e_{\eta,i}\}_{i=1}^{D_\eta}$ 
for any $[\eta]\in\hat{H}_{\orb}$.

We say that a quadruple $(\mc M,\msf P,\overline{U},J)$ consisting of a Hilbert space $\mc M$, a PVM $\msf P=(\msf P_x)_{x\in\Set}$ in $\mc M$, a unitary representation $\overline{U}:G\to\mc U(\mc M)$, and an isometry $J:\hil\to\mc K\otimes\mc M$ is an {\it $(\Set,U,V)$--covariant minimal Stinespring dilation} for an $(\Set,U,V)$--covariant instrument $\mc I=(\mc I_x)_{x\in\Set}$ if
\begin{itemize}
\item[(i)] $\mc I_x^*(B)=J^*(B\otimes\msf P_x)J$ for all $x\in\Set$ and $B\in\mc L(\mc K)$, where $\mc I_x^*$ is the Heisenberg dual operation\footnote{That is, $\tr{\mc I_x^*(B)\rho}:=\tr{B\mc I_x(\rho)}$ for any $B\in\mc L(\mc K)$ and  input state $\rho$.} for $\mc I_x$,
\item[(ii)] $JU(g)=\big(V(g)\otimes\overline{U}(g)\big)J$ for all $g\in G$,
\item[(iii)] $\overline{U}(g)\msf P_x\overline{U}(g)^*=\msf P_{gx}$ for all $g\in G$ and $x\in\Set$, and
\item[(iv)] vectors $(B\otimes\msf P_x)J\fii$, $B\in\mc L(\mc K)$, $x\in\Set$, $\fii\in\hil$, span $\mc K\otimes\mc M$.
\end{itemize}
Recall that any instrument $\mc I$ has a [minimal] Stinespring dilation $(\mc M,\msf P,J)$ satisfying item (i) [and item (iv)] above. We construct the representation $\overline{U}$ satisfying items (ii) and (iii) for any covariant instrument in Appendix B for completeness. To elaborate Theorem \ref{theor:CovInstrBasic}, we present a useful definition. From now on, the paradoxical notation $m=1,\ldots,0$ means that the set of indices $m$ is empty, and sums of the form $\sum_{m=1}^0(\cdots)$ vanish.

\begin{definition}
Given, for any $\orb\in\Orb$ and $[\eta]\in\hat{H}_{\orb}$, a number $M_\eta\in\{0\}\cup\N$, we say that operators $L_{\eta,i,m}^{\orb}\in\mc L(\hil,\mc K)$ constitute a {\it minimal set of $(\Set,U,V)$--intertwiners} if, for any orbit $\orb\in\Orb$, the set
$$
\{L_{\eta,i,m}^{\orb}\,|\,m=1,\ldots,M_\eta,\ i=1,\ldots,D_\eta,\ [\eta]\in\hat{H}_{\orb}\}
$$
is linearly independent and, for all orbits $\orb$, $[\eta]\in\hat{H}_{\orb}$, $i=1,\ldots,D_\eta$, $m=1,\ldots,M_\eta$, and $h\in H_{\orb}$,
\begin{equation}\label{eq:Hinv}
L_{\eta,i,m}^{\orb}U(h)=\sum_{j=1}^{D_\eta}\eta_{i,j}(h)V(h)L_{\eta,j,m}^{\orb},
\end{equation}
and
\begin{equation}\label{eq:normitus}
\sum_{\orb\in\Orb}\sum_{g\in G}\sum_{[\eta]\in\hat{H}_{\orb}}\sum_{i=1}^{D_\eta}\sum_{m=1}^{M_\eta}\frac{1}{\# H_{\orb}}U(g)L_{\eta,i,m}^{\orb\,*}L_{\eta,i,m}^{\orb}U(g)^*=\id_\hil.
\end{equation}
If the initial linear independence condition is not satisfied, we say that the set $\{L_{\eta,i,m}^{\orb}\,|\,m=1,\ldots,M_\eta,\ i=1,\ldots,D_\eta,\ [\eta]\in\hat{H}_{\orb},\ \orb\in\Orb\}$ is a set of {\it $(\Set,U,V)$--intertwiners}.
\end{definition}

Note that, whenever $M_{\eta'}=0$ for some $[\eta']\in\hat{H}_{\orb}$, the set of intertwiners $L_{\eta,i,m}^{\orb}$ does not contain operators  where $\eta'$ appears as an index. The following theorem exhaustively determines the $(\Set,U,V)$--covariant instruments. It also gives a recipe for constructing covariant instruments and indicates that covariant instruments have the structure conjectured in \cite{Holevo98}. Later, in Theorem \ref{theor:genCovInstrStr} we see that the same structure can be found in covariant instruments even in quite general continuous cases (which is, in fact, the setting Ref.\ \cite{Holevo98} concentrates on).

\begin{theorem}\label{theor:CovInstrStructure}
For any $(\Set,U,V)$--covariant instrument $\mc I=(\mc I_x)_{x\in\Set}$, there is a minimal set
$$
\{L_{\eta,i,m}^{\orb}\,|\,m=1,\ldots,M_\eta,\ i=1,\ldots,D_\eta,\ [\eta]\in\hat{H}_{\orb},\ \orb\in\Orb\}
$$
of $(\Set,U,V)$--intertwiners, where $M_\eta\in\N\cup\{0\}$ for all $[\eta]\in\hat{H}_{\orb}$ and $\orb\in\Orb$, such that, for all $\orb\in\Orb$, $g\in G$, and input states $\rho$ on $\hil$,
\begin{equation}\label{eq:instrkaava}
\mc I_{gH_{\orb}}(\rho)=\sum_{[\eta]\in\hat{H}_{\orb}}\sum_{i=1}^{D_\eta}\sum_{m=1}^{M_\eta}V(g)L_{\eta,i,m}^{\orb}U(g)^*\rho U(g)L_{\eta,i,m}^{\orb\,*}V(g)^*.
\end{equation}
On the other hand, whenever $\{L_{\eta,i,m}^{\orb}\,|\,m=1,\ldots,M_\eta,\ i=1,\ldots,D_\eta,\ [\eta]\in\hat{H}_{\orb},\ \orb\in\Orb\}$, where $M_\eta\in\N\cup\{0\}$ for any $[\eta]\in\hat{H}_{\orb}$ and $\orb\in\Orb$, is a set of $(\Set,U,V)$--intertwiners, Equation \eqref{eq:instrkaava} determines an $(\Set,U,V)$--covariant instrument $\mc I=(\mc I_x)_{x\in\Set}$.
\end{theorem}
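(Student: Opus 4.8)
The plan is to reduce the theorem to the preliminary structure result, Theorem~\ref{theor:CovInstrBasic}, by analyzing the $H_{\orb}$--covariant completely positive maps $\Lambda_{\orb}$ appearing there. Fix an orbit $\orb$ and set $\Lambda:=\Lambda_{\orb}=\mc I_{x_{\orb}}$, which is completely positive and satisfies $\Lambda\big(U(h)\rho U(h)^*\big)=V(h)\Lambda(\rho)V(h)^*$ for $h\in H_{\orb}$. First I would take a minimal Kraus decomposition $\Lambda(\rho)=\sum_{k}A_k\rho A_k^*$ with the $A_k\in\mc L(\hil,\mc K)$ linearly independent; minimality is what makes the Kraus operators essentially unique up to a unitary. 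Feeding the $H_{\orb}$--covariance relation into this decomposition, one gets $\sum_k A_k U(h)\rho U(h)^* A_k^* = \sum_k V(h)A_k\rho A_k^* V(h)^*$ for all $\rho$ and $h\in H_{\orb}$; by the uniqueness of minimal Kraus operators, for each $h$ there is a unitary matrix $\big(w_{k\ell}(h)\big)$ with $V(h)^*A_k U(h)=\sum_{\ell}w_{k\ell}(h)A_{\ell}$. A short computation shows $h\mapsto \big(w_{k\ell}(h)\big)$ is a (finite-dimensional, hence fully reducible) unitary representation of $H_{\orb}$. Decomposing it into irreducibles and relabelling the Kraus operators accordingly by a triple $(\eta,i,m)$ — with $[\eta]$ the isotype, $i=1,\dots,D_\eta$ the index within an irreducible block, and $m=1,\dots,M_\eta$ the multiplicity — transforms the relation into exactly Equation~\eqref{eq:Hinv}. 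Linear independence of the $\{A_k\}$ is preserved under this change of basis, giving the ``minimal set'' condition on each orbit.

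Next I would globalize from the reference point $x_{\orb}$ to the whole orbit. Using $\mc I_{gH_{\orb}}(\rho)=V(g)\Lambda\big(U(g)^*\rho U(g)\big)V(g)^*$ from Theorem~\ref{theor:CovInstrBasic} and substituting the Kraus form of $\Lambda$ gives Equation~\eqref{eq:instrkaava} immediately, with $L^{\orb}_{\eta,i,m}$ the Kraus operators just constructed. One has to check this is well defined, i.e.\ independent of the choice of coset representative $g$ (equivalently of the section $s_{\orb}$): replacing $g$ by $gh$ with $h\in H_{\orb}$, the extra factors $U(h)$ and $V(h)$ are absorbed precisely by the intertwining relation~\eqref{eq:Hinv} together with unitarity of $\big(\eta_{i,j}(h)\big)$ — the sum over $i$ and the orthogonality of the $\eta$-matrix do the bookkeeping. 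Finally the normalization $\sum_{x\in\Set}\tr{\mc I_x(\rho)}\equiv 1$, which by Theorem~\ref{theor:CovInstrBasic} reads $\sum_{\orb}(\#H_{\orb})^{-1}\sum_{g\in G}\tr{\Lambda_{\orb}\big(U(g)^*\rho U(g)\big)}\equiv 1$, turns into Equation~\eqref{eq:normitus} upon writing $\tr{\Lambda_{\orb}(\sigma)}=\sum_{\eta,i,m}\tr{L^{\orb}_{\eta,i,m}\sigma L^{\orb\,*}_{\eta,i,m}}$ and using cyclicity of the trace.

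For the converse, given any set of $(\Set,U,V)$--intertwiners I would simply \emph{define} $\mc I_{gH_{\orb}}$ by~\eqref{eq:instrkaava}. Well-definedness (independence of $g$ within its coset) is the same computation as above using~\eqref{eq:Hinv}; complete positivity of each $\mc I_x$ is automatic since it is manifestly of Kraus form; the $(\Set,U,V)$--covariance condition~\eqref{eq:InstrKovEhto} follows by a direct substitution, and normalization is~\eqref{eq:normitus} read backwards.

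The main obstacle is the first paragraph: passing cleanly from the abstract $H_{\orb}$--equivariance of $\Lambda$ to the concrete relation~\eqref{eq:Hinv} with the irreducible matrix coefficients $\eta_{i,j}$. This hinges on (a) the uniqueness statement for minimal Kraus decompositions — specifically, that two minimal Kraus families for the same CP map differ by a unitary, so that $H_{\orb}$--equivariance produces a genuine unitary representation $h\mapsto\big(w_{k\ell}(h)\big)$ rather than just a projective one — and (b) choosing the Kraus basis adapted to the isotypic decomposition of that representation so the multiplicity structure $(\eta,i,m)$ emerges; checking the cocycle/associativity bookkeeping that guarantees $w$ is a bona fide representation (using minimality again to kill ambiguities) is the delicate point. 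Alternatively one can route this through the covariant minimal Stinespring dilation $(\mc M,\msf P,\overline U,J)$ defined just before the theorem, reading off the $L^{\orb}_{\eta,i,m}$ from a decomposition of $J$ along the isotypic components of $\overline U|_{H_{\orb}}$ restricted to the support of $\msf P_{x_{\orb}}$; either way the representation-theoretic decomposition of $H_{\orb}$ on the dilation (or Kraus) space is the crux.
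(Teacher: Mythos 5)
Your argument is correct, and it is a genuinely different route from the one the paper takes. The paper proves the existence direction by first equipping $\mc I$ with a minimal $(\Set,U,V)$--covariant Stinespring dilation $(\mc M,\msf P,\overline U,J)$, then decomposing the imprimitivity system $(\overline U,\msf P)$ into canonical transitive pieces via Peter--Weyl, and reading off the intertwiners $L^\orb_{\eta,i,m}$ as contractions of $J$ against the resulting orthonormal basis of the dilation space; linear independence follows from minimality of the dilation. You instead localize at once to a single orbit via Theorem~\ref{theor:CovInstrBasic}, take a minimal (linearly independent) Kraus family $\{A_k\}$ for $\Lambda_\orb=\mc I_{x_\orb}$, and exploit the uniqueness statement for minimal Kraus decompositions: since $\{V(h)^*A_kU(h)\}$ is another minimal Kraus family for the same CP map, there is a unique unitary matrix $w(h)$ with $V(h)^*A_kU(h)=\sum_\ell w_{k\ell}(h)A_\ell$, and uniqueness of the expansion coefficients forces $w(h_1h_2)=w(h_1)w(h_2)$, so $h\mapsto w(h)$ is an honest unitary representation of $H_\orb$ (no projective ambiguity, which is the delicate point you correctly flag). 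Complete reducibility of $w$ and a unitary change of Kraus basis then deliver exactly Eq.~\eqref{eq:Hinv}, with linear independence inherited because the basis change is invertible; globalizing over the orbit and the normalization Eq.~\eqref{eq:normitus} are bookkeeping as you describe. What the student's route buys is a more elementary, purely pointwise argument that avoids the dilation machinery and makes the role of Kraus-uniqueness transparent. What the paper's route buys is that the minimal covariant Stinespring dilation is produced \emph{en route}, and this dilation is precisely what is needed for Lemma~\ref{lemma:minlemma} and the extremality characterization of Theorem~\ref{theor:ExtInstr}, and it also generalizes directly to the continuous/infinite-dimensional setting of Theorem~\ref{theor:genCovInstrStr}; your Kraus-uniqueness argument would still work there but would require the compact-group version of complete reducibility and a bit more care with infinite Kraus families. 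The converse direction you give (well-definedness on cosets via the unitarity of the $\eta$-matrix, complete positivity by Kraus form, covariance by substitution, normalization by Eq.~\eqref{eq:normitus}) matches the paper's appeal to Theorem~\ref{theor:CovInstrBasic}.
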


Note that, for the instrument $\mc I$ of Equation \eqref{eq:instrkaava}, and for any orbit $\orb\in\Orb$, the map $\Lambda_\orb$ of Theorem \ref{theor:CovInstrBasic} is given by $\Lambda_\orb(\rho)=\sum_{[\eta]\in\hat{H}_\orb}\sum_{i=1}^{D_\eta}\sum_{m=1}^{M_\eta}L^\orb_{\eta,i,m}\rho L^{\orb\,*}_{\eta,i,m}$ for any input state $\rho$.

\begin{proof}
Let us first fix an $(\Set,U,V)$--covariant instrument $\mc I=(\mc I_x)_{x\in\Set}$ and equip it with a minimal $(\Set,U,V)$--covariant Stinespring's dilation $(\mc M,\msf P,\overline{U},J)$ so that $(\overline{U},\msf P)$ is a system of imprimitivity. As in Appendix B, we represent this system of imprimitivity as a direct sum of the canonical systems $(\overline{U}^{\orb},\msf P^{\orb})$ of imprimitivity defined in Equations \eqref{eq:transitiveU} and \eqref{eq:transitiveP}.

Let us fix an orbit $\orb\in\Orb$. According to the Peter-Weyl theorem, for each $[\eta]\in\hat{H}_{\orb}$, there is a Hilbert space $\mc M_\eta$ such that $\mc M^{\orb}=\bigoplus_{[\eta]\in\hat{H}_{\orb}}\mc K_\eta\otimes\mc M_\eta$ and $\pi^{\orb}(g)=\bigoplus_{[\eta]\in\hat{H}_{\orb}} \eta(g)\otimes\id_{\mc M_\eta}$ for all $g\in G$. Denote the dimension of $\mc M_\eta$ by $M_\eta$ and pick an orthonormal basis $\{f_{\eta,m}\}_{m=1}^{M_\eta}\subset\mc M_\eta$. Let $\{\delta_x\}_{x\in\orb}$ be the natural basis of $\C^{\# \orb}$. Thus, $\{\delta_x\otimes e_{\eta,i}\otimes f_{\eta,m}\,|\,x\in\orb,\ [\eta]\in\hat{H}_{\orb},\ i=1,\ldots,D_\eta,\ m=1,\ldots,M_\eta\}$ is an orthonormal basis of $\mc M^{\orb}$ and the union of these bases over $\orb$ is an orthonormal basis for $\mc M$. Define, for $x\in\orb$, $[\eta]\in\hat{H}_{\orb}$, $i=1,\ldots,D_\eta$, and $m=1,\ldots,M_\eta$, the isometry $V_{x,\eta,i,m}:\mc K\to\mc K\otimes\mc M^{\orb}\subseteq\mc K\otimes\mc M$ through $V_{x,\eta,i,m}\psi=\psi\otimes\delta_x\otimes e_{\eta,i}\otimes f_{\eta,m}$ for all $\psi\in\mc K$. Clearly, $V_{x,\eta,i,m}BV_{x,\eta,i,m}^*=B\otimes|\delta_x\otimes e_{\eta,i}\otimes f_{\eta,m}\>\<\delta_x\otimes e_{\eta,i}\otimes f_{\eta,m}|$ for all $B\in\mc L(\mc K)$. Denoting $K_{x,\eta,i,m}:=V_{x,\eta,i,m}^*J$, we find, for all $x\in\orb$ and $B\in\mc L(\mc K)$,
\begin{align}
\mc I_x^*(B)&=J^*(B\otimes\msf P_x)J=\sum_{[\eta]\in\hat{H}_{\orb}}\sum_{i=1}^{D_\eta}\sum_{m=1}^{M_\eta}J^*(B\otimes |\delta_x\otimes e_{\eta,i}\otimes f_{\eta,m}\>\<\delta_x\otimes e_{\eta,i}\otimes f_{\eta,m}|)J\nonumber\\
&=\sum_{[\eta]\in\hat{H}_{\orb}}\sum_{i=1}^{D_\eta}\sum_{m=1}^{M_\eta}J^*V_{x,\eta,i,m}BV_{x,\eta,i,m}^*J=\sum_{[\eta]\in\hat{H}_{\orb}}\sum_{i=1}^{D_\eta}\sum_{m=1}^{M_\eta}K_{x,\eta,i,m}^*BK_{x,\eta,i,m}.\label{eq:KKraus}
\end{align}

Clearly, $\overline{U}^{\orb}(g)(\delta_x\otimes e_{\eta,i}\otimes f_{\eta,m})=\delta_{gx}\otimes\zeta^\eta(g^{-1},gx)e_{\eta_i}\otimes f_{\eta,m}$ for all $g\in G$, $x\in\orb$, $[\eta]\in\hat{H}_{\orb}$, $i=1,\ldots,D_\eta$, and $m=1,\ldots,M_\eta$. Using this and the intertwining properties of $J$, we find that, for all $\fii\in\hil$, $\psi\in\mc K$, $g\in G$, $x\in\orb$, $[\eta]\in\hat{H}_{\orb}$, $i=1,\ldots,D_\eta$, and $m=1,\ldots,M_\eta$,
\begin{align*}
&\<\psi|K_{x,\eta,i,m}U(g)\fii\>=\<V_{x,\eta,i,m}\psi|JU(g)\fii\>=\<V_{x,\eta,i,m}\psi|\big(V(g)\otimes\overline{U}(g)\big)J\fii\>\\
&=\<V(g)^*\psi\otimes\overline{U}(g)^*(\delta_x\otimes e_{\eta,i}\otimes f_{\eta,m})|J\fii\>=\<V(g)^*\psi\otimes\delta_{g^{-1}x}\otimes\zeta^\eta(g,g^{-1}x)e_{\eta,i}\otimes f_{\eta,m}|J\fii\>\\
&=\sum_{j=1}^{D_\eta}\<V(g)^*\psi\otimes\delta_{g^{-1}x}\otimes\zeta^\eta(g,g^{-1}x)e_{\eta,i}\otimes f_{\eta,m}|(\id_{\mc K}\otimes\id_{\C^{\# \orb}}\otimes|e_{\eta,j}\>\<e_{\eta,j}|\otimes\id_{\mc M_\eta})J\fii\>\\
&=\sum_{j=1}^{D_\eta}\overline{\zeta^\eta_{j,i}(g,g^{-1}x)}\<V(g)^*\psi\otimes\delta_{g^{-1}x}\otimes e_{\eta,i}\otimes f_{\eta,m}|J\fii\>=\sum_{j=1}^{D_\eta}\zeta^\eta_{i,j}(g^{-1},x)\<\psi|V(g)K_{g^{-1}x,\eta,j,m}\fii\>,
\end{align*}
where we have used the fact that $\zeta^\eta(g,g^{-1}x)^*=\zeta(g^{-1},x)$ which follows from the cocycle conditions. This means that
\begin{equation}\label{eq:Kinv}
K_{x,\eta,i,m}U(g)=\sum_{j=1}^{D_\eta}\zeta^\eta_{i,j}(g^{-1},x)V(g)K_{g^{-1}x,\eta,j,m}.
\end{equation}

As earlier, let $x_{\orb}$ be a representative for $\orb$ such that $H_{\orb}=G_{x_{\orb}}$, i.e.,\ $x_{\orb}=H_{\orb}$ in the identification $\orb=G/H_{\orb}$. For all $[\eta]\in\hat{H}_{\orb}$, $i=1,\ldots,D_\eta$, and $m=1,\ldots,M_\eta$, define $L^{\orb}_{\eta,i,m}:=K_{x_{\orb},\eta,i,m}$. Recall that, for all $h\in H_{\orb}$ and $[\eta]\in\hat{H}_{\orb}$, $\zeta^\eta(h^{-1},x_{\orb})=\eta(h)$. Using Equation \eqref{eq:Kinv}, we now have for all $[\eta]\in\hat{H}_{\orb}$, $i=1,\ldots,D_\eta$, $m=1,\ldots,M_\eta$, and $h\in H_{\orb}$,
$$
L_{\eta,i,m}^{\orb}U(h)=\sum_{j=1}^{D_\eta}\zeta^\eta_{i,j}(h^{-1},x_{\orb})V(h)K_{h^{-1}x_{\orb},\eta,j,m}=\sum_{j=1}^{D_\eta}\eta_{i,j}(h)V(h)L_{\eta,j,m}^{\orb}.
$$
Thus, we obtain Equation \eqref{eq:Hinv}.

Let us check that the operators $L_{\eta,i,m}^{\orb}$ are linearly independent. To show this, let us first note that vectors $(B\otimes\msf P_{x_{\orb}})J\fii$, $B\in\mc L(\mc K)$, $\fii\in\hil$, span $\mc K\otimes\msf P_{x_{\orb}}\mc M=\mc K\otimes\Big(\bigoplus_{[\eta]\in\hat{H}_{\orb}}\mc K_\eta\otimes\mc M_\eta\Big)$; this follows immediately from the minimality of $(\mc M,\msf P,J)$. Let $\beta_{\eta,i,m}\in\C$, $[\eta]\in\hat{H}_{\orb}$, $i=1,\ldots,D_\eta$, $m=1,\ldots,M_\eta$, and define $v:=\sum_{[\eta]\in\hat{H}_{\orb}}\sum_{i=1}^{D_\eta}\sum_{m=1}^{M_\eta}\beta_{\eta,i,m}e_{\eta,i}\otimes f_{\eta,m}\in\bigoplus_{[\eta]\in\hat{H}_{\orb}}\mc K_\eta\otimes\mc M_\eta$. Let us assume that $\sum_{[\eta]\in\hat{H}_{\orb}}\sum_{i=1}^{D_\eta}\sum_{m=1}^{M_\eta}\beta_{\eta,i,m}L_{\eta,i,m}^{\orb}=0$. Fix a non-zero $\psi_0\in\mc K$ so that, for all $\fii\in\hil$ and $B\in\mc L(\mc K)$,
\begin{align*}
0&=\sum_{[\eta]\in\hat{H}_{\orb}}\sum_{i=1}^{D_\eta}\sum_{m=1}^{M_\eta}\beta_{\eta,i,m}\<B^*\psi_0|L_{\eta,i,m}^{\orb}\fii\>=\sum_{[\eta]\in\hat{H}_{\orb}}\sum_{i=1}^{D_\eta}\sum_{m=1}^{M_\eta}\beta_{\eta,i,m}\<B^*\psi_0\otimes\delta_{x_{\orb}}\otimes e_{\eta,i}\otimes f_{\eta,m}|J\fii\>\\
&=\<B^*\psi_0\otimes\delta_{x_{\orb}}\otimes v|J\fii\>=\<\psi_0\otimes v|(B\otimes\msf P_{x_{\orb}})J\fii\>.
\end{align*}
According to the observation we made before picking the coefficients $\beta_{\eta,i,m}$, this means that $\psi_0\otimes v=0$ and, since $\psi_0\neq0$, we have $v=0$. This is equivalent with the vanishing of the coefficients $\beta_{\eta,i,m}$, proving the linear independence of $\{L_{\eta,i,m}^{\orb}\,|\,[\eta]\in\hat{H}_{\orb},\ i=1,\ldots,D_\eta,\ m=1,\ldots,M_\eta\}$.

Again identifying $\orb=G/H_{\orb}$ and $x_{\orb}=H_{\orb}$, from \eqref{eq:Kinv} we obtain
\begin{align}
K_{gH_{\orb},\eta,i,m}&=\sum_{j=1}^{D_\eta}\zeta^\eta_{i,j}(g^{-1},gH_{\orb})V(g)K_{H_{\orb},\eta,j,m}U(g)^*=\sum_{j=1}^{D_\eta}\zeta^\eta_{i,j}(g^{-1},gH_{\orb})V(g)L_{\eta,j,m}^{\orb}U(g)^*.\label{eq:apu1}
\end{align}
Indeed, it is easy to see directly that the RHS of Equation \eqref{eq:apu1} is invariant in substitutions $g\mapsto gh$ where $h\in H_{\orb}$. Using the Schr\"odinger version of Equation \eqref{eq:KKraus}, Equation \eqref{eq:apu1}, and the easily proven fact that, for any $[\eta]\in\hat{H}_{\orb}$, $g\in G$, and $j,\,k=1,\ldots,D_\eta$, $\sum_{i=1}^{D_\eta}\zeta^\eta_{i,j}(g^{-1},gH_{\orb})\overline{\zeta^\eta_{i,k}(g^{-1},gH_{\orb})}=\delta_{j,k}$, where $\delta_{j,k}$ is the Kronecker symbol (i.e.,\ $\delta_{j,k}=1$ if $j=k$ and, otherwise, $\delta_{j,k}=0$), we find, for all input states $\rho$ and $g\in G$,
\begin{align*}
\mc I_{gH_{\orb}}(\rho)&=\sum_{[\eta]\in\hat{H}_{\orb}}\sum_{i=1}^{D_\eta}\sum_{m=1}^{M_\eta}K_{gH_{\orb},\eta,i,m}\rho K_{gH_{\orb},\eta,i,m}^*\\
&=\sum_{[\eta]\in\hat{H}_{\orb}}\sum_{i,j,k=1}^{D_\eta}\sum_{m=1}^{M_\eta}\zeta^\eta_{i,j}(g^{-1},gH_{\orb})\overline{\zeta^\eta_{i,k}(g^{-1},gH_{\orb})}V(g)L_{\eta,j,m}^{\orb}U(g)^*\rho U(g)L_{\eta,k,m}^{\orb\,*}V(g)\\
&=\sum_{[\eta]\in\hat{H}_{\orb}}\sum_{i=1}^{D_\eta}\sum_{m=1}^{M_\eta}V(g)L_{\eta,i,m}^{\orb}U(g)^*\rho U(g)L_{\eta,i,m}^{\orb\,*}V(g)^*,
\end{align*}
implying Equation \eqref{eq:instrkaava}.

Let us move on to proving Equation \eqref{eq:normitus}. Let us first note that, for any orbit $\orb$, $[\eta]\in\hat{H}_{\orb}$, $m=1,\ldots,M_\eta$, and $h\in H_{\orb}$, we find, using the already established Equation \eqref{eq:Hinv},
\begin{align*}
\sum_{i=1}^{D_\eta}U(h)L_{\eta,i,m}^{\orb\,*}L_{\eta,i,m}^{\orb}U(h)^*&=\sum_{i,j,k=1}^{D_\eta}\overline{\eta_{i,j}(h^{-1})}\eta_{i,k}(h^{-1})L_{\eta,j,m}^{\orb\,*}V(h)V(h)^*L_{\eta,k,m}^{\orb}\\
&=\sum_{j,k=1}^{D_\eta}\<\eta(h)^*e_{\eta,j}| \eta(h)^*e_{\eta,k}\>L_{\eta,j,m}^{\orb\,*}L_{\eta,k,m}^{\orb}=\sum_{i=1}^{D_\eta}L_{\eta,i,m}^{\orb\,*}L_{\eta,i,m}^{\orb}.
\end{align*}
Using the above observation and the dual (Heisenberg) version of the already established Equation \eqref{eq:instrkaava}, we find
\begin{align*}
\id_\hil&=\sum_{x\in\Set}\mc I_x^*(\id_{\mc K})=\sum_{\orb\in\Orb}\sum_{x\in\orb}\mc I_{s^{\orb}(x)H_{\orb}}^*(\id_{\mc K})\\
&=\sum_{\orb\in\Orb}\sum_{x\in\orb}\sum_{[\eta]\in\hat{H}_{\orb}}\sum_{i=1}^{D_\eta}\sum_{m=1}^{M_\eta}U\big(s^{\orb}(x)\big)L_{\eta,i,m}^{\orb\,*}L_{\eta,i,m}^{\orb}U\big(s^{\orb}(x)\big)^*\\
&=\sum_{\orb\in\Orb}\sum_{x\in\orb}\sum_{h\in H_{\orb}}\sum_{[\eta]\in\hat{H}_{\orb}}\sum_{i=1}^{D_\eta}\sum_{m=1}^{M_\eta}\frac{1}{\# H_{\orb}}U\big(s^{\orb}(x)h\big)L_{\eta,i,m}^{\orb\,*}L_{\eta,i,m}^{\orb}U\big(s^{\orb}(x)h\big)^*\\
&=\sum_{\orb\in\Orb}\sum_{g\in G}\sum_{[\eta]\in\hat{H}_{\orb}}\sum_{i=1}^{D_\eta}\sum_{m=1}^{M_\eta}\frac{1}{\# H_{\orb}}U(g)L_{\eta,i,m}^{\orb\,*}L_{\eta,i,m}^{\orb}U(g)^*,
\end{align*}
implying Equation \eqref{eq:normitus}. The final converse claim follows from Theorem \ref{theor:CovInstrBasic} upon noting that the operation $\Lambda_\orb$ defined just after the statement of this theorem with a (minimal) set of $(\Set,U,V)$--intertwiners $L_{\eta,i,m}$ satisfies the conditions of Theorem \ref{theor:CovInstrBasic} by using Equations \eqref{eq:Hinv} and \eqref{eq:normitus}.
\end{proof}

\begin{remark}\label{normalization}
Suppose that, for any orbit $\orb\in\Orb$ and $[\eta]\in\hat{H}_{\orb}$, $M_\eta\in\{0\}\cup\N$ and $L_{\eta,i,m}^{\orb}\in\mc L(\hil,\mc K)$, $i=1,\ldots,D_\eta$, $m=1,\ldots,M_\eta$, are such that Equation \eqref{eq:Hinv} holds but
$$
K:=\sum_{\orb\in\Orb}\sum_{g\in G}\sum_{[\eta]\in\hat{H}_{\orb}}\sum_{i=1}^{D_\eta}\sum_{m=1}^{M_\eta}\frac{1}{\# H_{\orb}}U(g)L_{\eta,i,m}^{\orb\,*}L_{\eta,i,m}^{\orb}U(g)^*
$$
does not necessarily coincide with $\id_\hil$. Since, due to its definition, $K$ commutes with $U$, i.e.,\ $U(g)K=KU(g)$ for all $g\in G$, we may define, for any orbit $\orb$, $[\eta]\in\hat{H}_{\orb}$, $i=1,\ldots,D_\eta$, and $m=1,\ldots,M_\eta$, the new operator $\tilde{L}_{\eta,i,m}^{\orb}:=L_{\eta,i,m}^{\orb}K^{-1/2}$ (where $K^{-1/2}$ is the square root of the generalized inverse of $K$) which still satisfy Equation \eqref{eq:Hinv} (with $L_{\eta,i,m}^{\orb}$ replaced with $\tilde{L}_{\eta,i,m}^{\orb}$) and which now, additionally, satisfy
$$
\sum_{\orb\in\Orb}\sum_{g\in G}\sum_{[\eta]\in\hat{H}_{\mc O}}\sum_{i=1}^{D_\eta}\sum_{m=1}^{M_\eta}\frac{1}{\# H_{\orb}}U(g)\tilde{L}_{\eta,i,m}^{\orb\,*}\tilde{L}_{\eta,i,m}^{\orb}U(g)^*={\rm supp}\,K
$$
where ${\rm supp}\,K$ is the support projection of $K$. Thus we obtain an $(\Set,\tilde{U},V)$--covariant instrument through Equation \eqref{eq:instrkaava} (with $L_{\eta,i,m}^{\orb}$ replaced with $\tilde{L}_{\eta,i,m}^{\orb}$) for a possibly smaller input Hilbert space $({\rm supp}\,K)(\hil)=:\tilde{\hil}$ which is an invariant subspace for $U$ where the restriction of $U$ we denote by $\tilde{U}$. Naturally, if $U$ is irreducible, we have $K\in\C\id_\hil$ so that $\tilde{\hil}=\hil$ or $\tilde{\hil}=\{0\}$; the latter case is possible only in the highly reduced case where $L_{\eta,i,m}^{\orb}$ all vanish (which is hardly interesting). \hfill $\triangle$
\end{remark}

In the proof of Theorem \ref{theor:CovInstrStructure}, we saw that, from a minimal covariant Stinespring dilation of a covariant instrument $\mc I$, we obtain a minimal set of $(\Set,U,V)$--intertwiners defining $\mc I$ through Equation \eqref{eq:instrkaava}. The following lemma gives the converse result: a {\it minimal} set of intertwiners can be used to define a minimal covariant Stinespring dilation for a covariant instrument. This result will be very useful when giving extremality conditions for covariant instruments.

\begin{lemma}\label{lemma:minlemma}
Let $\mc I$ be an $(\Set,U,V)$--covariant instrument defined through Equation \eqref{eq:instrkaava} by a minimal set of $(\Set,U,V)$--intertwiners consisting of $L_{\eta,i,m}^{\orb}\in\mc L(\hil,\mc K)$ for all $\orb\in\Orb$, $[\eta]\in\hat{H}_{\orb}$, $i=1,\ldots,D_\eta$, and $m=1,\ldots,M_\eta$ where $M_\eta\in\{0\}\cup\N$. Defining
\begin{equation}\label{eq:apuapu1}
K_{gH_{\orb},\eta,i,m}:=\sum_{j=1}^{D_\eta}\zeta^\eta_{i,j}(g^{-1},gH_{\orb})V(g)L_{\eta,j,m}^{\orb}U(g)^*
\end{equation}
for all $\orb\in\Orb$, $g\in G$, $[\eta]\in\hat{H}_{\orb}$, $i=1,\ldots,D_\eta$, and $m=1,\ldots,M_\eta$ and setting
$$
\mc M:=\bigoplus_{\orb\in\Orb}\C^{\# \orb}\otimes\Big(\bigoplus_{[\eta]\in\hat{H}_{\orb}}\mc K_\eta\otimes\C^{M_\eta}\Big),
$$
the linear map $J:\hil\to\mc K\otimes\mc M$
$$
J\fii=\sum_{\orb\in\Orb}\sum_{x\in\orb}\sum_{[\eta]\in\hat{H}_{\orb}}\sum_{i=1}^{D_\eta}\sum_{m=1}^{M_\eta}K_{x,\eta,i,m}\fii\otimes\delta_x\otimes e_{\eta,i}\otimes f_{\eta,m},\qquad \fii\in\hil,
$$
where $\{\delta_x\}_{x\in\Set}$ is the natural basis for $\C^{\# \Set}\supseteq\C^{\# \orb}$ and $\{f_{\eta,m}\}_{m=1}^{M_\eta}$ is some orthonormal basis of $\C^{M_\eta}$, the PVM $\msf P=(\msf P_x)_{x\in\Set}$,
$$
\msf P_x=|\delta_x\>\<\delta_x|\otimes\bigg(\bigoplus_{[\eta]\in\hat{H}_{\orb}}\id_{\mc K_\eta}\otimes\id_{\C^{M_\eta}}\bigg),\qquad x\in\orb\in\Orb,
$$
and the unitary representation $\overline{U}:G\to\mc U(\mc M)$ through
$$
\overline{U}(g)(\delta_x\otimes e_{\eta,i}\otimes f_{\eta,m})=\delta_{gx}\otimes\zeta^\eta(g^{-1},gx)e_{\eta,i}\otimes f_{\eta,m}
$$
for all $g\in G$, $x\in\orb\in\Orb$, $[\eta]\in\hat{H}_{\orb}$, $i=1,\ldots,D_\eta$, and $m=1,\ldots,M_\eta$, the quadruple $(\mc M,\msf P,\overline{U},J)$ is a minimal $(\Set,U,V)$--covariant Stinespring dilation for $\mc I$.
\end{lemma}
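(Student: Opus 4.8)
The plan is to verify, one by one, the four defining conditions (i)--(iv) of an $(\Set,U,V)$--covariant minimal Stinespring dilation for the quadruple $(\mc M,\msf P,\overline{U},J)$ given in the statement, together with the auxiliary facts that $J$ is an isometry and that $\overline{U}$ is a well-defined unitary representation. The backbone of the argument is that the operators $K_{x,\eta,i,m}$ defined in \eqref{eq:apuapu1} are exactly the ``pointwise Kraus operators'' appearing implicitly in the proof of Theorem \ref{theor:CovInstrStructure}: indeed, Equation \eqref{eq:apu1} of that proof is literally \eqref{eq:apuapu1}, and there it was checked that the right-hand side is invariant under $g\mapsto gh$ for $h\in H_{\orb}$, so $K_{x,\eta,i,m}$ is well-defined as a function of $x\in\orb=G/H_{\orb}$. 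Likewise one checks $\overline{U}$ is well defined and unitary: the cocycle $\zeta^\eta(g^{-1},gx)$ takes unitary values, the formula permutes the orthonormal basis vectors up to these unitary ``twists,'' and the cocycle identity \eqref{eq:cocycle} gives $\overline{U}(g)\overline{U}(g')=\overline{U}(gg')$ directly (this is the standard induced/imprimitivity representation on $\C^{\#\orb}\otimes\mc K_\eta$, cf.\ Appendix B and Equations \eqref{eq:transitiveU}--\eqref{eq:transitiveP}).

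First I would show condition (iii), $\overline{U}(g)\msf P_x\overline{U}(g)^*=\msf P_{gx}$: this is immediate from the defining formula for $\overline{U}$, since $\overline{U}(g)$ maps the summand indexed by $x$ onto the summand indexed by $gx$. Next I would establish the key intertwining relation \eqref{eq:Kinv} for the $K_{x,\eta,i,m}$ of \eqref{eq:apuapu1}: starting from \eqref{eq:apuapu1} applied at $g^{-1}x$ with group element $hg$ (or by a direct cocycle manipulation using \eqref{eq:cocycle}, \eqref{eq:Hinv}, and the orthogonality relation $\sum_i\zeta^\eta_{i,j}\overline{\zeta^\eta_{i,k}}=\delta_{jk}$ already used in the proof of Theorem \ref{theor:CovInstrStructure}), one recovers $K_{x,\eta,i,m}U(g)=\sum_j\zeta^\eta_{i,j}(g^{-1},x)V(g)K_{g^{-1}x,\eta,j,m}$. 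From \eqref{eq:Kinv} the intertwining property (ii), $JU(g)=(V(g)\otimes\overline{U}(g))J$, follows by applying both sides to $\fii\in\hil$, expanding $J$ in the basis $\{\delta_x\otimes e_{\eta,i}\otimes f_{\eta,m}\}$, and matching coefficients; the $\zeta^\eta$-twists produced by $\overline{U}(g)$ on the basis vectors cancel exactly against those in \eqref{eq:Kinv}.

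For condition (i), I would compute $J^*(B\otimes\msf P_x)J$: because $\msf P_x$ is the projection onto the $x$-summand, only the terms with outcome $x$ survive, and the mutual orthogonality of the basis vectors $\delta_x\otimes e_{\eta,i}\otimes f_{\eta,m}$ collapses the double sum to $\sum_{[\eta],i,m}K_{x,\eta,i,m}^*BK_{x,\eta,i,m}$, which by the Heisenberg version of \eqref{eq:instrkaava} (rewritten via \eqref{eq:apu1}=\eqref{eq:apuapu1}) equals $\mc I_x^*(B)$. Taking $B=\id_{\mc K}$ and summing over $x\in\Set$ gives $J^*J=\sum_x\mc I_x^*(\id_{\mc K})=\id_\hil$ by \eqref{eq:normitus}, so $J$ is an isometry. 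Finally, for minimality (iv), I would argue that the closed span of the vectors $(B\otimes\msf P_x)J\fii$ is all of $\mc K\otimes\mc M$: fixing an orbit $\orb$ and a point $x\in\orb$, the relation \eqref{eq:Kinv} lets one pass between different points of the orbit, so it suffices to treat $x=x_{\orb}$; there $(B\otimes\msf P_{x_{\orb}})J\fii=\sum_{[\eta],i,m}(BK_{x_{\orb},\eta,i,m}\fii)\otimes\delta_{x_{\orb}}\otimes e_{\eta,i}\otimes f_{\eta,m}=\sum_{[\eta],i,m}(BL_{\eta,i,m}^{\orb}\fii)\otimes\delta_{x_{\orb}}\otimes e_{\eta,i}\otimes f_{\eta,m}$, and one must show these span $\mc K\otimes\delta_{x_{\orb}}\otimes\big(\bigoplus_{[\eta]}\mc K_\eta\otimes\C^{M_\eta}\big)$. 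Letting $B$ range over all of $\mc L(\mc K)$ handles the $\mc K$ factor, so what remains is to see that the vectors $\sum_{[\eta],i,m}\langle\psi|L_{\eta,i,m}^{\orb}\fii\rangle\,e_{\eta,i}\otimes f_{\eta,m}$, as $\fii\in\hil$ and $\psi\in\mc K$ vary, exhaust $\bigoplus_{[\eta]}\mc K_\eta\otimes\C^{M_\eta}$ — equivalently, that no nonzero $v=\sum\beta_{\eta,i,m}e_{\eta,i}\otimes f_{\eta,m}$ is orthogonal to all of them, i.e.\ that $\sum\overline{\beta_{\eta,i,m}}L_{\eta,i,m}^{\orb}=0$ forces $v=0$. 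This is precisely the linear independence built into the definition of a minimal set of intertwiners. I expect this last point — extracting condition (iv) from the linear-independence hypothesis — to be the only step requiring care; once the basis-expansion bookkeeping in (ii) is done, everything else is routine orthogonality of basis vectors and direct substitution.
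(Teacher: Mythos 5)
Your proposal is correct and follows essentially the same route as the paper's proof: verify (i) by direct orthogonality bookkeeping, derive \eqref{eq:Kinv} from \eqref{eq:apuapu1} via the cocycle identities and use it for (ii), note (iii) is immediate from the defining formula for $\overline{U}$, and extract (iv) from the linear independence of the $L_{\eta,i,m}^{\orb}$. The only cosmetic difference is in the minimality step: the paper first shows that $\{K_{x,\eta,i,m}\}$ is linearly independent at \emph{every} $x\in\orb$ (by applying the unitarity of $\zeta^\eta$ directly to \eqref{eq:apuapu1}) and then runs the $\Psi$-orthogonality argument pointwise, whereas you reduce to the reference point $x_\orb$ where $K_{x_\orb,\eta,i,m}=L_{\eta,i,m}^\orb$ and let the cocycle handle the other points; these are the same computation organized in opposite orders.
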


\begin{proof}
Let us start by proving that $(\mc M,\msf P,J)$ is a minimal Stinespring dilation for $\mc I$. The fact that $\mc I_x^*(B)=J^*(B\otimes\msf P_x)J$ for all $x\in\Set$ and $B\in\mc L(\mc K)$ is proven through a simple direct calculation which we leave for the reader to check. Let us concentrate on the minimality claim. Let us first show that, for any $x\in\orb\in\Orb$, the set $\{K_{x,\eta,i,m}\,|\,[\eta]\in\hat{H}_{\orb},\ i=1,\ldots,D_\eta, m=1,\ldots,M_\eta\}$ is linearly independent. Let us fix an orbit $\orb\in\Orb$, and $g\in G$ and let $\beta_{\eta,i,m}\in\C$, $[\eta]\in\hat{H}_{\orb}$, $i=1,\ldots,D_\eta$, $m=1,\ldots,M_\eta$, be such that $\sum_{[\eta]\in\hat{H}_{\orb}}\sum_{i=1}^{D_\eta}\sum_{m=1}^{M_\eta}\beta_{\eta,i,m}K_{gH_{\orb},\eta,i,m}=0$. Using Equation \eqref{eq:apuapu1}, we obtain
\begin{align*}
0&=\sum_{[\eta]\in\hat{H}_{\orb}}\sum_{i=1}^{D_\eta}\sum_{m=1}^{M_\eta}\beta_{\eta,i,m}K_{gH_{\orb},\eta,i,m}=\sum_{[\eta]\in\hat{H}_{\orb}}\sum_{i,j=1}^{D_\eta}\sum_{m=1}^{M_\eta}\zeta^\eta_{i,j}(g^{-1},gH_{\orb})\beta_{\eta,i,m}V(g)L_{\eta,j,m}^{\orb}U(g)^*\\
&=V(g)\left[\sum_{[\eta]\in\hat{H}_{\orb}}\sum_{j=1}^{D_\eta}\sum_{m=1}^{M_\eta}\left(\sum_{i=1}^{D_\eta}\zeta^\eta_{i,j}(g^{-1},gH_{\orb})\beta_{\eta,i,m}\right)L_{\eta,j,m}^{\orb}\right]U(g)^*=0.
\end{align*}
Since $\{L_{\eta,i,m}^{\orb}\,|\,[\eta]\in\hat{H}_{\orb},\ i=1,\ldots,D_\eta,\ m=1,\ldots,M_\eta\}$ is linearly independent, it immediately follows that, for all $[\eta]\in\hat{H}_{\orb}$, $j=1,\ldots,D_\eta$, and $m=1,\ldots,M_\eta$, $\sum_{i=1}^{D_\eta}\zeta^\eta_{i,j}(g^{-1},gH_{\orb})\beta_{\eta,i,m}=0$. Thus, we obtain $0=\sum_{i,j=1}^{D_\eta}\overline{\zeta^\eta_{k,j}(g^{-1},gH_{\orb})}\zeta^\eta_{i,j}(g^{-1},gH_{\orb})\beta_{\eta,i,m}=\sum_{i=1}^{D_\eta}\delta_{i,k}\beta_{\eta,i,m}=\beta_{\eta,k,m}$ for any $[\eta]\in\hat{H}_{\orb}$, $k=1,\ldots,D_\eta$, and $m=1,\ldots,M_\eta$, proving that $\{K_{gH_{\orb},\eta,i,m}\,|\,[\eta]\in\hat{H}_{\orb},\ i=1,\ldots,D_\eta, m=1,\ldots,M_\eta\}$ is linearly independent.

Let us assume that $\Psi\in\mc K\otimes\mc M$ is such that $\<\Psi|(B\otimes\msf P_x)J\fii\>=0$ for all $B\in\mc L(\mc K)$, $x\in\Set$, and $\fii\in\hil$. For any $x\in\orb\in\Orb$, $[\eta]\in\hat{H}_{\orb}$, $i=1,\ldots,D_\eta$, and $m=1,\ldots,M_\eta$, there is $\psi_{x,\eta,i,m}\in\mc K$ such that $\Psi=\sum_{\orb\in\Orb}\sum_{x\in\orb}\sum_{[\eta]\in\hat{H}_{\orb}}\sum_{i=1}^{D_\eta}\sum_{m=1}^{M_\eta}\psi_{x,\eta,i,m}\otimes\delta_x\otimes e_{\eta,i}\otimes f_{\eta,m}$. Thus, we have, for all $B\in\mc L(\mc K)$, $x\in\orb\in\Orb$, and $\fii\in\hil$, $0=\<\Psi|(B\otimes\msf P_x)J\fii\>=\sum_{[\eta]\in\hat{H}_{\orb}}\sum_{i=1}^{D_\eta}\sum_{m=1}^{M_\eta}\<\psi_{x,\eta,i,m}|BK_{x,\eta,i,m}\fii\>$, implying, upon substituting $B=|\psi\>\<\psi'|$, that, for all $\psi,\,\psi'\in\mc K$, $x\in\orb\in\Orb$, and $\fii\in\hil$, $\sum_{[\eta]\in\hat{H}_{\orb}}\sum_{i=1}^{D_\eta}\sum_{m=1}^{M_\eta}\<\psi_{x,\eta,i,m}|\psi\>\<\psi'|K_{x,\eta,i,m}\fii\>=0$. Since $\{K_{x,\eta,i,m}\,|\,[\eta]\in\hat{H}_{\orb},\ i=1,\ldots,D_\eta, m=1,\ldots,M_\eta\}$ is linearly independent for any $x\in\orb\in\Orb$, this means that, for all $x\in\orb\in\Orb$, $[\eta]\in\hat{H}_{\orb}$, $i=1,\ldots,D_\eta$, $m=1,\ldots,M_\eta$, and $\psi\in\mc K$, $\<\psi_{x,\eta,i,m}|\psi\>=0$. This, of course, means that, for all $x\in\orb\in\Orb$, $[\eta]\in\hat{H}_{\orb}$, $i=1,\ldots,D_\eta$, and $m=1,\ldots,M_\eta$, $\psi_{x,\eta,i,m}=0$, i.e.,\ $\Psi=0$, proving the minimality.

As in the proof of Theorem \ref{theor:CovInstrStructure}, we can show that Equation \eqref{eq:Kinv} holds so that we have, for all $g\in G$ and $\fii\in\hil$
\begin{align*}
JU(g)\fii&=\sum_{\orb\in\Orb}\sum_{x\in\orb}\sum_{[\eta]\in\hat{H}_{\orb}}\sum_{i=1}^{D_\eta}\sum_{m=1}^{M_\eta}K_{x,\eta,i,m}U(g)\fii\otimes\delta_x\otimes e_{\eta,i}\otimes f_{\eta,m}\\
&=\sum_{\orb\in\Orb}\sum_{x\in\orb}\sum_{[\eta]\in\hat{H}_{\orb}}\sum_{i,j=1}^{D_\eta}\sum_{m=1}^{M_\eta}\zeta^\eta_{i,j}(g^{-1},x)V(g)K_{g^{-1}x,\eta,j,m}\fii\otimes\delta_x\otimes e_{\eta,i}\otimes f_{\eta,m}\\
&=\sum_{\orb\in\Orb}\sum_{x\in\orb}\sum_{[\eta]\in\hat{H}_{\orb}}\sum_{i,j=1}^{D_\eta}\sum_{m=1}^{M_\eta}\zeta^\eta_{i,j}(g^{-1},gx)V(g)K_{x,\eta,j,m}\fii\otimes\delta_{gx}\otimes e_{\eta,i}\otimes f_{\eta,m}\\
&=\sum_{\orb\in\Orb}\sum_{x\in\orb}\sum_{[\eta]\in\hat{H}_{\orb}}\sum_{j=1}^{D_\eta}\sum_{m=1}^{M_\eta}V(g)K_{x,\eta,j,m}\fii\otimes\delta_{gx}\otimes\zeta^\eta(g^{-1},gx)e_{\eta,j}\otimes f_{\eta,m}\\
&=\big(V(g)\otimes\overline{U}(g)\big)\sum_{\orb\in\Orb}\sum_{x\in\orb}\sum_{[\eta]\in\hat{H}_{\orb}}\sum_{j=1}^{D_\eta}\sum_{m=1}^{M_\eta}K_{x,\eta,j,m}\fii\otimes\delta_x\otimes e_{\eta,i}\otimes f_{\eta,m}=\big(V(g)\otimes\overline{U}(g)\big)J\fii.
\end{align*}
Proving that $\overline{U}(g)\msf P_x\overline{U}(g)^*=\msf P_{gx}$ for all $g\in G$ and $x\in\Set$ is straightforward.
\end{proof}

Using Theorem \ref{theor:CovInstrStructure} and Lemma \ref{lemma:minlemma}, we can also determine extremality conditions for $(\Set,U,V)$--covariant instruments. We say that an $(\Set,U,V)$--covariant instrument $\mc I=(\mc I_x)_{x\in\Set}$ is an {\it extreme instrument of the $(\Set,U,V)$--covariance structure} if it is an extreme point of the convex set of all $(\Set,U,V)$--covariant instruments.

\begin{theorem}\label{theor:ExtInstr}
Let $\mc I$ be an $(\Set,U,V)$--covariant instrument defined through Equation \eqref{eq:instrkaava} by a minimal set of $(\Set,U,V)$--intertwiners consisting of $L_{\eta,i,m}^{\orb}\in\mc L(\hil,\mc K)$ for all $\orb\in\Orb$, $[\eta]\in\hat{H}_{\orb}$, $i=1,\ldots,D_\eta$, and $m=1,\ldots,M_\eta$ where $M_\eta\in\{0\}\cup\N$. The instrument $\mc I$ is an extreme instrument of the $(\Set,U,V)$--covariance structure if and only if the set
$$
\left\{\sum_{g\in G}\sum_{i=1}^{D_\eta}U(g)L_{\eta,i,m}^{\orb\,*}L_{\eta,i,n}^{\orb}U(g)^*\,\bigg|\,m,\,n=1,\ldots,M_\eta,\ [\eta]\in\hat{H}_{\orb},\ \orb\in\Orb\right\}
$$
is linearly independent.
\end{theorem}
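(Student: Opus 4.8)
The natural approach is to use the minimal covariant Stinespring dilation $(\mc M,\msf P,\overline U,J)$ built from the given minimal set of intertwiners via Lemma \ref{lemma:minlemma}, and translate the extremality of $\mc I$ within the convex set of $(\Set,U,V)$--covariant instruments into a statement about operators on $\mc M$ that commute with the appropriate structure. The guiding principle is the familiar ``Arveson-type'' characterization of extreme points: an instrument with minimal dilation is extreme (in the relevant convex set) if and only if the only self-adjoint operator $T$ on the dilation space satisfying the constraints dual to ``being an instrument in that convex set'' and $J^*(\id_{\mc K}\otimes\msf P_x\, T)J = 0$ for all $x$ (equivalently $J^* (B\otimes\msf P_x T)J = 0$ for all $B, x$) is $T = 0$. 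Here the constraints are: $T$ commutes with each $\msf P_x$ (so the perturbation stays an instrument measuring a covariant POVM with values in $\Set$), and $T$ intertwines $\overline U$ suitably (so the perturbation stays $(\Set,U,V)$--covariant). Because $(\overline U,\msf P)$ is a transitive system of imprimitivity decomposed orbit-by-orbit via Peter--Weyl, such $T$ decomposes as a direct sum over $\orb$ and over $[\eta]\in\hat H_\orb$, and on each isotypic block $\C^{\#\orb}\otimes\mc K_\eta\otimes\C^{M_\eta}$ it must act as $\id_{\C^{\#\orb}}\otimes\id_{\mc K_\eta}\otimes T^\orb_\eta$ for some self-adjoint $M_\eta\times M_\eta$ matrix $T^\orb_\eta=(t^\orb_{\eta,m,n})$ — this is exactly the content of ``commuting with a multiplicity-free-per-block system of imprimitivity.''

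**Key steps in order.** First I would set up the affine bijection between $(\Set,U,V)$--covariant instruments and (a convex subset of) self-adjoint operators on $\mc M$ arising as perturbations of $J$-conjugation, using items (i)--(iv) of the dilation together with Theorem \ref{theor:CovInstrStructure}. Concretely, every covariant instrument near $\mc I$ is of the form $\mc I'^*_x(B) = J^*(B\otimes\msf P_x\,R)J$ for a positive $R$ on $\mc M$ commuting with all $\msf P_x$ and with the representation $\overline U$, normalized so the total channel is trace-preserving; $\mc I$ is extreme iff $R=\id$ is the only such operator admitting a decomposition $R = \frac12(\id+T)+\frac12(\id-T)$ into two admissible operators, i.e.\ iff $T=0$ is forced for every admissible self-adjoint $T$ with $J^*(B\otimes\msf P_x T)J=0$ for all $B,x$. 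Second, I would invoke the Peter--Weyl decomposition of $(\overline U,\msf P)$ (already used in the proof of Theorem \ref{theor:CovInstrStructure}) to conclude that an admissible self-adjoint $T$ is precisely a family $(T^\orb_\eta)_{\orb,[\eta]}$ of Hermitian $M_\eta\times M_\eta$ matrices, acting as $\id\otimes\id\otimes T^\orb_\eta$ on the $[\eta]$-block of $\mc M^\orb$. Third — the computational heart — I would plug this $T$ into $J^*(B\otimes\msf P_{x_\orb}T)J = 0$, using $\msf P_{x_\orb} T = \bigoplus_{[\eta]}|\delta_{x_\orb}\rangle\langle\delta_{x_\orb}|\otimes\id_{\mc K_\eta}\otimes T^\orb_\eta$ and $K_{x_\orb,\eta,i,m}=L^\orb_{\eta,i,m}$, to get that for all $\fii,\fii'\in\hil$
\begin{equation*}
\sum_{[\eta]\in\hat H_\orb}\sum_{i=1}^{D_\eta}\sum_{m,n=1}^{M_\eta} t^\orb_{\eta,m,n}\,\langle\fii'|L^{\orb\,*}_{\eta,i,n}L^\orb_{\eta,i,m}\fii\rangle = 0,
\end{equation*}
i.e.\ $\sum_{\orb,[\eta]}\sum_{i,m,n} t^\orb_{\eta,m,n} L^{\orb\,*}_{\eta,i,n}L^\orb_{\eta,i,m} = 0$ as operators on $\hil$. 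The one remaining subtlety is that $J^*(B\otimes\msf P_x T)J=0$ must hold for \emph{all} $x$, not just $x=x_\orb$; but by covariance (the intertwining $JU(g)=(V(g)\otimes\overline U(g))J$ and $\overline U(g)\msf P_{x_\orb}\overline U(g)^* = \msf P_{gx_\orb}$, together with $T$ commuting with $\overline U$) the condition at a general $x=gH_\orb$ is equivalent to the one at $x_\orb$ conjugated by $U(g)$, hence the single family of operator identities above is equivalent to $\sum_{g\in G}U(g)\big(\sum_{\orb,[\eta],i,m,n} t^\orb_{\eta,m,n}L^{\orb\,*}_{\eta,i,n}L^\orb_{\eta,i,m}\big)U(g)^*$-type conditions collapsing to the per-$\orb$, per-$\eta$ statement after averaging. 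Finally, I would observe that the operators $A^\orb_{\eta,m,n}:=\sum_{g\in G}\sum_{i=1}^{D_\eta}U(g)L^{\orb\,*}_{\eta,i,m}L^\orb_{\eta,i,n}U(g)^*$ reorganize this: ``$T=0$ is forced'' $\iff$ ``the only Hermitian families $(t^\orb_{\eta,m,n})$ with $\sum t^\orb_{\eta,m,n}A^\orb_{\eta,n,m}=0$ are zero'' $\iff$ the set $\{A^\orb_{\eta,m,n}\}$ is linearly independent. (One checks Hermiticity of the family is not a real restriction for the linear-independence conclusion because the $A^\orb_{\eta,m,n}$ together with their adjoints $A^{\orb}_{\eta,n,m}$ span the same space as the original set, by the $\#G$-average making each $A^\orb_{\eta,m,n}$ a sum of $U$-conjugates; a standard real-vs-complex span argument handles this.)

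**Main obstacle.** The genuinely delicate step is the third one: correctly identifying which operators $R$ (equivalently $T$) on $\mc M$ keep the perturbed object inside the convex set of $(\Set,U,V)$--covariant \emph{instruments} — i.e.\ getting the ``dual constraints'' exactly right — and then verifying that the constraint $J^*(B\otimes\msf P_x T)J=0$ really is equivalent to the clean operator identity on $\hil$, using minimality of the dilation (item (iv)) to rule out spurious solutions and using the linear independence of $\{L^\orb_{\eta,i,m}\}$ within each orbit to pin down the block structure of $T$. Once the bijection ``admissible self-adjoint $T$ on $\mc M$'' $\leftrightarrow$ ``Hermitian matrix families $(t^\orb_{\eta,m,n})$'' is established and the imprimitivity decomposition is in hand, the rest is bookkeeping with the cocycles $\zeta^\eta$ exactly as in the proof of Theorem \ref{theor:CovInstrStructure}. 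I would also take care, in writing up the ``only if'' direction, to exhibit explicitly the two distinct covariant instruments $\mc I_\pm$ built from $\id\pm\epsilon T$ when $T\neq0$, to make the non-extremality concrete rather than merely invoking the abstract criterion.
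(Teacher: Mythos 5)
Your overall strategy — use the minimal covariant dilation from Lemma \ref{lemma:minlemma}, identify the operators on $\mc M$ commuting with the imprimitivity system $(\msf P,\overline{U})$ through the orbit/Peter--Weyl decomposition, and push the resulting vanishing condition through $J$ — is the same as the paper's, and the block form you derive for such operators, $\id\otimes\id\otimes T^\orb_\eta$ on each isotypic component, matches the paper's $E_\eta$. However, the extremality criterion you plug into this framework is wrong, and the error propagates into the final linear-independence condition.

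You assert extremality is equivalent to: the only admissible $T$ (commuting with $\msf P$ and $\overline{U}$) with $J^*(\id_{\mc K}\otimes\msf P_x T)J=0$ for all $x$ (``equivalently'' $J^*(B\otimes\msf P_x T)J=0$ for all $B,x$) is $T=0$. Both hypotheses on $T$ are too strong, and they are not equivalent to one another. The requirement that $\mc I\pm T$-perturbations remain (covariant) instruments gives exactly one trace constraint, namely preservation of the total trace $\sum_x\mc I_x^*(\id_{\mc K})=\id_\hil$, i.e.\ $J^*(\id_{\mc K}\otimes T)J=0$. Nothing forces the pointwise traces $\Delta_x^*(\id_{\mc K})$ to vanish: in a genuine decomposition $\mc I=\frac{1}{2}\mc I_1+\frac{1}{2}\mc I_2$ the POVMs measured by $\mc I_1$ and $\mc I_2$ are allowed to differ at individual outcomes. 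Your parenthetical condition $J^*(B\otimes\msf P_x T)J=0$ for all $B,x$ together with minimality forces $T=0$ unconditionally, which would make every covariant instrument extreme. The criterion the paper actually uses, from \cite{HaPe15}, is: for $E$ commuting with $\msf P$ and $\overline{U}$, the single equation $J^*(\id_{\mc K}\otimes E)J=0$ implies $E=0$.

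This is not cosmetic. From your pointwise hypothesis you obtain, per orbit, $\sum_{[\eta],i,m,n}t^\orb_{\eta,m,n}L^{\orb\,*}_{\eta,i,n}L^{\orb}_{\eta,i,m}=0$, and you then assert this ``reorganizes'' into linear independence of the averaged operators $A^\orb_{\eta,m,n}=\sum_{g,i}U(g)L^{\orb\,*}_{\eta,i,m}L^\orb_{\eta,i,n}U(g)^*$ taken jointly over all orbits. But $G$-averaging is not injective, and independence of the unaveraged operator sets orbit by orbit does not imply joint independence of the averaged family. The paragraph following Theorem \ref{theor:ExtInstr} makes this concrete: when $U$ is irreducible every $A^\orb_{\eta,m,n}$ is a scalar multiple of $\id_\hil$, so the averaged set is automatically dependent as soon as two orbits carry non-zero intertwiners and $\mc I$ is not extreme — even though each orbit's unaveraged set may well be independent, which your criterion would wrongly accept. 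The correct derivation starts from $J^*(\id_{\mc K}\otimes E)J=0$, expands $\<J\fii|(\id_{\mc K}\otimes E)J\fii\>$ as a sum over $\orb$ and $x\in\orb$, and converts $\sum_{x\in\orb}\sum_{h\in H_\orb}$ into $\sum_{g\in G}$ to land directly on the single aggregated identity $\sum_{\orb,[\eta],m,n}\beta^\orb_{\eta,m,n}A^\orb_{\eta,m,n}=0$. With the hypothesis corrected, the remainder of your outline (cocycle bookkeeping, block-diagonalization of $E$) goes through as in the paper.
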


\begin{proof}
Let $(\mc M,\msf P,\overline{U},J)$ be the minimal $(\Set,U,V)$--covariant Stinespring dilation for $\mc I$ as defined in Lemma \ref{lemma:minlemma}. Denote, for brevity, for any orbit $\orb\in\Orb$,
$$
\mc M^{\orb}:=\bigoplus_{[\eta]\in\hat{H}_{\orb}}\mc K_\eta\otimes\C^{M_\eta}.
$$
According to the results of \cite{HaPe15}, $\mc I$ is an extreme observable of the $(\Set,U,V)$--covariance structure if and only if, for $E\in\mc L(\mc M)$ the conditions $E\msf P_x=\msf P_x E$ for all $x\in\Set$, $E\overline{U}(g)=\overline{U}(g)E$ for all $g\in G$, and $J^*(\id_{\mc K}\otimes E)J=0$ imply $E=0$; note that for this extremality characterization it is vital that the dilation is minimal. Let $E\in\mc L(\mc M)$ be such that $E\msf P_x=\msf P_x E$ for all $x\in\Set$ and $E\overline{U}(g)=\overline{U}(g)E$ for all $g\in G$. The first condition is equivalent with the existence of $E_x\in\mc L(\mc M^{\orb})$, $x\in\orb\in\Orb$, such that $E(\delta_x\otimes v)=\delta_x\otimes E_x v$ for all $v\in\mc M^{\orb}$. Denoting, for all $g\in G$ and $x\in\orb\in\Orb$, $\zeta^{\orb}(g,x):=\bigoplus_{[\eta]\in\hat{H}_{\orb}}\zeta^\eta(g,x)\otimes\id_{M_\eta}$, the second condition is easily seen to be equivalent with
\begin{equation}\label{eq:apuzeta}
\zeta^{\orb}(g^{-1},gx)E_x=E_{gx}\zeta^{\orb}(g^{-1},gx),\qquad x\in\orb\in\Orb,\quad g\in G.
\end{equation}
Identifying $\orb=G/H_{\orb}$, we obtain $E_{gH_{\orb}}=\zeta^{\orb}(g^{-1},gH_{\orb})E_{H_\orb}\zeta^{\orb}(g^{-1},gH_{\orb})^*$ for any orbit $\orb$. Note that, defining, for all orbits $\orb$ and $h\in H_{\orb}$, $\zeta^{\orb}(h^{-1},H_{\orb})=:\pi^{\orb}(h)$, we determine a unitary representation $\pi^{\orb}:H_{\orb}\to\mc U(\mc M^{\orb})$ such that
\begin{equation}\label{eq:ODecomp}
\pi^{\orb}(h)=\bigoplus_{[\eta]\in\hat{H}_{\orb}}\eta(h)\otimes\id_{M_\eta}.
\end{equation}
Using Equation \eqref{eq:apuzeta}, we have $\pi^{\orb}(h)E_{H_\orb}=\zeta^{\orb}(h^{-1},H_{\orb})E_{H_{\orb}}=E_{hH_{\orb}}\zeta^{\orb}(h^{-1},H_{\orb})=E_{H_\orb}\pi^{\orb}(h)$ for all $\orb\in\Orb$ and $h\in H_{\orb}$. The decomposition in Equation \eqref{eq:ODecomp} implies now that $E_{H_\orb}=\bigoplus_{[\eta]\in\hat{H}_{\orb}}\id_{\mc K_\eta}\otimes E_\eta$ for all $\orb\in\Orb$ where $E_\eta\in\mc L(\C^{M_\eta})$ for all $[\eta]\in\hat{H}_{\orb}$. We now have
$E_{gH_{\orb}}=\zeta^{\orb}(g^{-1},gH_{\orb})E_{H_\orb}\zeta^{\orb}(g^{-1},gH_{\orb})^*=\bigoplus_{[\eta]\in\hat{H}_{\orb}}\zeta^{\eta}(g^{-1},gH_{\orb})\zeta^\eta(g^{-1},gH_{\orb})^*\otimes E_\eta=\bigoplus_{[\eta]\in\hat{H}_{\orb}}\id_{\mc K_\eta}\otimes E_\eta=E_{H_\orb}$ for any orbit $\orb$ and $g\in G$. Thus,
\begin{equation}\label{eq:DHaj}
E=\sum_{\orb\in\Orb}\sum_{x\in\orb}|\delta_x\>\<\delta_x|\otimes\left(\bigoplus_{[\eta]\in\hat{H}_{\orb}}\id_{\mc K_\eta}\otimes E_\eta\right)
\end{equation}

In the same way as in the proof of Theorem \ref{theor:CovInstrStructure}, we see that, for any orbit $\orb\in\Orb$, $h\in H_{\orb}$, $[\eta]\in\hat{H}_{\orb}$, and $m,\,n=1,\ldots,M_\eta$, $\sum_{i=1}^{D_\eta}U(h)L_{\eta,i,m}^{\orb\,*}L_{\eta,i,n}^{\orb}U(h)^*=\sum_{i=1}^{D_\eta}L_{\eta,i,m}^{\orb\,*}L_{\eta,i,n}^{\orb}$. Recall the section $s_{\orb}:\orb\to G$ such that $s_{\orb}(x_{\orb})=e$. Using the above observation and Equation \eqref{eq:DHaj}, we have, for any $\fii\in\hil$,
\begin{align*}
&\<J\fii|(\id_{\mc K}\otimes E)J\fii\>=\sum_{\orb\in\Orb}\sum_{x\in\orb}\sum_{[\eta]\in\hat{H}_{\orb}}\<J\fii|(\id_{\mc K}\otimes|\delta_x\>\<\delta_x|\otimes\id_{\mc K_\eta}\otimes E_\eta)J\fii\>\\
=&\sum_{\orb\in\Orb}\sum_{x\in\orb}\sum_{[\eta]\in\hat{H}_{\orb}}\sum_{i=1}^{D_\eta}\sum_{m,n=1}^{M_\eta}\<K_{x,\eta,i,m}\fii\otimes f_{\eta,m}|K_{x,\eta,i,n}\fii\otimes E_\eta f_{\eta,n}\>\\
=&\sum_{\orb\in\Orb}\sum_{x\in\orb}\sum_{[\eta]\in\hat{H}_{\orb}}\sum_{i=1}^{D_\eta}\sum_{m,n=1}^{M_\eta}\<f_{\eta,m}|E_\eta f_{\eta,n}\>\<K_{s_{\orb}(x)H_{\orb},\eta,i,m}\fii|K_{s_{\orb}(x)H_{\orb},\eta,i,n}\fii\>\\
=&\sum_{\orb\in\Orb}\sum_{x\in\orb}\sum_{[\eta]\in\hat{H}_{\orb}}\sum_{i,j,k=1}^{D_\eta}\sum_{m,n=1}^{M_\eta}\<f_{\eta,m}|E_\eta f_{\eta,n}\>\overline{\zeta^\eta_{i,j}\big(s_{\orb}(x)^{-1},x\big)}\zeta^\eta_{i,k}\big(s_{\orb}(x)^{-1},x\big)\times\\
&\times\<L_{\eta,j,m}^{\orb}U\big(s_{\orb}(x)\big)^*\fii|L_{\eta,k,n}^{\orb}U\big(s_{\orb}(x)\big)^*\fii\>\\
=&\sum_{\orb\in\Orb}\sum_{x\in\orb}\sum_{[\eta]\in\hat{H}_{\orb}}\sum_{i=1}^{D_\eta}\sum_{m,n=1}^{M_\eta}\<f_{\eta,m}|E_\eta f_{\eta,n}\>\<L_{\eta,i,m}^{\orb}U\big(s_{\orb}(x)\big)^*\fii|L_{\eta,i,n}^{\orb}U\big(s_{\orb}(x)\big)^*\fii\>\\
=&\sum_{\orb\in\Orb}\sum_{x\in\orb}\sum_{h\in H_{\orb}}\sum_{[\eta]\in\hat{H}_{\orb}}\sum_{i=1}^{D_\eta}\sum_{m,n=1}^{M_\eta}\frac{1}{\# H_{\orb}}\<f_{\eta,m}|E_\eta f_{\eta,n}\>\<L_{\eta,i,m}^{\orb}U\big(s_{\orb}(x)h\big)^*\fii|L_{\eta,i,n}^{\orb}U\big(s_{\orb}(x)h\big)^*\fii\>\\
=&\sum_{\orb\in\Orb}\sum_{g\in G}\sum_{[\eta]\in\hat{H}_{\orb}}\sum_{i=1}^{D_\eta}\sum_{m,n=1}^{M_\eta}\frac{1}{\# H_{\orb}}\<f_{\eta,m}|E_\eta f_{\eta,n}\>\<L_{\eta,i,m}^{\orb}U(g)^*\fii|L_{\eta,i,n}^{\orb}U(g)^*\fii\>\\
=&\sum_{\orb\in\Orb}\sum_{[\eta]\in\hat{H}_{\orb}}\sum_{m,n=1}^{M_\eta}\beta_{\eta,m,n}^{\orb}\sum_{g\in G}\sum_{i=1}^{D_\eta}\<L_{\eta,i,m}^{\orb}U(g)^*\fii|L_{\eta,i,n}^{\orb}U(g)^*\fii\>,
\end{align*}
where we have denoted $\beta^{\orb}_{\eta,m,n}:=(\# H_{\orb})^{-1}\<f_{\eta,m}|E_\eta f_{\eta,n}\>$, for all orbits $\orb\in\Orb$, $[\eta]\in\hat{H}_{\orb}$, and $m,\,n=1,\ldots,M_\eta$. From this observation the claim immediately follows.
\end{proof}

Suppose now that $U$ is irreducible. Now for any minimal set of $(\Set,U,V)$--intertwiners $L_{\eta,i,m}^{\orb}$, $\orb\in\Orb$, $[\eta]\in\hat{H}_{\orb}$, $i=1,\ldots,D_\eta$, $m=1,\ldots,M_\eta$, where $M_\eta\in\{0\}\cup\N$ for all $[\eta]\in\hat{H}_{\orb}$ and any orbit $\orb\in\Orb$, we have
$$
\sum_{g\in G}\sum_{i=1}^{D_\eta}U(g)L_{\eta,i,m}^{\orb\,*}L_{\eta,i,n}^{\orb}U(g)^*=\beta^{\orb}_{\eta,m,n}\id_\hil
$$
with some $\beta^{\orb}_{\eta,m,n}\in\C$ for any orbit $\orb\in\Orb$, $[\eta]\in\hat{H}_{\mc O}$, and $m,\,n=1,\ldots,M_\eta$. Thus, the corresponding $(\Set,U,V)$--covariant instrument $\mc I$ is an extreme instrument in the $(\Set,U,V)$--covariance structure if and only if there is only one orbit $\orb_0$ and only one $[\eta_0]\in\hat{H}_{\orb_0}$ such that $L_{\eta,i,m}^{\orb_0}\neq0$ for some $i\in\{1,\ldots,D_{\eta_0}\}$ in which case $m_{\eta_0}=1$, i.e.,\ the only possibly non-zero minimal $(\Set,U,V)$--intertwiners are $L_{\eta_0,i,1}^{\orb_0}$, $i=1,\ldots,D_{\eta_0}$ with a unique orbit $\orb_0$ and a unique $[\eta_0]\in\hat{H}_{\orb_0}$. This means that the instrument $\mc I$ is supported totally on $\orb_0$. If we now equip $\mc I$ with the minimal $(\Set,U,V)$--covariant Stinespring dilation $(\mc M,\msf P,\overline{U},J)$ of Lemma \ref{lemma:minlemma}, the representation $\overline{U}$ only consists of the transitive part $\overline{U}^{\orb_0}$ (see Equation \eqref{eq:transitiveU} in Appendix B). Moreover the multiplicity $m_{\eta_0}$ of $[\eta_0]$ is 1 meaning that $\overline{U}$ is irreducible. This means that, when $U$ is irreducible and we give an $(\Set,U,V)$--covariant instrument $\mc I$ an $(\Set,U,V)$--covariant minimal dilation $(\mc M,\msf P,\overline{U},J)$, where $\mc M$, $\msf P$, and $\overline{U}$ have the decomposition of Equations \eqref{eq:transitiveU} and \eqref{eq:transitiveP} into transitive constituents over $\Orb$ where $\overline{U}^\orb$ is induced from $\pi^\orb:H_\orb\to\mc U(\hil^\orb)$ for each $\orb\in\Orb$, the instrument $\mc I$ is an extreme instrument of the $(\Set,U,V)$--covariance structure if and only if only one of these constituents, corresponding to a fixed $\orb_0\in\Orb$, is non-zero and the corresponding $\pi^\orb$ is irreducible. See Proposition \ref{prop:irrep} for a generalization of this fact in the single-orbit (transitive) case.

\begin{remark}\label{rem:ext}
We say that an instrument $\mc I=(\mc I_x)_{x\in\Set}$ is {\it extreme} if it is an extreme point of the convex set of all instruments with the value space $\Set$, input Hilbert space $\hil$, and output Hilbert space $\mc K$. This extremality property also depends on the minimal Stinespring dilation of the instrument and, if the instrument $\mc I$ is $(\Set,U,V)$--covariant, we can use the minimal dilation presented in Lemma \ref{lemma:minlemma}. It follows that the condition can be formulated as a property of the Kraus operators $K_{x,\eta,i,m}$ of the instrument obtained through Equation \eqref{eq:apuapu1} from the minimal $(\Set,U,V)$--intertwiners $L_{\eta,i,m}^{\orb}$, associated with the instrument $\mc I$: it follows that the instrument $\mc I$ is extreme if and only if the set of operators $K_{x,\eta,i,m}^*K_{x,\tj,j,n}$, $x\in\Set$, $[\eta],\,[\tj]\in\hat{H}_{Gx}$, $i=1,\ldots,D_\eta$, $j=1,\ldots,D_\tj,$ $m=1,\ldots,M_\eta,$ $n=1,\ldots,M_\tj$, is linearly independent. Naturally, an extreme instrument is also an extreme instrument of the $(\Set,U,V)$--covariance structure; in  Appendix C we see how this can be seen directly using the respective extremality characterizations. \hfill $\triangle$
\end{remark}

\section{Observables and channels covariant with respect to a finite group}\label{sec:finObsCh}

Let us retain the finite group $G$ and the $G$-space structure of the value space $\Set$ and the representation $U:\,G\to\mc U(\hil)$ of the preceding section. We say that an $(\Set,U)$--covariant observable $\msf M$ (i.e.,\ a POVM satisfying Equation \eqref{kovarianssiehto}) is an {\it extreme observable of the $(\Set,U)$--covariance structure} if $\msf M$ is an extreme point of the convex set of all $(\Set,U)$--covariant observables. We may view an $(\Set,U)$--covariant observable as a particular $(\Set,U,V)$--covariant instrument with the trivial output space $\C$ where $V$ is the trivial representation of $G$. Using this observation and Theorems \ref{theor:CovInstrStructure} and \ref{theor:ExtInstr}, we obtain the following result characterizing the $(\Set,U)$--covariant observables (and thus elaborating on Theorem \ref{theor:CovObsBasic}) and the extreme observables of the $(\Set,U)$--covariance structure. As the result is a direct corollary, we do not give a separate proof for it. Note that extreme points of sets of covariant observables have also been studied in \cite{ChiDA2006,HaPe11,HaPe15,HoPe2009}. Also the non-covariant results presented in \cite{Parthasarathy99} can be seen as corollaries of the following extremality characterization (in the case where every orbit is a singleton).

\begin{corollary}\label{cor:CovObs}
Let $\msf M=(\msf M_x)_{x\in\Set}$ be an $(\Set,U)$--covariant observable. For any orbit $\orb\in\Orb$, there is an operator $K_{\orb}\in\mc L(\hil)$ such that, for any $g\in G$,
\begin{equation}\label{eq:CovObsChar}
\msf M_{gH_{\orb}}=U(g)K_\orb U(g)^*.
\end{equation}
For any $\orb\in\Orb$, the above operator $K_\orb$ has the following structure: For all $[\eta]\in\hat{H}_{\orb}$ there is a number $M_\eta\in\{0\}\cup\N$ and a linearly independent set
$$
\{d_{\eta,i,m}^{\orb}\in\hil\,|\,[\eta]\in\hat{H}_{\orb},\ i=1,\ldots,D_\eta,\ m=1,\ldots,M_\eta\}
$$
such that, for any $[\eta]\in\hat{H}_{\orb}$, $i=1,\ldots,D_\eta$, $m=1,\ldots,M_\eta$, and $h\in H_{\orb}$,
\begin{equation}\label{eq:ObsHinv}
U(h)d_{\eta,i,m}^{\orb}=\sum_{j=1}^{D_\eta}\eta_{j,i}(h)d_{\eta,j,m}^{\orb}
\end{equation}
and
\begin{equation}\label{eq:CovObsKernel}
K_{\orb}=\sum_{[\eta]\in\hat{H}_{\orb}}\sum_{i=1}^{D_\eta}\sum_{m=1}^{M_\eta}|d_{\eta,i,m}^{\orb}\>\<d_{\eta,i,m}^{\orb}|.
\end{equation}
Furthermore,
\begin{equation}\label{eq:ObsNormitus}
\sum_{\orb\in\Orb}\sum_{g\in G}\sum_{[\eta]\in\hat{H}_{\orb}}\sum_{i=1}^{D_\eta}\sum_{m=1}^{M_\eta}\frac{1}{\# H_{\orb}}|U(g)d_{\eta,i,m}^{\orb}\>\<U(g)d_{\eta,i,m}^{\orb}|=\id_\hil.
\end{equation}
This observable is an extreme observable of the $(\Set,U)$--covariance structure if and only if the set
$$
\left\{\sum_{g\in G}\sum_{i=1}^{D_\eta}|U(g)d_{\eta,i,m}^{\orb}\>\<U(g)d_{\eta,i,n}^{\orb}|\,\bigg|\,m,\,n=1,\ldots,M_\eta,\ [\eta]\in\hat{H}_{\orb},\ \orb\in\Orb\right\}
$$
is linearly independent. Moreover, when $d_{\eta,i,m}^{\orb}$, $\orb\in\Orb$, $[\eta]\in\hat{H}_{\orb}$, $i=1,\ldots,D_\eta$, $m=1,\ldots,M_\eta$, where $M_\eta\in\{0\}\cup\N$, are vectors satisfying Equations \eqref{eq:ObsHinv} and \eqref{eq:ObsNormitus},\footnote{Sometimes the vectors $U(g)d_{\eta,i,m}^{\orb}$ are called {\it generalized coherent states.}} 
 Equations \eqref{eq:CovObsChar} and \eqref{eq:CovObsKernel} define an $(\Set,U)$--covariant observable.
\end{corollary}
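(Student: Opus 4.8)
The plan is to read off the statement as the specialization of Theorems~\ref{theor:CovInstrStructure} and~\ref{theor:ExtInstr} to the output Hilbert space $\mc K=\C$ with $V=\mathbf 1$, the trivial one-dimensional representation of $G$. First I would record the affine bijection between $(\Set,U)$--covariant observables $\msf M$ and $(\Set,U,\mathbf 1)$--covariant instruments: one sends $\msf M$ to the instrument $\mc I_x(\rho):=\tr{\rho\msf M_x}$ (valued in $\mc L(\C)\cong\C$), and $\mc I_x^*(1)=\msf M_x$ inverts this. Since $V\equiv\mathbf 1$, the instrument covariance condition~\eqref{eq:InstrKovEhto} is literally the observable covariance condition~\eqref{kovarianssiehto}, and the correspondence is affine, so it identifies the extreme points of the convex set of $(\Set,U)$--covariant observables with the extreme points of the convex set of $(\Set,U,\mathbf 1)$--covariant instruments. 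Hence it only remains to transcribe the intertwiner data; everything then follows from the two cited theorems and their proofs.

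The transcription is via the Riesz correspondence $\mc L(\hil,\C)\ni L\longleftrightarrow d\in\hil$, $L=\<d|$, a conjugate-linear bijection that preserves and reflects linear independence of finite families. Writing a (minimal) set of $(\Set,U,\mathbf 1)$--intertwiners as $L^{\orb}_{\eta,i,m}=\<d^{\orb}_{\eta,i,m}|$, one has $L^{\orb\,*}_{\eta,i,m}L^{\orb}_{\eta,i,m}=\kb{d^{\orb}_{\eta,i,m}}{d^{\orb}_{\eta,i,m}}$, so the Heisenberg form of~\eqref{eq:instrkaava} evaluated at the scalar $1$ gives $\msf M_{gH_{\orb}}=\mc I^*_{gH_{\orb}}(1)=U(g)K_{\orb}U(g)^*$ with $K_{\orb}$ of the form~\eqref{eq:CovObsKernel}, i.e.\ Equations~\eqref{eq:CovObsChar} and~\eqref{eq:CovObsKernel}; the normalization~\eqref{eq:normitus} becomes~\eqref{eq:ObsNormitus}, and minimality of the intertwiner set is, for each orbit, linear independence of the $d^{\orb}_{\eta,i,m}$. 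The one bookkeeping point is Equation~\eqref{eq:ObsHinv}: for $V\equiv\mathbf 1$, condition~\eqref{eq:Hinv} reads $\<d^{\orb}_{\eta,i,m}|U(h)=\sum_j\eta_{i,j}(h)\<d^{\orb}_{\eta,j,m}|$; taking adjoints, then left-multiplying by $U(h)$ and by $\eta(h)$ and using $\sum_i\overline{\eta_{i,j}(h)}\eta_{i,k}(h)=\delta_{j,k}$, one obtains $U(h)d^{\orb}_{\eta,i,m}=\sum_j\eta_{j,i}(h)d^{\orb}_{\eta,j,m}$, which is~\eqref{eq:ObsHinv}. For the extremality statement, the operators of Theorem~\ref{theor:ExtInstr} become $\sum_{g\in G}\sum_{i=1}^{D_\eta}U(g)L^{\orb\,*}_{\eta,i,m}L^{\orb}_{\eta,i,n}U(g)^*=\sum_{g\in G}\sum_{i=1}^{D_\eta}\kb{U(g)d^{\orb}_{\eta,i,m}}{U(g)d^{\orb}_{\eta,i,n}}$, so the stated criterion transcribes verbatim.

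For the converse assertion, given vectors $d^{\orb}_{\eta,i,m}$ satisfying~\eqref{eq:ObsHinv} and~\eqref{eq:ObsNormitus}, the functionals $\<d^{\orb}_{\eta,i,m}|$ satisfy~\eqref{eq:Hinv} (with $V\equiv\mathbf 1$) and~\eqref{eq:normitus} by the same adjoint computation read in reverse, so they form a set of $(\Set,U,\mathbf 1)$--intertwiners; the converse half of Theorem~\ref{theor:CovInstrStructure} then yields an $(\Set,U,\mathbf 1)$--covariant instrument $\mc I$ whose associated observable $\msf M_x=\mc I_x^*(1)$ is given by~\eqref{eq:CovObsChar}--\eqref{eq:CovObsKernel} and is $(\Set,U)$--covariant. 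I do not expect any genuine obstacle: the argument is a routine dictionary, and the only thing to be careful about is the conjugation/transpose bookkeeping that turns the index pattern of~\eqref{eq:Hinv} into that of~\eqref{eq:ObsHinv}, together with keeping track that a ``minimal set of intertwiners'' corresponds under the Riesz identification exactly to a per-orbit linearly independent family of $d^{\orb}_{\eta,i,m}$.
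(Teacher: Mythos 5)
Your proposal is correct and coincides with the paper's intended argument: the paper itself states that the corollary follows by viewing an $(\Set,U)$--covariant observable as an $(\Set,U,V)$--covariant instrument with trivial output $\mc K=\C$ and trivial $V$, and declines to write out the dictionary. You fill in exactly that dictionary (Riesz identification $L^{\orb}_{\eta,i,m}=\<d^{\orb}_{\eta,i,m}|$, $L^{*}L=\kb{d}{d}$, transposition/conjugation of the $\eta$-index pattern via the orthogonality $\sum_i\overline{\eta_{i,j}(h)}\eta_{i,k}(h)=\delta_{j,k}$, and per-orbit linear independence), which is precisely what "direct corollary" means here.
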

%Let $\msf M$ be an $(\Set,U)$--covariant observable defined by vectors $d_{\eta,i,m}^{\orb}$, $\orb\in\Orb$, $[\eta]\in\hat{H}_{\orb}$, $i=1,\ldots,D_\eta$, $m=1,\ldots,M_\eta$, where $M_\eta\in\{0\}\cup\N$, and the associated kernels $K_{\orb}$, $\orb\in\Orb$ as in Corollary \ref{cor:CovObs}. Note that, defining the map ${\rm twirl}:\mc L(\hil)\to\mc L(\hil)$,
%$$
%{\rm twirl}(A)=\frac{1}{\# G}\sum_{g\in G}U(g)AU(g)^*,\qquad A\in\mc L(\hil),
%$$
%and recalling that, for each $\orb\in\Orb$, $\# G=(\#\orb)(\# H_{\orb})$, the normalization condition of Equation \eqref{eq:ObsNormitus} can be written as $\sum_{\orb\in\Orb}(\#\orb){\rm twirl}(K_{\orb})=\id_\hil$.

\begin{example}\label{ex:genSym}

We continue to study the situation introduced in Example \ref{ex:Sym} and generalize it to a general finite dimension $D\ge 2$. The Hilbert space of our system is $\hil_D\simeq\C^D$, the symmetry group is the permutation group $S_D={\rm Sym}\big(\{1,2,\ldots,D\}\big)$ which operates in the value space $\Set_D^2=\{1,\ldots,D\}^2$ of our measurements through $S_D\times\Set_D^2\ni\big(\pi,(m,n)\big)\mapsto\pi(m,n)=\big(\pi(m),\pi(n)\big)\in\Set_D^2$ and in $\hil_D$ through the unitary representation $U:S_D\to\mc U(\hil_D)$ defined w.r.t.\ a fixed orthonormal basis $\{|n\>\}_{n=1}^D$ of $\hil_D$ via $U(\pi)|n\>=|\pi(n)\>$ for all $\pi\in S_D$ and $n=1,\ldots,D$. Note that $U$ is not irreducible as $\psi_0:=D^{-1/2}\big(|1\>+\cdots+|D\>\big)$ is invariant under $U$ and thus $U$ can be restricted to the orthogonal complement $\{\psi_0\}^\perp$. This restriction is irreducible and is called as the standard representation of $S_D$.

The set $\Set_D^2$ splits into two orbits, the diagonal $\orb=\{(1,1),(2,2),\ldots,(D,D)\}$ and the off-diagonal $\orb'=\Set_D^2\setminus\orb$. Picking the reference points $x_\orb=(1,1)$ and $x_{\orb'}=(1,2)$, the stability subgroup $H_\orb$ is easily seen to be the subgroup of those $\pi\in S_D$ such that $\pi(1)=1$ and the stability subgroup $H_{\orb'}$ is easily seen to consist of those $\pi'\in S_D$ such that $\pi'(1)=1$ and $\pi'(2)=2$. Hence, $H_\orb\simeq{\rm Sym}\big(\{2,3,\ldots,D\}\big)\simeq S_{D-1}$ and $H_{\orb'}\simeq{\rm Sym}\big(\{3,4,\ldots,D\}\big)\simeq S_{D-2}$; if $D=2$ then $H_{\orb'}$ is  the single-element group. It follows that, for any $(\Set_D^2,U)$--covariant observable $\msf M=(\msf M_{(m,n)})_{(m,n)\in\Set_D^2}$, there are positive kernels $K_\orb$ and $K_{\orb'}$ such that $U(\pi)K_\orb=K_\orb U(\pi)$ for all $\pi\in H_{\orb}$ and $U(\pi')K_{\orb'}=K_{\orb'}U(\pi')$ for all $\pi'\in H_{\orb'}$ 
and $\msf M_{(\pi(1),\pi(1))}=U(\pi)K_\orb U(\pi)^*$ for all $\pi\in S_D$ (defining the diagonal values) and $\msf M_{(\pi(1),\pi(2))}=U(\pi)K_{\orb'}U(\pi)^*$ for all $\pi\in S_D$ (defining the off-diagonal values). Furthermore, there are non-negative integers $M_\eta$ and $M_{\eta'}$, $[\eta]\in\hat{H}_{\orb}$, $[\eta']\in\hat{H}_{\orb'}$, and two linearly independent sets $\{d_{\eta,i,m}\,|\,[\eta]\in\hat{H}_{\orb},\ i=1,\ldots,D_\eta,\ m=1,\ldots,M_\eta\}$ and $\{d'_{\eta',k,r}\,|\,[\eta']\in\hat{H}_{\orb'},\ k=1,\ldots,D_{\eta'},\ r=1,\ldots,M_{\eta'}\}$ of vectors from $\hil_D$ such that $U(\pi)d_{\eta,i,m}=\sum_{j=1}^{D_\eta}\eta_{j,i}(\pi)d_{\eta,j,m}$ for all $\pi\in H_{\orb}$, $[\eta]\in\hat{H}_{\orb}$, $i=1,\ldots,D_\eta$, and $m=1,\ldots,M_\eta$ and $U(\pi')d'_{\eta',k,r}=\sum_{\ell=1}^{D_{\eta'}}\eta'_{\ell,k}(\pi')d'_{\eta',\ell,r}$ for all $\pi'\in H_{\orb'}$, $[\eta']\in\hat{H}_{\orb'}$, $k=1,\ldots,D_{\eta'}$, and $r=1,\ldots,M_{\eta'}$.

In exactly the same way as in Example \ref{ex:Sym}, we obtain (rank-1) PVMs when concentrating on the diagonal orbit. Let us construct a family of rank-1 informationally complete  extreme $(\Set_D^2,U)$--covariant POVMs. As we are interested in the rank-1 case, we only concentrate on the characters (1-dimensional irreducible representations) of the stability subgroups and unit multiplicities in the above framework. Let us make things simple by just assuming that the characters involved are just the trivial characters $\zeta_0\in\hat{H}_{\orb}$ and $\zeta'_0\in\hat{H}_{\orb'}$, i.e., $\<\pi,\zeta_0\>=1=\<\pi',\zeta'_0\>$ for all $\pi\in H_{\orb}$ and $\pi'\in H_{\orb'}$. It follows that we only have single vectors $d_{\orb}:=d_{\zeta_0,1,1}$ and $d_{\orb'}:=d'_{\zeta'_0,1,1}$ which satisfy $U(\pi)d_{\orb}=\<\pi,\zeta_0\>d_{\orb}=d_{\orb}$ for all $\pi\in H_{\orb}$ and $U(\pi')d_{\orb'}=\<\pi',\zeta'_0\>d_{\orb'}=d_{\orb'}$ for all $\pi'\in H_{\orb'}$.

Note that we do not have to overly worry about the normalization of the vectors $d_{\orb}$ and $d_{\orb'}$ for the moment as we can carry out the normalization afterwards according to Remark \ref{normalization}. Let us make the ansatz $d_{\orb}=|1\>$ and $d_{\orb'}=d_{\orb'}(\alpha):=\alpha(e^{-i\pi/8}|1\>+e^{i\pi/8}|2\>)$ where $\alpha\geq0$; indeed these are valid choices as they comply with the above necessary conditions. The case $\alpha=0$ corresponds to a rank-1 PVM supported by the diagonal $\orb$. For now, let us assume that $\alpha>0$. 
For any $m\le D$, we obtain $|m\>\<m|$ as an operator
$U(\pi)|1\>\<1|U(\pi)^*$ for some $\pi\in S_D$.
 Let $(m,n)\in\orb'$ and choose  $\pi\in S_D$ such that $\pi(1)=m$ and $\pi(2)=n$. It now follows that
$$
U(\pi)|d_{\orb'}(\alpha)\>\<d_{\orb'}(\alpha)|U(\pi)^*=\alpha^2\big(|m\>\<m|+e^{-i\pi/4}|m\>\<n|+e^{i\pi/4}|n\>\<m|+|n\>\<n|\big).
$$
Through linear combinations with operators from the diagonal, we now obtain the operators
$$
A:=e^{-i\pi/4}|m\>\<n|+e^{i\pi/4}|n\>\<m|,\qquad B:=e^{i\pi/4}|m\>\<n|+e^{-i\pi/4}|n\>\<m|,
$$
where $B$ is obtained by reversing the roles of $m$ and $n$, and ultimately $
2^{-3/2}\big[(A+B)+i(A-B)\big]=|m\>\<n|$. All in all, the operators $U(\pi)|1\>\<1|U(\pi)^*$ and $U(\pi)|d_{\orb'}(\alpha)\>\<d_{\orb'}(\alpha)|U(\pi)^*$, where $\pi\in S_D$, span the whole of $\mc L(\hil_D)$.
Following Remark \ref{normalization}, we may define
\begin{align*}
K(\alpha)&=\frac{1}{\# H_\orb}\sum_{\pi\in S_D}U(\pi)|1\>\<1|U(\pi)^*+\frac{1}{\# H_{\orb'}}\sum_{\pi\in S_D}U(\pi)|d_{\orb'}(\alpha)\>\<d_{\orb'}(\alpha)|U(\pi)^*\\
&=\Big[\big(2D-2-\sqrt{2}\big)\alpha^2+1\Big]\id+\sqrt{2}\alpha^2 D|\psi_0\>\<\psi_0|\\
&=\Big[\big(2D-2-\sqrt{2}\big)\alpha^2+1\Big]\big(\id-|\psi_0\>\<\psi_0|\big)+\Big[
\big(2+\sqrt{2}\big)\big(D-1\big)\alpha^2+1
\Big]
|\psi_0\>\<\psi_0|
\end{align*}
where the second equality is obtained through direct calculation and the final formula is the spectral resolution of $K(\alpha)$;\footnote{
Note that any operator commuting with $U$ has a spectral resolution like this recalling the decomposition of $U$ into the trivial character operating in the 1-dimensional subspace spanned by $\psi_0$ and to the standard representation operating in $\{\psi_0\}^\perp$.}
 recall the isotropic vector $\psi_0$ defined in the beginning of this example.
 Hence, we have the normalizer
 $$
 K(\alpha)^{-1/2}=
 \Big[\big(2D-2-\sqrt{2}\big)\alpha^2+1\Big]^{-1/2}\big(\id-|\psi_0\>\<\psi_0|\big)+\Big[
\big(2+\sqrt{2}\big)\big(D-1\big)\alpha^2+1
\Big]^{-1/2}
|\psi_0\>\<\psi_0|
 $$
 and we may define the $(\Set_D^2,U)$--covariant rank-1 POVM $\msf M^\alpha=(\msf M^\alpha_{(m,n)})_{(m,n)\in\Set_D^2}$ for all $\alpha\geq0$ through 
 \begin{eqnarray*}
\msf M^\alpha_{(m,m)}&=&K(\alpha)^{-1/2}|m\>\<m|K(\alpha)^{-1/2},\\
\msf M^\alpha_{(m,n)}&=&U(\pi)K(\alpha)^{-1/2}|d_{\orb'}(\alpha)\>\<d_{\orb'}(\alpha)|K(\alpha)^{-1/2}U(\pi)^* \\
&=&
\alpha^2
K(\alpha)^{-1/2}
\big(|m\>\<m|+e^{-i\pi/4}|m\>\<n|+e^{i\pi/4}|n\>\<m|+|n\>\<n|\big)
K(\alpha)^{-1/2}
\end{eqnarray*}
for all $m\ne n$ where $\pi\in S_D$ is such that $\pi(1)=m$ and $\pi(2)=n$. 
Whenever $\alpha>0$, using our observations just before introducing $K(\alpha)$ and the fact that $K(\alpha)^{-1/2}$ commutes with $U$,
the range of $\msf M^\alpha$ spans $\mc L(\hil_D)$ showing that $\msf M^\alpha$ is informationally complete. Since $\msf M^\alpha$ has $D^2$ non-zero outcomes when $\alpha>0$, this also implies that the set $\{\msf M^\alpha_{(m,n)}\,|\,(m,n)\in\Set_D^2\}$ is linearly independent. Hence, as a rank-1 POVM, $\msf M^\alpha$ is also extreme within the convex set of all observables with a finite outcome space and operating in $\hil_D$  \cite{HaHePe2012}. In the case $\alpha=0$, one gets the rank-1 PVM $\msf M_{(m,n)}^0=\delta_{m,n}|m\>\<m|$. To conclude, both of the mutually exclusive classes of optimal observables are represented within the $(\Set_D^2,U)$--covariance structure and they are arbitrarily close one another when $\alpha\approx0$.
It is easy to see that in the limit $\alpha\to\infty$, the diagonal effects of $\Mo^\alpha$ vanish so that the limit rank-1 POVM is not informationally complete. The limit POVM is a PVM only if $D=2$.
We observe that the margin\footnote{Defined by $\Ao^\alpha_m:=\sum_{n=1}^D\msf M^\alpha_{(m,n)}$ and $\sfb^\alpha_n:=\sum_{m=1}^D\msf M^\alpha_{(m,n)}$.} POVMs $(\Ao^\alpha_m)_{m=1}^D$ and $(\sfb^\alpha_n)_{n=1}^D$ are $(\Set_D,U)$--covariant (e.g.\ $U(\pi)\Ao_m^\alpha U(\pi)^*=\Ao_{\pi(m)}$)
but they are not of rank 1 except in the case $\alpha=0$ when they coincide with the basis measurement $\big(\kb m m\big)_{m=1}^D$ and $\Mo^0$ is their only possible joint measurement.

We notice that $K(\alpha)^{-1/2}$ is particularly simple when $\alpha=0$ or $\alpha=(2+\sqrt{2})^{-1/2}=:\alpha_0$. According to the above discussion, $\msf M^{\alpha_0}$ is an example of a rank-1 extreme informationally complete observable in the $(\Set_D^2,U)$--covariance structure. In a straightforward manner, we find that, for all $m,\,n=1,\ldots,D$, $\msf M^{\alpha_0}_{(m,n)}=|d_{m,n}\>\<d_{m,n}|$ where
$$
d_{m,n}=\frac{1}{\sqrt{2D}}\big(e^{-i\pi/8}|m\>+e^{i\pi/8}|n\>\big)-\frac{1}{D}\left(\sqrt{1+1/\sqrt{2}}-1\right)\psi_0.
$$
\hfill $\triangle$
\end{example}

In addition to a POVM, a quantum measurement associated with an instrument $\mc I=(\mc I_x)_{x\in\Set}$ with input Hilbert space $\hil$ and output space $\mc K$ also defines the total unconditioned state transformation $\sum_{x\in\Set}\mc I_x$ from the set $\mc S(\hil)$ of input states to the set of output states $\mc S(\mc K)$. This transformation is also known as a {\it channel}, a trace-preserving completely positive (affine) map. We immediately see that any channel can be viewed as an instrument with a single outcome. Let us again assume that $G$ is a finite group and that $U:G\to\mc U(\hil)$ and $V:G\to\mc U(\mc K)$ are unitary representations mediating the input and output symmetries. We say that a channel $\Phi:\mc S(\hil)\to\mc S(\mc K)$ is {\it $(U,V)$--covariant} if, for all $g\in G$ and $\rho\in\mc S(\hil)$,
$$
\Phi\big(U(g)\rho U(g)^*\big)=V(g)\Phi(\rho)V(g)^*.
$$
Furthermore, we say that a $(U,V)$--covariant channel $\Phi$ is an {\it extreme channel of the $(U,V)$--covariance structure} if $\Phi$ is an extreme point of the convex set of all $(U,V)$--covariant channels. Clearly, a $(U,V)$--covariant channel is an example of an $(\Set,U,V)$--covariant instrument where $\Set=\{x_0\}$ is a singleton where $G$ acts trivially. The following is again a direct corollary of Theorems \ref{theor:CovInstrStructure} and \ref{theor:ExtInstr} and the above observation.

\begin{corollary}\label{cor:CovCh}
Let $\Phi$ be a $(U,V)$--covariant channel. There is, for any $[\tj]\in\hat{G}$, a number $M_\tj\in\{0\}\cup\N$, and a linearly independent set
$$
\{L_{\tj,i,m}\in\mc L(\hil,\mc K)\,|\,[\tj]\in\hat{G},\ i=1,\ldots,D_\tj,\ m=1,\ldots,M_\tj\}
$$
of operators such that, for any $[\tj]\in\hat{G}$, $i=1,\ldots,D_\tj$, $m=1,\ldots,M_\tj$, and $g\in G$,
\begin{equation}\label{eq:ChGinv}
L_{\tj,i,m}U(g)=\sum_{j=1}^{D_\tj}\tj_{i,j}(g)V(g)L_{\tj,j,m},
\end{equation}
\begin{equation}\label{eq:ChNormitus}
\sum_{[\tj]\in\hat{G}}\sum_{i=1}^{D_\tj}\sum_{m=1}^{M_\tj}L_{\tj,i,m}^*L_{\tj,i,m}=\id_\hil,
\end{equation}
and, for any $\rho\in\mc S(\hil)$,
\begin{equation}\label{eq:CovChStr}
\Phi(\rho)=\sum_{[\tj]\in\hat{G}}\sum_{i=1}^{D_\tj}\sum_{m=1}^{M_\tj}L_{\tj,i,m}\rho L_{\tj,i,m}^*.
\end{equation}
This channel is an extreme channel of the $(U,V)$--covariance structure if and only if the set
$$
\left\{\sum_{i=1}^{D_\tj}L_{\tj,i,m}^*L_{\tj,i,n}\,\bigg|\,m,\,n=1,\ldots,M_\tj,\ [\tj]\in\hat{G}\right\}
$$
is linearly independent. Moreover, given a set of linear operators $L_{\tj,i,m}$, $[\tj]\in\hat{G}$, $i=1,\ldots,D_\tj$, $m=1,\ldots,M_\tj$, where $M_\tj\in\{0\}\cup\N$, satisfying Equations \eqref{eq:ChGinv} and \eqref{eq:ChNormitus}, Equation \eqref{eq:CovChStr} defines a $(U,V)$--covariant channel.
\end{corollary}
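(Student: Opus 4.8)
The plan is to obtain the corollary by specializing Theorems~\ref{theor:CovInstrStructure} and~\ref{theor:ExtInstr} to the degenerate value space $\Set=\{x_0\}$ on which $G$ acts trivially, as indicated just before the statement. A $(U,V)$--covariant channel $\Phi$ is, by the very covariance requirement, the same object as an $(\Set,U,V)$--covariant instrument $\mc I=(\mc I_{x_0})$ with the single outcome $x_0$ and $\mc I_{x_0}=\Phi$: the instrument covariance condition $\mc I_{gx_0}\big(U(g)\rho U(g)^*\big)=V(g)\mc I_{x_0}(\rho)V(g)^*$ reduces to channel covariance, and the instrument normalization reduces to trace preservation. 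For this value space there is a single orbit $\orb=\{x_0\}$ with $x_{\orb}=x_0$, stability subgroup $H_{\orb}=G$ (so $\hat{H}_{\orb}=\hat{G}$), trivial section $s_{\orb}(x_0)=e$, and cocycles $\zeta^\eta(g,x_0)=\eta(g^{-1})$.

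First I would apply Theorem~\ref{theor:CovInstrStructure} to $\Phi$ regarded as $\mc I$. It produces a minimal set of $(\Set,U,V)$--intertwiners, which here is simply a linearly independent family $L_{\tj,i,m}\in\mc L(\hil,\mc K)$ indexed by $[\tj]\in\hat{G}$, $i=1,\ldots,D_\tj$, $m=1,\ldots,M_\tj$. Since $H_{\orb}=G$, condition~\eqref{eq:Hinv} becomes exactly~\eqref{eq:ChGinv}, and~\eqref{eq:instrkaava} evaluated at $g=e$ (the unique outcome being $gH_{\orb}=G$) becomes~\eqref{eq:CovChStr}. For the normalization, set $T:=\sum_{[\tj]\in\hat{G}}\sum_{i=1}^{D_\tj}\sum_{m=1}^{M_\tj}L_{\tj,i,m}^*L_{\tj,i,m}$. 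The step in the proof of Theorem~\ref{theor:CovInstrStructure} showing $\sum_i U(h)L_{\eta,i,m}^{\orb\,*}L_{\eta,i,m}^{\orb}U(h)^*=\sum_i L_{\eta,i,m}^{\orb\,*}L_{\eta,i,m}^{\orb}$ for $h\in H_{\orb}$, applied with $H_{\orb}=G$, gives $U(g)TU(g)^*=T$ for every $g\in G$; hence the $G$--averaged identity~\eqref{eq:normitus}, which here reads $(\#G)^{-1}\sum_{g\in G}U(g)TU(g)^*=\id_\hil$, collapses to $T=\id_\hil$, i.e.~\eqref{eq:ChNormitus}. The converse statement follows by running the same observation in reverse: operators obeying~\eqref{eq:ChGinv} and~\eqref{eq:ChNormitus} obey~\eqref{eq:Hinv} and~\eqref{eq:normitus} for the one-point value space, so the converse part of Theorem~\ref{theor:CovInstrStructure} makes~\eqref{eq:CovChStr} an $(\Set,U,V)$--covariant instrument, i.e.\ a $(U,V)$--covariant channel.

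For the extremality claim I would first note that the convex set of $(U,V)$--covariant channels coincides with the convex set of $(\Set,U,V)$--covariant instruments for the one-point $\Set$, so $\Phi$ is an extreme channel of the $(U,V)$--covariance structure iff $\mc I$ is an extreme instrument of the $(\Set,U,V)$--covariance structure. Theorem~\ref{theor:ExtInstr} expresses the latter as linear independence of the operators $\sum_{g\in G}\sum_i U(g)L_{\tj,i,m}^*L_{\tj,i,n}U(g)^*$ over $m,n=1,\ldots,M_\tj$ and $[\tj]\in\hat{G}$. By the same $U$--invariance argument (the displayed computation in the proof of Theorem~\ref{theor:ExtInstr} with $H_{\orb}=G$), each operator $S_{\tj,m,n}:=\sum_i L_{\tj,i,m}^*L_{\tj,i,n}$ satisfies $U(g)S_{\tj,m,n}U(g)^*=S_{\tj,m,n}$, whence $\sum_{g\in G}U(g)S_{\tj,m,n}U(g)^*=(\#G)\,S_{\tj,m,n}$; thus linear independence of the former family is equivalent to linear independence of $\{S_{\tj,m,n}\}=\{\sum_i L_{\tj,i,m}^*L_{\tj,i,n}\}$, which is the stated criterion.

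This argument is essentially bookkeeping, so I do not anticipate a serious obstacle; the one point that needs a genuine (if short) argument, and that I would take care to spell out, is the passage from the $G$--averaged normalization and extremality expressions of Theorems~\ref{theor:CovInstrStructure} and~\ref{theor:ExtInstr} to the un-averaged ones appearing in the corollary, which rests on the $U$--invariance of $\sum_i L_{\tj,i,m}^*L_{\tj,i,n}$ forced by~\eqref{eq:ChGinv}.
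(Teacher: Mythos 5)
Your proposal is correct and follows exactly the route the paper intends: the paper states that Corollary~\ref{cor:CovCh} is a direct consequence of Theorems~\ref{theor:CovInstrStructure} and~\ref{theor:ExtInstr} applied to the singleton outcome space with trivial $G$-action, and remarks afterwards that the $G$-sums may be dropped precisely because Equation~\eqref{eq:ChGinv} forces each $\sum_{i=1}^{D_\tj}L_{\tj,i,m}^*L_{\tj,i,n}$ to commute with $U$. Your spelled-out verification of that commutation (and its use to collapse the averaged normalization~\eqref{eq:normitus} to~\eqref{eq:ChNormitus} and the averaged extremality family to $\{\sum_iL_{\tj,i,m}^*L_{\tj,i,n}\}$) is exactly the bookkeeping the paper leaves implicit, so there is no gap and no deviation in method.
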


Suppose that $L_{\tj,i,m}:\hil\to\mc K$, $[\tj]\in\hat{G}$, $i=1,\ldots,D_\tj$, $m=1,\ldots,M_\tj$, where $M_\tj\in\{0\}\cup\N$, satisfy the condition of Equation \eqref{eq:ChGinv}. It easily follows that, for any $[\tj]\in\hat{G}$ and $m,\,n=1,\ldots,M_\tj$, the operator $\sum_{i=1}^{D_\tj}L_{\tj,i,m}^*L_{\tj,i,n}$ commutes with $U$. This is why we may omit the $G$-summations in the normalization condition of Equation \eqref{eq:ChNormitus}, the channel characterization of Equation \eqref{eq:CovChStr}, and the operators essential for the extremality characterization of Corollary \ref{cor:CovCh}.

\section{Covariant continuous instruments\\ associated with a compact stability subgroup}\label{sec:ContInstr}

We now concentrate on continuous quantum measurements possibly in infinite-dimensional systems and their symmetry properties. For this, we explicitly define $\mc L(\hil)$ as the algebra of bounded operators on the Hilbert space $\hil$ and $\mc T(\hil)$ as the trace class on $\hil$. Whenever $(\Set,\Sigma)$ is a measurable space (i.e.,\ $\Set\neq\emptyset$ and $\Sigma$ is a $\sigma$-algebra of subsets of $\Set$) and $\hil$ and $\mc K$ are Hilbert spaces, we say that a map $\mc I:\Sigma\times\mc T(\hil)\to\mc T(\mc K)$ is an instrument with the value space $(\Set,\Sigma)$, input space $\hil$, and output space $\mc K$ if, for any $X\in\Sigma$, $\mc I(X,\cdot)$ is an operation, $\mc I(\Omega,\cdot)$ is trace preserving, and, for any disjoint sequence $X_1,\,X_2,\ldots\in\Sigma$ and any $\rho\in\mc T(\hil)$,
$$
\mc I\big(\cup_{i=1}^\infty X_i,\rho\big)=\sum_{i=1}^\infty\mc I(X_i,\rho)
$$
where the sum converges w.r.t.\ the trace norm topology. For any instrument $\mc I:\Sigma\times\mc T(\hil)\to\mc T(\mc K)$, we also define the Heisenberg instrument $\mc I^*:\Sigma\times\mc L(\mc K)\to\mc L(\hil)$ through
$$
\tr{\rho\mc I^*(X,B)}=\tr{\mc I(X,\rho)B},\qquad\rho\in\mc T(\hil),\quad B\in\mc L(\mc K),\quad X\in\Sigma,
$$
i.e.,\ for all $X\in\Sigma$, $\mc I^*(X,\cdot)$ is the Heisenberg dual operation of $\mc I(X,\cdot)$. If $\Set$ is a topological space, we denote the corresponding Borel $\sigma$-algebra by $\mc B(\Set)$; there is never any ambiguity about which is the topology concerned, so the topology is not specifically indicated in this notation.

Let $G$ be a group. We say that a set $\Set$ is a [transitive] $G$-space if there is a map $G\times\Set\ni(g,x)\mapsto gx\in\Set$ such that $ex=x$ for all $x\in\Set$ and $(gh)x=g(hx)$ for all $g,\,h\in G$ and $x\in\Set$ [and, for any $x,\,y\in\Set$, there is $g\in G$ such that $gx=y$]. Suppose that $(\Set,\Sigma)$ is a measurable space where $\Set$ is a $G$-space and that, for any $g\in G$, the map $x\mapsto gx$ is measurable. Let $\hil$ and $\mc K$ be Hilbert spaces and $U:G\to\mc U(\hil)$ and $V:G\to\mc U(\mc K)$ be unitary representations. We say that an instrument $\mc I:\Sigma\times\mc T(\hil)\to\mc T(\mc K)$ is {\it $(\Sigma,U,V)$--covariant} if, for any $X\in\Sigma$, $\rho\in\mc T(\hil)$, and $g\in G$,
$$
\mc I\big(gX,U(g)\rho U(g)^*\big)=V(g)\mc I(X,\rho)V(g)^*.
$$
In the special case $\mc K=\C$, the set of $(\Sigma,U,V)$--covariant instruments simplifies to the set of $(\Sigma,U)$--covariant observables (POVMs), i.e.,\ weakly $\sigma$-additive maps $\msf M:\Sigma\to\mc L(\hil)$ such that $\msf M(\Omega)=\id_\hil$ (normalization) and
$$
U(g)\msf M(X)U(g)^*=\msf M(gX),\qquad g\in G,\quad X\in\Sigma.
$$

For any $(\Sigma,U,V)$--covariant instrument $\mc I$, there is a quadruple $(\mc M,\msf P,\overline{U},J)$ consisting of a Hilbert space $\mc M$, a projection-valued measure (PVM) $\msf P:\Sigma\to\mc L(\mc M)$ (a projection-valued set function which is weakly or, equivalently, strongly $\sigma$-additive, $\msf P(\emptyset)=0$, and $\msf P(G/H)=\id_{\mc M}$), a unitary representation $\overline{U}:G\to\mc U(\mc M)$, and an isometry $J:\hil\to\mc K\otimes\mc M$ so that
\begin{itemize}
\item[(i)] $\mc I^*(X,B)=J^*\big(B\otimes\msf P(X)\big)J$ for all $X\in\Sigma$ and $B\in\mc L(\mc K)$,
\item[(ii)] $JU(g)=\big(V(g)\otimes\overline{U}(g)\big)J$ for all $g\in G$,
\item[(iii)] $\overline{U}(g)\msf P(X)\overline{U}(g)^*=\msf P(gX)$ for all $g\in G$ and $X\in\Sigma$, and
\item[(iv)] the vectors $\big(B\otimes\msf P(X)\big)J\fii$, $B\in\mc L(\mc K)$, $X\in\Sigma$, $\fii\in\hil$, span a dense subspace of $\mc K\otimes\mc M$.
\end{itemize}
The existence of a triple $(\mc M,\msf P,J)$ satisfying items (i) and (iv) above is well known, and the existence of the unitary representation $\overline{U}$ satisfying items (ii) and (iii) is proven essentially in the same way as in the finite-outcome and finite-dimensional case which is studied in Appendix B.

Let $G$ be a locally compact second-countable group which is Hausdorff. If $\Omega$ is locally compact, second countable, and Hausdorff and $\Omega$ is a transitive $G$-space such that the map $G\times\Omega\ni(g,\omega)\mapsto g\omega\in\Omega$ is continuous, there is a closed subgroup $H\leq G$ such that $\Omega$ is homeomorphic with $G/H$ (space of left cosets) and, in this identification, the $G$-action is of the form
$$
g(g'H)=(gg')H,\qquad g,\,g'\in G.
$$ 

From now on, we assume that $\hil$ and $\mc K$ are separable Hilbert spaces, $G$ is a locally compact and second-countable group which is Hausdorff, $H\leq G$ is a closed subgroup, and $U:G\to\mc U(\hil)$ and $V:G\to\mc U(\mc K)$ are strongly continuous\footnote{That is, e.g.,\ $g\mapsto U(g)\fii$ is continuous for any $\fii\in\hil$.} unitary representations. We will concentrate on $\big(\mc B(G/H),U,V\big)$--covariant instruments and $\big(\mc B(G/H),U,V\big)$--covariant dilations which we will call, for short, {\it $(G/H,U,V)$--covariant}. In the same context, we call $\big(\mc B(G/H),U\big)$--covariant observables as {\it $(G/H,U)$--covariant}. Note that, we are now restricting to the transitive, i.e.,\ single-orbit case. We also fix a quasi-$G$--invariant measure $\mu:\mc B(G/H)\to[0,\infty]$ and a measurable section $s:G/H\to G$ for the factor projection $g\mapsto gH$ such that $s(H)=e$. It is well known \cite{varadarajan} that, fixing a left Haar measure $\mu_G$ for $G$, there is a $(\mu_G\times\mu)$-measurable function $\rho:G\times G/H\to(0,\infty)$ coinciding $(\mu_G\times\mu)$-a.e.\ with the function $(g,x)\mapsto(d\mu_g/d\mu)(x)$ where $\mu_g(X)=\mu(gX)$ for all $X\in\mc B(G/H)$ and $\rho(gh,x)=\rho(g,hx)\rho(h,x)$ for $(\mu_G\times\mu_G\times\mu)$-a.a.\ $(g,h,x)\in G\times G\times G/H$. As in Section \ref{sec:finInstr}, we define, for any unitary representation $\pi:H\to\mc U(\hil_\pi)$, the cocycle $\zeta^\pi:G\times G/H\to\mc U(\hil_\pi)$ through $\zeta^\pi(g,x)=\pi\big(s(x)^{-1}g^{-1}s(gx)\big)$ for all $g\in G$ and $x\in G/H$. The cocycle conditions \eqref{eq:cocycle} still hold. In this setting, a $(G/H,U,V)$--covariant instrument $\mc I$ has a very particular minimal $(G/H,U,V)$--covariant dilation $(\mc M,\msf P,\overline{U},J)$ \cite{CaHeTo2009,HaPe15}: There is a strongly continuous unitary representation $\pi:H\to\mc U(\hil_\pi)$ in some separable Hilbert space $\hil_\pi$ such that $\mc M=L^2_\mu\otimes\hil_\pi$ (which we identify with the Hilbert space of $\mu$-equivalence classes of $\mu$-square-integrable functions $F:G/H\to\hil_\pi$), $\msf P=\msf P_\pi^G$ defined through 
\begin{equation}\label{eq:genIndP}
\big(\msf P_\pi^G(X)F\big)(x)=\chi_X(x)F(x),\qquad X\in\mc B(G/H),\quad F\in L^2_\mu\otimes\hil_\pi,\quad x\in G/H,
\end{equation}
and $\overline{U}=U_\pi^G$ defined through
\begin{equation}\label{eq:genIndU}
\big(U_\pi^G(g)F\big)(x)=\sqrt{\rho(g^{-1},x)}\zeta^\pi(g^{-1},x)F(g^{-1}x),\qquad g\in G,\quad F\in L^2_\mu\otimes\hil_\pi,\quad x\in G/H.
\end{equation}
The representation $U_\pi^G$ is called as the representation {\it induced from $\pi$} and $(\msf P_\pi^G,U_\pi^G)$ is the {\it canonical system of imprimitivity associated to $\pi$}; note that $\msf P_\pi^G$ is a $(G/H,U_\pi^G)$--covariant PVM.

We additionally make the following more specific assumptions:
\begin{itemize}
\item[(a)] There is a dense subspace $\mc D$ of $\hil$ which is $U$--invariant, i.e.,\ $U(g)\mc D\subseteq\mc D$ for all $g\in G$.
\item[(b)] There is a norm $\|\cdot\|_1:\mc D\to[0,\infty)$ so that $(\mc D,\|\cdot\|_1)$ is a separable normed space. Moreover, for all $g\in G$ and $\fii\in\mc D$, $\|U(g)\fii\|_1=\|\fii\|_1$.
\item[(c)] For any $(G/H,U)$--covariant POVM $\msf M$, there is a strongly continuous unitary representation $\pi_0:H\to\hil_{\pi_0}$ in a separable Hilbert space $\hil_{\pi_0}$ and a linear operator $\Theta:\mc D\to\hil_{\pi_0}$ such that $\|\Theta\fii\|\leq\|\fii\|_1$ for all $\fii\in\mc D$, $\Theta U(h)=\pi_0(h)\Theta$ for all $h\in H$ and, defining the linear map $J:\hil\to\hil_{\pi_0}^G$ through $(J\fii)(x)=\pi_0\big(s(gH)^{-1}g\big)\Theta U(g)^*\fii$ for all $\fii\in\mc D$ and $g\in G$, $(L_\mu^2\otimes\hil_{\pi_0},\msf P_{\pi_0}^G,U_{\pi_0},J)$ is a minimal $(G/H,U)$--covariant Na\u{\i}mark dilation for $\msf M$ (i.e.,\ a minimal $(G/H,U,V_0)$--covariant Stinespring dilation for $\msf M$ when $\msf M$ is viewed as an instrument with the trivial output space $\C$ where the representation $V_0$ is chosen as trivial).
\end{itemize}

As an example, suppose that $G$ Abelian and define $\mc D\subseteq\hil$ and the norm $\|\cdot\|_1$ on $\mc D$ in the same way as in \cite{HaPe11} just after Proposition 3.1. It is quite easily seen that $(\mc D,\|\cdot\|_1)$ is a separable metric space and the results of \cite{HaPe11}, imply that items (a), (b), and (c) above hold when we set $\Theta=\mathfrak{W}$ where $\mathfrak{W}$ is the linear map defining a $(G/H,U)$--covariant POVM appearing in Theorem 3.1 of \cite{HaPe11}. As a second example, let $G$ unimodular and let $H$ be compact and assume that the decomposing measure $\mu_U$ for $U$ is absolutely continuous with respect to the Plancherel measure $\mu_{\hat{G}}$. Moreover, define the subspace $\mc D\subseteq\hil$ in the same way as in \cite{HaPe15} and, using the notations introduced therein, define the norm $\|\cdot\|_1$ through
$$
\|\zeta\star\xi\|_1=\int_{\hat{G}}\|\zeta(\gamma)\|\|\xi(\gamma)\|\,d\mu_{\hat{G}}(\gamma).
$$
It again follows quite easily that $(\mc D,\|\cdot\|_1)$ is separable and, perusing the proof of Theorem 3 of \cite{HaPe15}, one finds that $\|\Lambda\fii\|\leq\|\fii\|_1$ for all $\fii\in\mc D$ where $\Lambda$ is the operator of Theorem 3 of \cite{HaPe15} associated to a $(G/H,U)$--covariant observable. The same theorem states that items (a), (b), and (c) above hold upon setting, for each $(G/H,U)$--covariant POVM $\msf M$, $\Theta=\Lambda$ where $\Lambda$ is the linear map associated to $\msf M$ by this theorem. A third example where conditions (a), (b), and (c) hold is the case like that above, except that $U$ is square integrable, as the results of \cite{kiukas_etal2006} and Section 6.1 of \cite{HaPe15} show.

%Let us denote the representation dual of $G$ by $\hat{G}$ and pick representatives $\gamma:G\to\mc U\big(\mc K(\gamma)\big)$ for each $[\gamma]\in\hat{G}$. Although this notation is slightly ambiguous, it should cause no confusion. Using the type-I property of $G$ and the strong continuity of $U$, we know that there is a measurable field $\hat{G}\ni[\gamma]\mapsto\mc M(\gamma)$ of Hilbert spaces and a standard Borel measure $\mu_U:\mc B(\hat{G})\to[0,\infty]$ (where $\hat{G}$ is equipped with the Fell topology) such that we may identify
%$$
%\hil=\int_{\hat{G}}^\oplus\mc K(\gamma)\otimes\mc M(\gamma)\,d\mu_U([\gamma])
%$$
%where $U$ is diagonalized, i.e.,\ for any $\fii\in\hil$, $g\in G$, and $[\gamma]\in\hat{G}$,
%$$
%\big(U(g)\fii\big)([\gamma])=\big(\gamma(g)\otimes\id_{\mc M(\gamma)}\big)\fii([\gamma]).
%$$
%We denote by $\mc D$ the linear hull of product form vectors $[\gamma]\mapsto\varkappa([\gamma])\otimes\lambda([\gamma])$ where $\hat{G}\ni[\gamma]\mapsto\varkappa([\gamma])\in\mc K(\eta)$ and $\hat{G}\ni[\gamma]\mapsto\lambda([\gamma])\in\mc M(\gamma)$ such that $\int_{\hat{G}}\|\varkappa([\gamma])\|\|\lambda([\gamma])\|\,d\mu_U([\gamma])<\infty$. Clearly $\mc D$ is a dense subspace of $\hil$ and it is $U$--invariant, i.e.,\ for any $\fii\in\mc D$ and $g\in G$, $U(g)\fii\in\mc D$. We define the 1-norm $\|\cdot\|_1:\mc D\to[0,\infty)$,
%$$
%\|\fii\|_1=\int_{\hat{G}}\|\fii([\gamma])\|\,d\mu_U([\gamma]),\qquad\fii\in\mc D.
%$$

Using conditions (a), (b), and (c), one can prove a counterpart of Theorem 4 of \cite{HaPe15} using same methods as we will employ shortly. However, in order to obtain more interesting results, we have to assume that
\begin{itemize}
\item[(d)] $H\leq G$ is compact.
\end{itemize}
It hence follows that the dual $\hat{H}$ is countable. As earlier, we pick, for any $[\eta]\in\hat{H}$ a representative $\eta:H\to\mc U(\mc K_\eta)$ and denote by $D_\eta$ the dimension of $\mc K_\eta$ (which is finite). For any $[\eta]\in\hat{H}$, we also fix an orthonormal basis $\{e_{\eta,i}\}_{i=1}^{D_\eta}\subset\mc K_\eta$ and denote
$$
\eta_{i,j}(h):=\<e_{\eta,i}|\eta(h)e_{\eta,j}\>,\qquad i,\,j=1,\ldots,D_\eta,\quad h\in H.
$$
Moreover, for any $[\eta]\in\hat{H}$, we define the functions $\zeta^\eta_{i,j}:G\times G/H\to\C$ through the matrix elements of $\zeta^\eta$ in the basis $\{e_{\eta,i}\}_{i=1}^{D_\eta}$. Since $H$ is compact, $G/H$ allows an essentially unique regular $G$--invariant measure $\mu:\mc B(G/H)\to[0,\infty]$, i.e.,\ $\mu(gX)=\mu(X)$ for all $X\in\mc B(G/H)$. We keep this measure fixed in the sequel implying that we may assume $\rho\equiv1$ in the definition \eqref{eq:genIndU} of the induced representation.

Let us make a useful definition. Below, we say that, given a set $A$, a set $\{L_a\}_{a\in A}$ of linear operators $L_a:\mc D\to\mc K$ is {\it $(\mc K,\mc D)$-weakly independent} if, for $(\beta_a)_{a\in A}\in\ell^2_A$, the condition $\sum_{a\in A}\beta_a\<\psi|L_a\fii\>=0$ for any $\psi\in \mc K$ and any $\fii\in\mc D$ implies $\beta_a=0$ for all $a\in A$. Moreover, the notation $m=1,\ldots,M$ is to be taken as usual when $M\in\N$; if $M=0$, this means that the set of indices $m$ discussed is empty; and if $M=\infty$, the set of indices $m$ is the entire $\N$.

\begin{definition}
We say that, given $M_\eta\in\N\cup\{0,\infty\}$ for any $[\eta]\in\hat{H}$, a set
$$
\{L_{\eta,i,m}\,|\,[\eta]\in\hat{H},\ i=1,\ldots,D_\eta,\ m=1,\ldots,M_\eta\}
$$
of linear operators $L_{\eta,i,m}:\mc D\to\mc K$ is a {\it minimal set of $(G/H,U,V)$--intertwiners} if it is $(\mc K,\mc D)$-weakly independent,
\begin{equation}\label{eq:genHinv}
L_{\eta,i,m}U(h)=\sum_{j=1}^{D_\eta}\eta_{i,j}(h)V(h)L_{\eta,j,m}
\end{equation}
%$(\mc K,\mc D)$-weakly
for all $[\eta]\in\hat{H}$, $i=1,\ldots,D_\eta$, $m=1,\ldots,M_\eta$, and $h\in H$, %(i.e.,\ whenever $\psi\in\mc K$ and $\fii\in\mc D$, we have $\<\psi|L_{\eta,i,m}U(h)\fii\>=\sum_{j=1}^{D_\eta}\eta_{i,j}(h)\<\psi|V(h)L_{\eta,j,m}\fii\>$ for all $[\eta]\in\hat{H}$, $i=1,\ldots,D_\eta$, $m=1,\ldots,M_\eta$, and $h\in H$),
\begin{equation}\label{eq:DNormi}
\sum_{[\eta]\in\hat{H}}\sum_{i=1}^{D_\eta}\sum_{m=1}^{M_\eta}\|L_{\eta,i,m}\fii\|^2\leq\|\fii\|_1^2,\qquad\fii\in\mc D,
\end{equation}
and
\begin{equation}\label{eq:genNormitus}
\int_{G/H}\sum_{[\eta]\in\hat{H}}\sum_{i=1}^{D_\eta}\sum_{m=1}^{M_\eta}\|L_{\eta,i,m}U(g)^*\fii\|^2\,d\mu(gH)=\|\fii\|^2,\qquad\fii\in\mc D.
\end{equation}
\end{definition}

Note that, using Equation \eqref{eq:genHinv}, the integrand in Equation \eqref{eq:genNormitus} is found to be invariant in the replacement $g\to gh$ whenever $h\in H$ in exactly the same way as earlier in Section \ref{sec:finInstr}; this allows us to interpret the integrand as a function on $G/H$. The following theorem is a generalized version of Theorem \ref{theor:CovInstrStructure}.

\begin{theorem}\label{theor:genCovInstrStr}
Let $\mc I:\mc B(G/H)\times\mc T(\hil)\to\mc T(\mc K)$ be a $(G/H,U,V)$--covariant instrument. There are, for any $[\eta]\in\hat{H}$, $M_\eta\in\N\cup\{0,\infty\}$ and a minimal set
$$
\{L_{\eta,i,m}:\mc D\to\mc K\,|\,m=1,\ldots,M_\eta,\ i=1,\ldots,D_\eta,\ [\eta]\in\hat{H}\}
$$
of $(G/H,U,V)$--intertwiners such that, for all $X\in\mc B(G/H)$, $B\in\mc L(\mc K)$, and $\fii\in\mc D$,
\begin{equation}\label{eq:genCovInstrStr}
\<\fii|\mc I^*(X,B)\fii\>=\int_X\sum_{[\eta]\in\hat{H}}\sum_{i=1}^{D_\eta}\sum_{m=1}^{M_\eta}\<V(g)L_{\eta,i,m}U(g)^*\fii|BV(g)L_{\eta,i,m}U(g)^*\fii\>\,d\mu(gH).
\end{equation}
On the other hand, given $M_\eta\in\N\cup\{0,\infty\}$ for any $[\eta]\in\hat{H}$ and a minimal set of $(G/H,U,V)$--intertwiners consisting of $L_{\eta,i,m}:\mc D\to\mc K$, $[\eta]\in\hat{H}$, $i=1,\ldots,D_\eta$, $m=1,\ldots,M_\eta$, Equation \eqref{eq:genCovInstrStr} defines a $(G/H,U,V)$--covariant instrument.
\end{theorem}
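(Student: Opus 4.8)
The plan is to transcribe the proof of Theorem~\ref{theor:CovInstrStructure} into the present setting, replacing the finite stability group $H_{\orb}$ (and its Peter--Weyl decomposition) by the compact group $H$, and using assumptions (a)--(d) to make rigorous sense of the pointwise objects that, in the finite-dimensional case, were literally fibres of the dilation space.

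For the ``only if'' assertion I would start from the minimal $(G/H,U,V)$--covariant dilation $(\mc M,\msf P,\overline U,J)$ of $\mc I$, which by the results recalled before conditions (a)--(d) has the special form $\mc M=L^2_\mu\otimes\hil_\pi$, $\msf P=\msf P^G_\pi$, $\overline U=U^G_\pi$ for a strongly continuous $\pi:H\to\mc U(\hil_\pi)$. The key analytic input is the instrument analogue of condition (c): using (a), (b), (c) and the method of \cite[Thm.~4]{HaPe15}, one shows that $J$ restricted to $\mc D$ is given pointwise by
$$
(J\fii)(gH)=\big(V(g)\otimes\zeta^\pi(g^{-1},gH)\big)\,\Theta\big(U(g)^*\fii\big),\qquad\fii\in\mc D,\ g\in G,
$$
for a linear operator $\Theta:\mc D\to\mc K\otimes\hil_\pi$ with $\|\Theta\fii\|\leq\|\fii\|_1$ and $\Theta U(h)=\big(V(h)\otimes\pi(h)\big)\Theta$ for $h\in H$ (one evaluates the identity $JU(g)=\big(V(g)\otimes U^G_\pi(g)\big)J$ at the point $gH$ and sets $\Theta\fii:=(J\fii)(H)$). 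Since $H$ is compact, Peter--Weyl gives $\hil_\pi=\bigoplus_{[\eta]\in\hat{H}}\mc K_\eta\otimes\mc M_\eta$ with $\pi=\bigoplus_{[\eta]}\eta\otimes\id_{\mc M_\eta}$; setting $M_\eta:=\dim\mc M_\eta\in\{0\}\cup\N\cup\{\infty\}$, fixing an orthonormal basis $\{f_{\eta,m}\}_{m=1}^{M_\eta}$ of $\mc M_\eta$, and defining $L_{\eta,i,m}\fii:=(\id_{\mc K}\otimes\langle e_{\eta,i}\otimes f_{\eta,m}|)\Theta\fii$, the bound on $\Theta$ is exactly Equation~\eqref{eq:DNormi} (Parseval), the relation $\Theta U(h)=(V(h)\otimes\pi(h))\Theta$ together with $\eta(h)^*e_{\eta,i}=\sum_j\overline{\eta_{i,j}(h)}\,e_{\eta,j}$ yields Equation~\eqref{eq:genHinv} after taking matrix elements, and $(\mc K,\mc D)$-weak independence of the $L_{\eta,i,m}$ follows from minimality of the dilation (a nontrivial $\ell^2$-relation $\sum\beta_{\eta,i,m}\langle\psi|L_{\eta,i,m}\fii\rangle=0$ produces a nonzero $w\in\hil_\pi$ with $(\id_{\mc K}\otimes\langle w|)\Theta=0$, i.e.\ $\mc K\otimes\C w\perp\Theta\mc D$, contradicting density of the dilation). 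Finally I would plug this pointwise form of $J$ into $\langle\fii|\mc I^*(X,B)\fii\rangle=\langle J\fii|(B\otimes\msf P(X))J\fii\rangle=\int_X\langle(J\fii)(x)|(B\otimes\id_{\hil_\pi})(J\fii)(x)\rangle\,d\mu(x)$ and expand in the Peter--Weyl basis; the cross terms in $[\eta]$, $j$, $m$ vanish because $\zeta^\eta(g^{-1},gH)$ is unitary (equivalently $\sum_i\zeta^\eta_{i,j}(g^{-1},gH)\overline{\zeta^\eta_{i,k}(g^{-1},gH)}=\delta_{j,k}$, exactly as in the finite case), leaving Equation~\eqref{eq:genCovInstrStr}. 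Equation~\eqref{eq:genNormitus} is then just the statement $\|J\fii\|^2=\|\fii\|^2$ (i.e.\ trace preservation of $\mc I(G/H,\cdot)$), read off after noting via Equation~\eqref{eq:genHinv} that its integrand is right $H$-invariant and hence descends to $G/H$.

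For the converse, given $M_\eta$ and a minimal set of $(G/H,U,V)$--intertwiners, I would set $\hil_\pi:=\bigoplus_{[\eta]\in\hat{H}}\mc K_\eta\otimes\C^{M_\eta}$ (with $\C^\infty:=\ell^2$), $\pi:=\bigoplus_{[\eta]}\eta\otimes\id_{\C^{M_\eta}}$, and define $J:\mc D\to\mc K\otimes(L^2_\mu\otimes\hil_\pi)$ by the pointwise formula $(J\fii)(gH):=\sum_{[\eta],j,m}\big(V(g)L_{\eta,j,m}U(g)^*\fii\big)\otimes\zeta^\eta(g^{-1},gH)e_{\eta,j}\otimes f_{\eta,m}$, the continuous analogue of Equation~\eqref{eq:apuapu1}. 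One then checks that Equation~\eqref{eq:genHinv} makes this independent of the representative $g$ of the coset, that $\|(J\fii)(gH)\|^2=\sum_{[\eta],i,m}\|L_{\eta,i,m}U(g)^*\fii\|^2$ so that Equation~\eqref{eq:genNormitus} says precisely $\|J\fii\|^2=\|\fii\|^2$ and $J$ extends to an isometry $\hil\to\mc K\otimes L^2_\mu\otimes\hil_\pi$, that the cocycle identities give $JU(g)=\big(V(g)\otimes U^G_\pi(g)\big)J$, and that $(\mc K,\mc D)$-weak independence makes $(L^2_\mu\otimes\hil_\pi,\msf P^G_\pi,U^G_\pi,J)$ a \emph{minimal} $(G/H,U,V)$--covariant dilation. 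Then $\mc I^*(X,B):=J^*\big(B\otimes\msf P^G_\pi(X)\big)J$ defines a $(G/H,U,V)$--covariant instrument (complete positivity and $\sigma$-additivity are automatic, and $\mc I(G/H,\cdot)$ is trace preserving because $J^*J=\id_\hil$), and the same computation as above shows it satisfies Equation~\eqref{eq:genCovInstrStr}.

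The main obstacle is purely analytic and concentrated in two places, both absent in the finite case: (i) producing the pointwise form of $J$ on $\mc D$ from the abstract minimal covariant dilation, i.e.\ establishing the instrument version of condition (c) in the spirit of \cite[Thm.~4]{HaPe15}, which requires passing from the $L^2_\mu$-equivalence class $J\fii$ to a genuine, strongly continuous fibre-valued representative before one may speak of its value ``at $H$''; and (ii) handling the sums over $(\eta,i,m)$ when some $M_\eta=\infty$, namely the $\ell^2$-summability in Equations~\eqref{eq:DNormi}--\eqref{eq:genNormitus} and the equivalence between $(\mc K,\mc D)$-weak independence and minimality of the dilation, where the finite-dimensional ``fibre'' argument must be replaced by a limiting argument over neighbourhoods of $H$ shrinking to the ($\mu$-null) singleton $\{H\}$.
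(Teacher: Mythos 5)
Your outline follows the paper's route closely: push the finite-dimensional Kraus decomposition through Peter--Weyl for the now-countable $\hat{H}$, read Equation~\eqref{eq:genHinv} off the $H$-intertwining of the fibre map, obtain Equation~\eqref{eq:DNormi} from Parseval and the bound $\|\Theta\fii\|\le\|\fii\|_1$, cancel the cocycle by unitarity to get Equation~\eqref{eq:genCovInstrStr}, and identify Equation~\eqref{eq:genNormitus} with trace preservation after noting $H$-invariance of the integrand. The converse direction (defining $J$ pointwise and checking isometry/intertwining/minimality, which the paper spells out in Lemma~\ref{lemma:genminlemma}) also matches.

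The genuine gap is exactly where you flag it, but your proposed route around it is not what the paper does, and as stated it does not close. You propose to obtain the fibre form of $J$ by a direct "instrument version of condition (c)," setting $\Theta\fii:=(J\fii)(H)$ after choosing a good representative. But the minimality hypothesis available a priori only says that $(B\otimes\msf P^G_\pi(X))J\fii$ span a dense subspace as $X$ ranges over \emph{positive-measure} Borel sets; since $\{H\}$ is $\mu$-null, $\msf P^G_\pi(\{H\})=0$, so minimality says nothing directly about the fibre at $H$, and your claim that $\mc K\otimes\C w\perp\Theta\mc D$ contradicts density of the dilation does not follow without more work. What the paper actually does is factor $J=WW_0$: condition (c) applies \emph{as stated} to the POVM $\msf M$ measured by $\mc I$, giving the pointwise form of the Na\u{\i}mark isometry $W_0:\hil\to L^2_\mu\otimes\hil_{\pi_0}$ via $\Theta:\mc D\to\hil_{\pi_0}$; and Proposition 6 of \cite{HaPe15} produces a \emph{decomposable} isometry $W:L^2_\mu\otimes\hil_{\pi_0}\to\mc K\otimes L^2_\mu\otimes\hil_\pi$ with fibres $W(gH)=\big(V(g)\otimes\zeta^\pi(g^{-1},gH)\big)\Lambda\zeta^{\pi_0}(g^{-1},gH)^*$ built from a single $H$-intertwiner $\Lambda$, and guarantees the resulting dilation's minimality in the $B\otimes\msf P(X)$ sense needed for both the weak-independence argument and the extremality results. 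The composition $J(gH)=\big(V(g)\otimes\zeta^\pi(g^{-1},gH)\big)\Lambda\Theta U(g)^*$ then has the fibre form you posit, with your $\Theta$ being the paper's $\Lambda\Theta$. So your outline is right, but the step you call "the main obstacle" is precisely the content of Proposition 6 of \cite{HaPe15}, which your proposal invokes implicitly ("the method of Thm.\ 4") without identifying or reproducing; a complete proof has to either cite it or rederive the decomposable-dilation lemma.
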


\begin{proof}
Let $\mc I:\mc B(G/H)\times\mc T(\hil)\to\mc T(\mc K)$ be an $(\Set,U,V)$--covariant instrument and let $\msf M:\mc B(G/H)\to\mc L(\hil)$ be the (possibly continuous) quantum observable (POVM) measured by $\mc I$, i.e.,\ $\tr{\rho\msf M(X)}=\tr{\mc I(X,\rho)}$ for all $\rho\in\mc T(\hil)$ and $X\in\mc B(G/H)$. According to \cite{Cattaneo} (see also \cite{HaPe15}), there is a strongly continuous unitary representation $\pi_0:H\to\mc U(\hil_{\pi_0})$ in a separable Hilbert space $\hil_{\pi_0}$, and an isometry $W_0:\hil\to L^2_\mu\otimes\hil_{\pi_0}$, such that the quadruple $(L^2_\mu\otimes\hil_{\pi_0},\msf P_{\pi_0}^G,U_{\pi_0}^G,W_0)$ is a minimal $(G/H,U)$--covariant Na\u{\i}mark dilation for $\msf M$, i.e.,\ $\msf M(X)=W_0^*\msf P_{\pi_0}^G(X)W_0$ for all $X\in\mc B(G/H)$, $W_0U(g)=U_{\pi_0}^G(g)W_0$ for all $g\in G$, $U_{\pi_0}^G(g)\msf P_0(X)U_{\pi_0}^G(g)^*=\msf P_{\pi_0}^G(gX)$ for all $g\in G$ and $X\in\mc B(G/H)$ (a property which holds for a canonical system of imprimitivity as has been already stated), and the subspace spanned by the vectors $\msf P_{\pi_0}^G(X)W_0\fii$, $X\in\mc B(G/H)$, $\fii\in\hil$, is dense in $L_\mu^2\otimes\hil_{\pi_0}$. Moreover, according to Proposition 6 of \cite{HaPe15}, there is a strongly continuous unitary representation $\pi:H\to\mc U(\hil_\pi)$ in a separable Hilbert space $\hil_\pi$ and an isometry $\Lambda:\hil_{\pi_0}\to\mc K\otimes\hil_\pi$ with the property $\Lambda\pi_0(h)=\big(V(h)\otimes\pi(h)\big)\Lambda$ for all $h\in H$ such that, defining the decomposable isometry $W:L_\mu^2\otimes\hil_{\pi_0}\to \mc K\otimes L_\mu^2\otimes\hil_\pi$ through $(Wf)(x)=W(x)f(x)$ for all $f\in L_\mu^2\otimes\hil_{\pi_0}$ and $x\in G/H$, where
\begin{equation}\label{eq:WKomponentit}
W(gH)=\big(V(g)\otimes\zeta^\pi(g^{-1},gH)\big)\Lambda\zeta^{\pi_0}(g^{-1},gH)^*,\qquad g\in G,
\end{equation}
the vectors $(B\otimes\id_{L_\mu^2\otimes\hil_\pi})Wf$, $B\in\mc L(\mc K)$, $f\in L_\mu^2\otimes\hil_{\pi_0}$, span a dense subspace of $\mc K\otimes L_\mu^2\otimes\hil_\pi$ and $\mc I^*(X,B)=W_0^*\msf P_{\pi_0}^G(X)W^*(B\otimes\id_{L_\mu^2\otimes\hil_\pi})WW_0$ for all $X\in\mc B(G/H)$ and $B\in\mc L(\mc K)$. Noting that $WU_{\pi_0}^G(g)=U_\pi^G(g)W$ for all $g\in G$ and that $W\msf P_{\pi_0}^G(X)=\msf P_\pi^G(X)W$ for all $X\in\mc B(G/H)$, the quadruple $(L_\mu^2\otimes\hil_\pi,\msf P_\pi^G,U_\pi^G,J)$, where $J=WW_0$, is a minimal $(G/H,U,V)$--covariant Stinespring dilation for $\mc I$.

According to item (c) above, the isometry $W_0$ can be chosen so that there is a linear operator $\Theta:\mc D\to\hil_{\pi_0}$ with the property $\Theta U(h)=\pi_0(h)\Theta$ for all $h\in H$ so that $(W_0\fii)(gH)=\zeta^{\pi_0}(g^{-1},gH)\Theta U(g)^*\fii$ for all $g\in G$ and $\fii\in\mc D$. Moreover, $\|\Theta\fii\|\leq\|\fii\|_1$ for all $\fii\in\mc D$. We define the linear operators $J(gH):\mc D\to\mc K\otimes\hil_\pi$ through
$$
J(gH)=W(gH)\zeta^{\pi_0}(g^{-1},gH)\Theta U(g)^*=\big(V(g)\otimes\zeta^\pi(g^{-1},gH)\Lambda\Theta U(g)^*,\qquad g\in G,
$$
where we have used Equation \eqref{eq:WKomponentit}. According to the Peter-Weyl theorem, for each $[\eta]\in\hat{H}$, there is a separable Hilbert space $\mc M_\eta$ so that $\hil_\pi=\bigoplus_{[\eta]\in\hat{H}}\mc K_\eta\otimes\mc M_\eta$ and $\pi(h)=\bigoplus_{[\eta]\in\hat{H}}\eta(h)\otimes\id_{\mc M_\eta}$ for all $h\in H$. Denote, for each $[\eta]\in\hat{H}$, $M_\eta:={\rm dim}\,\mc M_\eta\in\{0,\infty\}\cup\N$, and let $\{f_{\eta,m}\}_{m=1}^{M_\eta}$ be an orthonormal basis for $\mc M_\eta$ for all $[\eta]\in\hat{H}$. Defining, for all $[\eta]\in\hat{H}$, $i=1,\ldots,D_\eta$, and $m=1,\ldots,M_\eta$, the isometry $V_{\eta,i,m}:\mc K\to\mc K\otimes\hil_\pi$ through $V_{\eta,i,m}\psi=\psi\otimes e_{\eta,i}\otimes f_{\eta,m}$ for all $\psi\in\mc K$, we denote $L_{\eta,i,m}:=V_{\eta,i,m}^*\Lambda\Theta$.

Proving that the set consisting of the operators $L_{\eta,i,m}$ is $(\mc K,\mc D)$-weakly independent is carried out in exactly the same way as the corresponding proof in Section \ref{sec:finInstr}. Pick $[\eta]\in\hat{H}$, $i=1,\ldots,D_\eta$, $m=1,\ldots,M_\eta$, and $h\in H$. We have
\begin{align*}
L_{\eta,i,m}U(h)&=V_{\eta,i,m}^*\Lambda\Theta U(h)=V_{\eta,i,m}^*\Lambda\pi_0(h)\Theta=V_{\eta,i,m}^*\big(V(h)\otimes\pi(h)\big)\Lambda\Theta\\
&=V(h)V_{\eta,i,m}^*\big(\id_{\mc K}\otimes\pi(h)\big)\Lambda\Theta=\sum_{j=1}^{D_\eta}\eta_{i,j}(h)V(h)L_{\eta,j,m}
\end{align*}
where we have used $\big(\id_{\mc K}\otimes\pi(h)\big)V_{\eta,i,m}=\sum_{j=1}^{D_\eta}\eta_{j,i}(h)V_{\eta,j,m}$ (which is easily proven) in the final equality, thus proving Equation \eqref{eq:genHinv}. Using the Pythagorean theorem and the fact that $\Lambda$ is an isometry, we find $\sum_{[\eta]\in\hat{H}}\sum_{i=1}^{D_\eta}\sum_{m=1}^{M_\eta}\|L_{\eta,i,m}\fii\|^2=\sum_{[\eta]\in\hat{H}}\sum_{i=1}^{D_\eta}\sum_{m=1}^{M_\eta}\|V_{\eta,i,m}^*\Lambda\Theta\fii\|^2=\|\Lambda\Theta\fii\|^2=\|\Theta\fii\|^2\leq\|\fii\|_1^2$ for all $\fii\in\mc D$, implying Inequality \eqref{eq:DNormi}. Using the fact that, for all $\fii\in\mc D$ and $g\in G$, $(J\fii)(gH)=J(gH)\fii$, we find, for all $\fii\in\mc D$, $X\in\mc B(G/H)$, and $B\in\mc L(\mc K)$,
\begin{align*}
&\<\fii|\mc I^*(X,B)\fii\>=\<J\fii|\big(B\otimes\msf P(X)\big)J\fii\>=\int_X\<J(gH)\fii|(B\otimes\id_{\hil_\pi})J(gH)\fii\>\,d\mu(gH)\\
&=\int_X\<\big(V(g)\otimes\zeta^\pi(g^{-1},gH)\big)\Lambda\Theta U(g)^*\fii|\big(BV(g)\otimes\zeta^\pi(g^{-1},gH)\big)\Lambda\Theta U(g)^*\fii\>\,d\mu(gH)\\
&=\int_X\<\Lambda\Theta U(g)^*\fii|\big(V(g)^*BV(g)\otimes\id_{\hil_\pi}\big)\Lambda\Theta U(g)^*\fii\>\,d\mu(gH)\\
&=\int_X\sum_{[\eta]\in\hat{H}}\sum_{i=1}^{D_\eta}\sum_{m=1}^{M_\eta}\<\Lambda\Theta U(g)^*\fii|V_{\eta,i,m}V(g)^*BV(g)V_{\eta,i,m}^*\Lambda\Theta U(g)^*\fii\>\,d\mu(gH)\\
&=\int_X\sum_{[\eta]\in\hat{H}}\sum_{i=1}^{D_\eta}\sum_{m=1}^{M_\eta}\<V(g)L_{\eta,i,m}U(g)^*\fii|BV(g)L_{\eta,i,m}U(g)^*\fii\>\,d\mu(gH),
\end{align*}
proving Equation \eqref{eq:genCovInstrStr}. The proof of the converse claim is straight-forward and is left for the reader; note that Equation \eqref{eq:genNormitus} corresponds to the normalization condition $\mc I^*(G/H,\id_{\mc K})=\id_\hil$.
\end{proof}

We again have the following elaboration for the final claim of Theorem \ref{theor:genCovInstrStr} stating that we may construct minimal covariant dilations from minimal sets of intertwiners.

\begin{lemma}\label{lemma:genminlemma}
Given, for each $[\eta]\in\hat{H}$, the number $M_\eta\in\{0,\infty\}\cup\N$, let
$$
\{L_{\eta,i,m}:\mc D\to\mc K\,|\,m=1,\ldots,M_\eta,\ i=1,\ldots,D_\eta,\ [\eta]\in\hat{H}\}
$$
be a minimal set of $(G/H,U,V)$--intertwiners and, for all $[\eta]\in\hat{H}$, $i=1,\ldots,D_\eta$, $m=1,\ldots,M_\eta$, and $g\in G$, define
\begin{equation}\label{eq:genapu1}
K_{\eta,i,m}(gH):=\sum_{j=1}^{D_\eta}\zeta^\eta_{i,j}(g^{-1},gH)V(g)L_{\eta,j,m}U(g)^*.
\end{equation}
For each $[\eta]\in\hat{H}$, let $\mc M_\eta$ be an $M_\eta$-dimensional Hilbert space with the orthonormal basis $\{f_{\eta,m}\}_{m=1}^{M_\eta}$ and define the strongly continuous unitary representation $\pi:H\to\mc U(\hil_\pi)$ where $\hil_\pi=\bigoplus_{[\eta]\in\hat{H}}\mc K_\eta\otimes\mc M_\eta$ and $\pi(h)=\bigoplus_{[\eta]\in\hat{H}}\eta(h)\otimes\id_{\mc M_\eta}$ for all $h\in H$ and the isometry $J:\hil\to\mc K\otimes L_\mu^2\otimes\hil_\pi$ such that, for all $\fii\in\mc D$ and $g\in G$,
$$
(J\fii)(gH)=\sum_{[\eta]\in\hat{H}}\sum_{i=1}^{D_\eta}\sum_{m=1}^{M_\eta}K_{\eta,i,m}(gH)\fii\otimes e_{\eta,i}\otimes f_{\eta,m},
$$
the quadruple $(L_\mu^2\otimes\hil_\pi,\msf P_\pi^G,U_\pi^G,J)$ is a minimal $(G/H,U,V)$--covariant Stinespring dilation for the $(G/H,U,V)$--covariant instrument $\mc I$ defined through Equation \eqref{eq:genCovInstrStr}.
\end{lemma}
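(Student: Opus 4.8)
The plan is to follow the proof of Lemma~\ref{lemma:minlemma} almost verbatim, replacing the finite sum $\sum_{x\in\Set}$ by the $\mu$-integral over $G/H$, the finite multiplicity spaces by the (possibly infinite-dimensional) $\mc M_\eta$, and linear independence of the single-point Kraus operators by $(\mc K,\mc D)$-weak independence. First one records, exactly as in the proofs of Theorem~\ref{theor:CovInstrStructure} and Lemma~\ref{lemma:minlemma}, the intertwining relation for the operators of Equation~\eqref{eq:genapu1},
\begin{equation*}
K_{\eta,i,m}(gH)U(g')=\sum_{j=1}^{D_\eta}\zeta^\eta_{i,j}(g'^{-1},gH)\,V(g')\,K_{\eta,j,m}(g'^{-1}gH),
\end{equation*}
a consequence of Equation~\eqref{eq:genapu1}, multiplicativity of $U$ and $V$, and the cocycle identities~\eqref{eq:cocycle} for $\zeta^\eta$. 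Since $U(g)^*\mc D=\mc D$ and, by Inequality~\eqref{eq:DNormi} together with item~(b), $\sum_{[\eta],i,m}\|L_{\eta,i,m}U(g)^*\fii\|^2\leq\|\fii\|_1^2<\infty$, the series defining $(J\fii)(gH)$ converges in $\mc K\otimes\hil_\pi$ for every $gH$ and $\fii\in\mc D$, with measurability in $gH$ inherited from that of the section $s$. Using the unitarity of each matrix $[\zeta^\eta_{i,j}(g^{-1},gH)]_{i,j}$ and of $V(g)$ one gets $\sum_{i=1}^{D_\eta}\|K_{\eta,i,m}(gH)\fii\|^2=\sum_{j=1}^{D_\eta}\|L_{\eta,j,m}U(g)^*\fii\|^2$, so integrating over $G/H$ and invoking the normalization~\eqref{eq:genNormitus} gives $\|J\fii\|^2=\|\fii\|^2$; hence $J$ extends by continuity to an isometry $\hil\to\mc K\otimes L_\mu^2\otimes\hil_\pi$. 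That $\msf P_\pi^G$ is a PVM with $U_\pi^G(g)\msf P_\pi^G(X)U_\pi^G(g)^*=\msf P_\pi^G(gX)$ and that $U_\pi^G$ is a strongly continuous unitary representation are the standard properties of the canonical system of imprimitivity already recalled in this section, so items~(iii) and part of (ii) are free; the intertwining $JU(g')=\big(V(g')\otimes U_\pi^G(g')\big)J$ follows fibrewise from the displayed relation together with $\zeta^\pi=\bigoplus_{[\eta]}\zeta^\eta\otimes\id_{\mc M_\eta}$ and $\zeta^\eta(g'^{-1},x)e_{\eta,j}=\sum_i\zeta^\eta_{i,j}(g'^{-1},x)e_{\eta,i}$, exactly as in Lemma~\ref{lemma:minlemma}. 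For item~(i), a direct computation using the orthonormality of $\{e_{\eta,i}\otimes f_{\eta,m}\}$ and the fact that $\msf P_\pi^G(X)$ is multiplication by $\chi_X$ gives
\begin{equation*}
\big\langle J\fii\big|(B\otimes\msf P_\pi^G(X))J\fii\big\rangle=\int_X\sum_{[\eta]\in\hat{H}}\sum_{i=1}^{D_\eta}\sum_{m=1}^{M_\eta}\langle K_{\eta,i,m}(gH)\fii|BK_{\eta,i,m}(gH)\fii\rangle\,d\mu(gH),
\end{equation*}
and summing over $i$ with the unitarity of $[\zeta^\eta_{i,j}]$ collapses the integrand to $\sum_{[\eta],j,m}\langle V(g)L_{\eta,j,m}U(g)^*\fii|BV(g)L_{\eta,j,m}U(g)^*\fii\rangle$, i.e.\ to the integrand of Equation~\eqref{eq:genCovInstrStr}. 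Thus $J^*(B\otimes\msf P_\pi^G(X))J$ and $\mc I^*(X,B)$ agree as quadratic forms on the dense set $\mc D$, hence coincide as bounded operators (the converse part of Theorem~\ref{theor:genCovInstrStr} guaranteeing that Equation~\eqref{eq:genCovInstrStr} defines a genuine $(G/H,U,V)$--covariant instrument in the first place).

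The remaining, and principal, point is minimality, item~(iv). Note first that for every $gH$ the set $\{K_{\eta,i,m}(gH)\}$ is $(\mc K,\mc D)$-weakly independent: by Equation~\eqref{eq:genapu1} and the unitarity of $V(g)$, of $U(g)$ on $\mc D$, and of the matrices $[\zeta^\eta_{i,j}]$, any $\ell^2$-relation $\sum_{[\eta],i,m}\beta_{\eta,i,m}\langle\psi|K_{\eta,i,m}(gH)\fii\rangle\equiv0$ converts into a relation of the same $\ell^2$-type among the $L_{\eta,j,m}$, which is trivial by hypothesis. Now let $\Psi\in\mc K\otimes L_\mu^2\otimes\hil_\pi$ be orthogonal to every $(B\otimes\msf P_\pi^G(X))J\fii$. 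Writing $\Psi(gH)=\sum_{[\eta],i,m}\psi_{\eta,i,m}(gH)\otimes e_{\eta,i}\otimes f_{\eta,m}$ with $\psi_{\eta,i,m}(gH)\in\mc K$, choosing $B=\kb{\psi}{\psi'}$, and using that $\mc B(G/H)$ is countably generated, one obtains for $\mu$-a.e.\ $gH$ the relation $\sum_{[\eta],i,m}\langle\psi_{\eta,i,m}(gH)|\psi\rangle\langle\psi'|K_{\eta,i,m}(gH)\fii\rangle=0$ simultaneously for $\psi,\,\psi'$ in a countable dense subset of $\mc K$ and $\fii$ in a countable $\|\cdot\|_1$-dense subset of $\mc D$ (separability of $\mc K$ and of $(\mc D,\|\cdot\|_1)$ together with countability of $\hat H$ making the exceptional null set uniform; square-integrability of $\Psi$ forcing $(\langle\psi_{\eta,i,m}(gH)|\psi\rangle)_{[\eta],i,m}\in\ell^2$ for a.e.\ $gH$). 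Fixing such a $gH$ and $\psi$ and invoking the fibrewise weak independence yields $\langle\psi_{\eta,i,m}(gH)|\psi\rangle=0$ for all $[\eta],i,m$; letting $\psi$ vary gives $\psi_{\eta,i,m}(gH)=0$, hence $\Psi=0$, which is minimality.

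I expect the main obstacle to be precisely this last step. Unlike in the finite-outcome case, one must pass from a condition quantified over all measurable $X$, all bounded $B$, and all $\fii\in\mc D$ to a single $\mu$-a.e.\ pointwise identity valid for a countable family of test data, and carry this out while the operators $L_{\eta,i,m}$ — hence the $K_{\eta,i,m}(gH)$ — are only linear on $\mc D$ and possibly unbounded, so no continuity argument in $\fii$ is available. The separability hypotheses (a)--(c) and the compactness of $H$ (which makes $\hat H$ countable) are exactly what make the measure-theoretic bookkeeping legitimate; the verifications of (i)--(iii), as in the finite case, are routine manipulations with the cocycle identities once the isometry property of $J$ is in hand.
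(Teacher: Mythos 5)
Your proposal follows the paper's own proof almost step for step: the same isometry computation from Equation~\eqref{eq:genNormitus}, the same fibrewise $(\mc K,\mc D)$-weak independence of the $K_{\eta,i,m}(gH)$, and the same countability bookkeeping (countable $\hat H$, separable $\mc K$, separable $(\mc D,\|\cdot\|_1)$, varying $X$ over a countably generated $\sigma$-algebra) to obtain a single $\mu$-null set outside which the pointwise relation
$\sum_{[\eta],i,m}\langle\psi_{\eta,i,m}(gH)|\psi\rangle\langle\psi'|K_{\eta,i,m}(gH)\fii\rangle=0$
holds for all test data in fixed countable dense families.

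There is one genuine gap, and your closing paragraph identifies the spot but misdiagnoses it. After establishing the a.e.\ identity for $\psi,\psi'$ in a countable dense subset of $\mc K$ and $\fii$ in a countable $\|\cdot\|_1$-dense subset of $\mc D$, you immediately invoke $(\mc K,\mc D)$-weak independence. But weak independence, as defined, requires the vanishing relation for \emph{all} $\psi'\in\mc K$ and \emph{all} $\fii\in\mc D$; you only have it on dense subsets, so a density-passage step is still needed. You then assert that, the $L_{\eta,i,m}$ being ``only linear on $\mc D$ and possibly unbounded, no continuity argument in $\fii$ is available'' --- this is false and is precisely what Inequality~\eqref{eq:DNormi} is for: it says $\|L_{\eta,i,m}\fii\|\leq\|\fii\|_1$, i.e.\ the intertwiners \emph{are} bounded from $(\mc D,\|\cdot\|_1)$ to $\mc K$, and the paper exploits this via an explicit Cauchy--Schwarz estimate of the form
\begin{align*}
&\Big|\sum_{[\eta],i,m}\big(\langle\psi_{\eta,i,m}(x)|\psi\rangle\langle\psi'|K_{\eta,i,m}(x)\fii\rangle-\langle\psi_{\eta,i,m}(x)|\psi_r\rangle\langle\psi'_r|K_{\eta,i,m}(x)\fii_r\rangle\big)\Big|\\
&\qquad\leq\|\Psi(x)\|\Big(\|\psi-\psi_r\|\,\|\psi'\|\,\|\fii\|_1+\cdots+\|\psi_r\|\,\|\psi'_r\|\,\|\fii-\fii_r\|_1\Big)\xrightarrow{r\to\infty}0,
\end{align*}
which is exactly the continuity argument in $\fii$ (with respect to $\|\cdot\|_1$) that you declared unavailable. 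Once you add this estimate, the density passage works, the relation holds for arbitrary $\psi,\psi',\fii$ at every point of $(G/H)\setminus N$, and the fibrewise weak independence then finishes the minimality proof as you wrote. So the route is the paper's; the one ingredient missing is recognizing that Inequality~\eqref{eq:DNormi} is precisely what licenses the density extension.
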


\begin{proof}
Let $M_\eta\in\{0,\infty\}\cup\N$ for each $[\eta]\in\hat{H}$ and suppose that operators $L_{\eta,i,m}:\mc D\to\mc K$, $[\eta]\in\hat{H}$, $i=1,\ldots,D_\eta$, $m=1,\ldots,M_\eta$, constitute a minimal set of $(G/H,U,V)$--intertwiners and define the representation $\pi$ and the linear map $J$ as in the claim. Direct calculation utilizing Equation \eqref{eq:genNormitus} shows that $\|J\fii\|=\|\fii\|$ for all $\fii\in\mc D$. Since $\mc D$ is a dense subspace of $\hil$, this means that $J$ indeed can be extended into an isometry $J:\hil\to\mc K\otimes L_\mu^2\otimes\hil_\pi$. Thus, equation $\mc I^*(X,B)=J^*\big(B\otimes\msf P(X)\big)J$ for all $X\in\mc B(G/H)$ and $B\in\mc L(\mc K)$ defines an instrument $\mc I:\mc B(G/H)\times\mc T(\hil)\to\mc T(\mc K)$. Checking $JU(g)=\big(V(g)\otimes U_\pi^G(g)\big)J$ for all $g\in G$ and $X\in\mc B(G/H)$ is straight-forward and is left for the reader. Let us concentrate on showing that the vectors $\big(B\otimes\msf P_\pi^G(X)\big)J\fii$, $B\in\mc L(\mc K)$, $X\in\mc B(G/H)$, $\fii\in\hil$, span a dense subspace of $\mc K\otimes L_\mu^2\otimes\hil_\pi$.

Proving that the set $\{K_{\eta,i,m}(x)\,|\,m=1,\ldots,M_\eta,\ i=1,\ldots,D_\eta,\ [\eta]\in\hat{H}\}$ is $(\mc K,\mc D)$-weakly independent for any $x\in G/H$ is carried out in essentially the same way as in the proof of Lemma \ref{lemma:minlemma}. Let $\Psi\in\mc K\otimes L_\mu^2\otimes\hil_\pi$ be such that $\<\Psi|\big(B\otimes\msf P_\pi^G(X)\big)J\fii\>=0$ for all $B\in\mc L(\mc K)$, $X\in\mc B(G/H)$, and $\fii\in\hil$. We may assume that, for any $[\eta]\in\hat{H}$, $i=1,\ldots,D_\eta$, and $m=1,\ldots,M_\eta$, there is a field $G/H\ni x\mapsto\psi_{\eta,i,m}(x)\in\mc K$ such that $\Psi(x)=\sum_{[\eta]\in\hat{H}}\sum_{i=1}^{D_\eta}\sum_{m=1}^{M_\eta}\psi_{\eta,i,m}(x)\otimes e_{\eta,i}\otimes f_{\eta,m}$ for all $x\in G/H$, so that we may assume that $\sum_{[\eta]\in\hat{H}}\sum_{i=1}^{D_\eta}\sum_{m=1}^{M_\eta}\|\psi_{\eta,i,m}(x)\|^2=\|\Psi(x)\|^2<\infty$ for all $x\in G/H$. Essentially in the same way as in the proof of Lemma \ref{lemma:minlemma}, we find that, for any $B\in\mc L(\mc K)$, $X\in\mc B(G/H)$, and $\fii\in\mc D$,
$$
0=\<\Psi|\big(B\otimes\msf P_\pi^G(X)\big)J\fii\>=\int_X\sum_{[\eta]\in\hat{H}}\sum_{i=1}^{D_\eta}\sum_{m=1}^{M_\eta}\<\psi_{\eta,i,m}(x)|BK_{\eta,i,m}(x)\fii\>\,d\mu(x).
$$
Substituting above $B=|\psi\>\<\psi'|$ where $\psi,\,\psi'\in\mc K$ and varying $X\in\mc B(G/H)$, we find that, for any $\fii\in\mc D$ and $\psi,\,\psi'\in\mc K$, there is a $\mu$-measurable set $N_{\fii,\psi,\psi'}\subset G/H$ such that $\mu(N_{\fii,\psi\psi'})=0$ and
\begin{equation}\label{eq:minapu}
\sum_{[\eta]\in\hat{H}}\sum_{i=1}^{D_\eta}\sum_{m=1}^{M_\eta}\<\psi_{\eta,i,m}(x)|\psi\>\<\psi'|K_{\eta,i,m}(x)\fii\>=0
\end{equation}
for all $x\in(G/H)\setminus N_{\fii,\psi,\psi'}$. Let $C_{\mc D}\subset\mc D$ be a countable set which is dense in $\mc D$ w.r.t.\ the 1-norm (recall the assumption (b) we made in the beginning of this section) and $C_{\mc K}\subset\mc K$ be a countable set dense in $\mc K$ w.r.t.\ the usual Hilbert space topology. Define $N:=\bigcup\{N_{\fii,\psi,\psi'}\,|\,\fii\in C_{\mc D},\ \psi,\,\psi'\in C_{\mc K}\}$. Clearly, $\mu(N)=0$. Pick $\fii\in\mc D$, $\psi,\,\psi'\in\mc K$ and let $(\fii_r)_{r=1}^\infty\subset C_{\mc D}$, $(\psi_r)_{r=1}^\infty\subset C_{\mc K}$, and $(\psi'_r)_{r=1}^\infty\subset C_{\mc K}$ be sequences such that $\lim_{r\to\infty}\|\fii-\fii_r\|_1=\lim_{r\to\infty}\|\psi-\psi_r\|=\lim_{r\to\infty}\|\psi'-\psi'_r\|=0$. Using the Pythagorean theorem, the Cauchy-Schwarz inequality and Equation \eqref{eq:DNormi}, we may easily evaluate, for any $x\in G/H$,
\begin{align*}
&\left|\sum_{[\eta]\in\hat{H}}\sum_{i=1}^{D_\eta}\sum_{m=1}^{M_\eta}\left(\<\psi_{\eta,i,m}(x)|\psi\>\<\psi'|K_{\eta,i,m}(x)\fii\>-\<\psi_{\eta,i,m}(x)|\psi_r\>\<\psi'_r|K_{\eta,i,m}(x)\fii_r\>\right)\right|\\
\leq&\|\Psi(x)\|\|\psi-\psi_r\|\|\psi'\|\|\fii\|_1+\|\Psi(x)\|\|\psi_r\|\|\psi'-\psi'_r\|\|\fii\|_1+\|\Psi(x)\|\|\psi_r\|\|\psi'_r\|\|\fii-\fii_r\|_1\\
\leq&\|\Psi(x)\|\|\psi-\psi_r\|\|\psi'\|\|\fii\|_1+\|\Psi(x)\|\big(\|\psi-\psi_r\|+\|\psi\|\big)\|\psi'-\psi'_r\|\|\fii\|_1\\
&+\|\Psi(x)\|\big(\|\psi-\psi_r\|+\|\psi\|\big)\big(\|\psi'-\psi'_r\|+\|\psi'\|\big)\|\fii-\fii_r\|_1\overset{r\to\infty}{\rightarrow}0,
\end{align*}
%where we can evaluate, e.g.,\ $\|\psi_r\|\leq\|\psi-\psi_r\|+\|\psi\|$ for any $r\in\N$.
From this, it immediately follows that, Equation \eqref{eq:minapu} holds for all $\fii\in\mc D$, $\psi,\,\psi'\in\mc K$, and $x\in(G/H)\setminus N$. Since, for any $x\in G/H$ and $\psi\in\mc K$, the sequence $\big(\<\psi|\psi_{\eta,i,m}(x)\>\,\big|\,i=1,\ldots,D_\eta,\ m=1,\ldots,M_\eta,\ [\eta]\in\hat{H}\big)$ is square summable, and the set of operators $K_{\eta,i,m}(x)$, $[\eta]\in\hat{H}$, $i=1,\ldots,D_\eta$, $m=1,\ldots,M_\eta$, is $(\mc K,\mc D)$-weakly independent, it follows that $\<\psi|\psi_{\eta,i,m}(x)\>=0$ for all $[\eta]\in\hat{H}$, $i=1,\ldots,D_\eta$, $m=1,\ldots,M_\eta$, $\psi\in\mc K$, and $x\in(G/H)\setminus N$. This means that $\psi_{\eta,i,m}(x)=0$ for all $[\eta]\in\hat{H}$, $i=1,\ldots,D_\eta$, $m=1,\ldots,M_\eta$, and $x\in(G/H)\setminus N$. Hence, $\Psi(x)=0$ for $\mu$-a.a.\ $x\in G/H$, i.e.,\ $\Psi=0$ finalizing the proof.
\end{proof}

As before, we say that a $(G/H,U,V)$--covariant instrument $\mc I$ is an {\it extreme instrument of the $(G/H,U,V)$--covariance structure} if $\mc I$ is an extreme point of the convex set of all $(G/H,U,V)$--covariant instruments. Using Theorem \ref{theor:genCovInstrStr} and Lemma \ref{lemma:genminlemma} and earlier extremality characterizations from \cite{HaPe15}, we may describe all the extreme instruments of the $(G/H,U,V)$--covariance structure. For this, we make a couple of technical definitions. Pick, for all $[\eta]\in\hat{H}$, $M_\eta\in\{0,\infty\}\cup\N$, and let $L_{\eta,i,m}:\mc D\to\mc K$, $[\eta]\in\hat{H}$, $i=1,\ldots,D_\eta$, $m=1,\ldots,M_\eta$, constitute a minimal set of $(G/H,U,V)$--intertwiners. Using the Cauchy-Schwarz inequality (in its different forms) and Equation \eqref{eq:genNormitus}, we have, for any $\fii\in\mc D$, $[\eta]\in\hat{H}$, $m,\,n=1,\ldots,M_\eta$,
\begin{align*}
&\left|\int_{G/H}\sum_{i=1}^{D_\eta}\<L_{\eta,i,m}U(g)^*\fii|L_{\eta,i,n}U(g)^*\fii\>\,d\mu(gH)\right|\\
\leq&\int_{G/H}\sum_{i=1}^{D_\eta}\|L_{\eta,i,m}U(g)^*\fii\|\|L_{\eta,i,n}U(g)^*\fii\|\,d\mu(gH)\\
\leq&\int_{G/H}\sqrt{\sum_{i=1}^{D_\eta}\|L_{\eta,i,m}U(g)^*\fii\|^2\sum_{j=1}^{D_\eta}\|L_{\eta,j,n}U(g)^*\fii\|^2}\,d\mu(gH)\\
\leq&\sqrt{\int_{G/H}\sum_{i=1}^{D_\eta}\|L_{\eta,i,m}U(g)^*\fii\|^2\,d\mu(gH)\,\int_{G/H}\sum_{j=1}^{D_\eta}\|L_{\eta,j,n}U(g)^*\fii\|^2\,d\mu(gH)}\\
\leq&\int_{G/H}\sum_{[\eta]\in\hat{H}}\sum_{i=1}^{D_\eta}\sum_{r=1}^{M_\eta}\|L_{\eta,i,r}U(g)^*\fii\|^2\,d\mu(gH)=\|\fii\|^2.
\end{align*}
This means that the map $\mc D^2\ni(\fii,\psi)\mapsto\int_{G/H}\sum_{i=1}^{D_\eta}\<L_{\eta,i,m}U(g)^*\fii|L_{\eta,i,n}U(g)^*\psi\>\,d\mu(gH)\in\C$ is a bounded sesquilinear form for all $[\eta]\in\hat{H}$ and $m,\,n=1,\ldots,M_\eta$ (and, thus, extends to $\hil^2$); we denote the corresponding bounded linear operator as
$$
\int_{G/H}\sum_{i=1}^{D_\eta}U(g)L_{\eta,i,m}^*L_{\eta,i,n}U(g)^*\,d\mu(gH)\in\mc L(\hil).
$$
Moreover, given sets $A\neq\emptyset$ and $B_a\neq\emptyset$ for any $a\in A$, we say that a set consisting of $B_{a,b,c}\in\mc L(\hil)$, $b,\,c\in B_a$, $a\in A$, is strongly independent if, for any decomposable bounded operator $\bigoplus_{a\in A}(\beta_{a,b,c})_{b,c\in B_a}\in\bigoplus_{a\in A}\mc L(\ell_{B_a}^2)\subset\mc L\left(\bigoplus_{a\in B_a}\ell_{B_a}^2\right)$, the condition $\sum_{a\in A}\sum_{b,c\in B_a}\beta_{a,b,c}B_{a,b,c}=0$ (where the series is required to converge strongly) implies $\beta_{a,b,c}=0$ for all $a\in A$ and $b,\,c\in B_a$.

\begin{theorem}\label{theor:genExtInstr}
Let $\mc I$ be a $(G/H,U,V)$--covariant instrument defined through Equation \eqref{eq:genCovInstrStr} by a minimal set of $(G/H,U,V)$--intertwiners consisiting of $L_{\eta,i,m}:\mc D\to\mc K$, $[\eta]\in\hat{H}$, $i=1,\ldots,D_\eta$, $m=1,\ldots,M_\eta$, where $M_\eta\in\{0,\infty\}\cup\N$. This instrument is an extreme instrument of the $(G/H,U,V)$--covariance structure if and only if the set
$$
\left\{\int_{G/H}\sum_{i=1}^{D_\eta}U(g)L_{\eta,i,m}^*L_{\eta,i,n}U(g)^*\,d\mu(gH)\,\bigg|\,m,\,n=1,\ldots,M_\eta,\ [\eta]\in\hat{H}\right\}
$$
is strongly independent.
\end{theorem}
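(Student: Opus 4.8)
The plan is to run the proof of Theorem~\ref{theor:ExtInstr} essentially verbatim, replacing the finite sums over $G$ and over orbits by the single integral over $G/H$ and invoking the continuous version of the extremality characterization of \cite{HaPe15}. First I would equip $\mc I$ with the minimal $(G/H,U,V)$--covariant Stinespring dilation $(L^2_\mu\otimes\hil_\pi,\msf P_\pi^G,U_\pi^G,J)$ furnished by Lemma~\ref{lemma:genminlemma}, where $\hil_\pi=\bigoplus_{[\eta]\in\hat H}\mc K_\eta\otimes\mc M_\eta$, $\pi(h)=\bigoplus_{[\eta]\in\hat H}\eta(h)\otimes\id_{\mc M_\eta}$, and $J$ is built from the Kraus fields $K_{\eta,i,m}(gH)$ of Equation~\eqref{eq:genapu1}. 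By the cited characterization (whose applicability rests on minimality of the dilation), $\mc I$ is an extreme instrument of the $(G/H,U,V)$--covariance structure if and only if the only $E\in\mc L(L^2_\mu\otimes\hil_\pi)$ obeying $E\msf P_\pi^G(X)=\msf P_\pi^G(X)E$ for all $X\in\mc B(G/H)$, $EU_\pi^G(g)=U_\pi^G(g)E$ for all $g\in G$, and $J^*(\id_{\mc K}\otimes E)J=0$ is $E=0$.

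Second I would describe such $E$ explicitly. Since $\msf P_\pi^G$ is a multiplication PVM, commuting with it forces $E$ to be decomposable, $E=\int^\oplus_{G/H}E(x)\,d\mu(x)$ with $E(x)\in\mc L(\hil_\pi)$; commuting with $U_\pi^G$ then yields, by a routine argument handling the $g$-dependent null sets via separability, $E(gH)=\zeta^\pi(g^{-1},gH)E(H)\zeta^\pi(g^{-1},gH)^*$ for $\mu$-a.e.\ $gH$. Specializing $g$ to $h\in H$ shows $E_0:=E(H)$ commutes with $\pi$, hence by Peter--Weyl $E_0=\bigoplus_{[\eta]\in\hat H}\id_{\mc K_\eta}\otimes E_\eta$ with $E_\eta\in\mc L(\mc M_\eta)$ and $\sup_{[\eta]}\|E_\eta\|<\infty$; and since $E_0$ then also commutes with every $\zeta^\pi(g^{-1},gH)=\bigoplus_{[\eta]}\zeta^\eta(g^{-1},gH)\otimes\id_{\mc M_\eta}$, in fact $E(gH)=E_0$ a.e., i.e.\ $E=\id_{L^2_\mu}\otimes E_0$. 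Conversely, every such $E_0$ produces an admissible $E$. Thus admissible operators $E$ correspond bijectively, and boundedly, to decomposable bounded operators $\bigoplus_{[\eta]\in\hat H}(\beta_{\eta,m,n})_{m,n}\in\bigoplus_{[\eta]\in\hat H}\mc L(\ell^2_{M_\eta})$ through $\beta_{\eta,m,n}:=\<f_{\eta,m}|E_\eta f_{\eta,n}\>$.

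Third I would compute $J^*(\id_{\mc K}\otimes E)J$ as a sesquilinear form on the dense domain $\mc D$. Writing $(J\fii)(gH)=\sum_{[\eta]}\sum_i\sum_m K_{\eta,i,m}(gH)\fii\otimes e_{\eta,i}\otimes f_{\eta,m}$ and letting $\id_{\mc K}\otimes E_0$ act componentwise, pointwise domination by $\|E_0\|\|(J\fii)(gH)\|^2\in L^1_\mu$ permits interchanging integration and summations and gives
\[
\<J\fii|(\id_{\mc K}\otimes E)J\fii\>=\int_{G/H}\sum_{[\eta]\in\hat H}\sum_{i=1}^{D_\eta}\sum_{m,n=1}^{M_\eta}\beta_{\eta,m,n}\<K_{\eta,i,m}(gH)\fii|K_{\eta,i,n}(gH)\fii\>\,d\mu(gH).
\]
Substituting Equation~\eqref{eq:genapu1}, using $\sum_{i=1}^{D_\eta}\overline{\zeta^\eta_{i,j}(g^{-1},gH)}\zeta^\eta_{i,k}(g^{-1},gH)=\delta_{j,k}$, and the $H$-invariance of the integrand exploited already in Theorem~\ref{theor:genCovInstrStr}, this collapses to
\[
\<J\fii|(\id_{\mc K}\otimes E)J\fii\>=\sum_{[\eta]\in\hat H}\sum_{m,n=1}^{M_\eta}\beta_{\eta,m,n}\<\fii|\big(\int_{G/H}\sum_{i=1}^{D_\eta}U(g)L_{\eta,i,m}^*L_{\eta,i,n}U(g)^*\,d\mu(gH)\big)\fii\>,
\]
the bracketed operators being exactly those in the statement and well defined by the Cauchy--Schwarz estimates established just before the theorem. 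Since $\mc D$ is dense, $\bigoplus\beta_{\eta,m,n}$ is bounded, and those same estimates bound the relevant series, polarization upgrades this to the strongly convergent operator identity $J^*(\id_{\mc K}\otimes E)J=\sum_{[\eta]}\sum_{m,n}\beta_{\eta,m,n}\int_{G/H}\sum_i U(g)L_{\eta,i,m}^*L_{\eta,i,n}U(g)^*\,d\mu(gH)$. Therefore $J^*(\id_{\mc K}\otimes E)J=0$ holds exactly when this strongly convergent combination vanishes, and ``$E=0$ is the only solution'' is precisely the claimed strong independence of the family $\left\{\int_{G/H}\sum_i U(g)L^*_{\eta,i,m}L_{\eta,i,n}U(g)^*\,d\mu(gH)\right\}$.

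I expect the main obstacle to be purely measure-theoretic bookkeeping when $\hat H$ or some $M_\eta$ is infinite: rigorously identifying the commutant of the canonical system of imprimitivity $(\msf P_\pi^G,U_\pi^G)$ with the constant decomposable operators $\id_{L^2_\mu}\otimes E_0$, $E_0\in\pi(H)'$ (the ``$g$-dependent null set'' issue, handled by separability via the standard argument used in the theory of induced representations), and legitimating the passage between the sesquilinear form on $\mc D$ and the strongly convergent operator series against a bounded coefficient operator. All of this is controlled by Inequality~\eqref{eq:DNormi}, Equation~\eqref{eq:genNormitus}, and the Cauchy--Schwarz bounds preceding the theorem, so no genuinely new idea beyond the finite-dimensional Theorem~\ref{theor:ExtInstr} should be required.
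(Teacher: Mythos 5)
Your proposal follows the paper's own proof essentially verbatim: same minimal dilation from Lemma~\ref{lemma:genminlemma}, same extremality characterization from \cite{HaPe15}, same reduction of the commutant of $(\msf P_\pi^G,U_\pi^G)$ to operators of the form $\id_{L^2_\mu}\otimes E_0$ with $E_0\in\pi(H)'$, and the same quadratic-form computation identifying $J^*(\id_{\mc K}\otimes E)J$ with the integrated operator combination. The one step you compress — passing from the pointwise commutation relation to ``$E_0$ commutes with $\pi$'' by ``specializing $g$ to $h\in H$'' — needs the Fubini-and-null-set argument that the paper carries out explicitly (since $\{H\}$ can be a $\mu$-null singleton of $G/H$), but you correctly flag this as the standard induced-representation bookkeeping handled by separability, so the plan is sound.
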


\begin{proof}
Let $(L_\mu^2\otimes\hil_\pi,\msf P_\pi^G,U_\pi^G,J)$ be the minimal $(G/H,U,V)$--covariant Stinespring dilation for $\mc I$ defined by $L_{\eta,i,m}$, $[\eta]\in\hat{H}$, $i=1,\ldots,D_\eta$, $m=1,\ldots,M_\eta$ as in Lemma \ref{lemma:genminlemma}. According to \cite{HaPe15}, $\mc I$ is an extreme instrument of the $(G/H,U,V)$--covariance structure if and only if, for $E\in\mc L(L_\mu^2\otimes\hil_\pi)$, the conditions $E\msf P_\pi^G(X)=\msf P_\pi^G(X)E$ for all $X\in\mc B(G/H)$, $EU_\pi^G(g)=U_\pi^G(g)E$ for all $g\in G$, and $J^*(\id_{\mc K}\otimes E)J=0$ imply $E=0$. This is why we next focus on characterizing the intersection of the commutant of the range of $\msf P_\pi^G$ and that of the range of $U_\pi^G$.

Suppose that $E\in\mc L(L_\mu^2\otimes\hil_\pi)$ commutes with $\msf P_\pi^G$ and $U_\pi^G$. The former condition implies that there is a (weakly) $\mu$-measurable field $G/H\ni x\mapsto E(x)\in\mc L(\hil_\pi)$ such that $(EF)(x)=E(x)F(x)$ for all $F\in L_\mu^2\otimes\hil_\pi$ and $x\in G/H$. Fix a left Haar measure $\mu_G$ for $G$. Requiring that $EU_\pi^G(g)=U_\pi^G(g)E$ for all $g\in G$ easily yields that, for all $g\in G$, there is $N_g\in\mc B(G/H)$ such that $\mu(N_g)=0$ and
\begin{equation}\label{eq:commutantintersectionapu}
E(x)\zeta^\pi(g,x)=\zeta^\pi(g,x)E(gx)
\end{equation}
for all $x\in(G/H)\setminus N_g$.

Denote by $N$ the set of those $(g,x)\in G\times G/H$ such that Equation \eqref{eq:commutantintersectionapu} does not hold. Since $\hil_\pi$ is separable, this is easily seen to be a Borel set. Using the Fubini theorem, we get
$$
(\mu_G\times\mu)(N)=\int_N d(\mu_G\times\mu)=\int_G\underbrace{\int_{G/H}\chi_N(g,x)\,d\mu(x)}_{=0}\,d\mu_G(g)=0,
$$
implying that Equation \eqref{eq:commutantintersectionapu} holds for $(\mu_G\times\mu)$-a.a.\ $(g,x)\in G\times G/H$. Using the Fubini theorem for a second time, we find
$
0=(\mu_G\times\mu)(N)=\int_N d(\mu_G\times\mu)=\int_{G/H}\int_G\chi_N(g,x)\,d\mu_G(g)\,d\mu(x)
$
and, since $\int_G\chi_N(g,x)\,d\mu_G(g)\geq0$ for all $x\in G/H$, this means that $\int_G\chi_N(g,x)\,d\mu_G(g)=0$ for $\mu$-a.a.\ $x\in G/H$. This means that we may pick $x_0\in G/H$ with the property $\chi_N(g,x_0)=0$ for $\mu_G$-a.a.\ $g\in G$. This means that, for $\mu_G$-a.a.\ $g\in G$,
\begin{equation}\label{eq:commutantsint2}
E(x_0)\zeta^\pi(g,x_0)=\zeta^\pi(g,x_0)E(gx_0).
\end{equation}
Since $G$ is locally compact and second countable, we may assume that the set $Y\in\mc B(G)$ of those $g\in G$ such that Equation \eqref{eq:commutantsint2} holds (and whose complement is $\mu_G$-null) is a countable union of compact sets, implying that $X:=\{gH\,|\,g\in Y\}$ is a Borel-measurable subset of $G/H$. The pre-image of $(G/H)\setminus X$ under the factor projection $g\mapsto gH$ is contained within the $\mu_G$-null $G\setminus Y$. Since, according to Corollary V.5.16 of \cite{varadarajan}, a set $Z\in\mc B(G/H)$ is $\mu$-null if and only if its pre-image under the factor projection is $\mu_G$-null, we have that $\mu\big((G/H)\setminus X\big)=0$. It now follows from the above and Equation \eqref{eq:commutantsint2}, for all $g\in G$ such that $gs(x_0)^{-1}\in Y$, i.e.,\ for $\mu_G$-a.a.\ $g\in G$,
\begin{align}
E(gH)&=E(gs(x_0)^{-1}x_0)=\zeta^\pi\big(gs(x_0)^{-1},x_0\big)^*E(x_0)\zeta^\pi\big(gs(x_0)^{-1},x_0\big)\nonumber\\
&=\Big(\zeta^\pi\big(s(x_0)^{-1},x_0\big)\zeta^\pi(g,H)\Big)^*E(x_0)\zeta^\pi\big(s(x_0)^{-1},x_0\big)\zeta^\pi(g,H)\nonumber\\
&=\zeta^\pi(g,H)^*E_0\pi(g,H)=\pi\big(g^{-1}s(gH)\big)^*E_0\pi\big(g^{-1}s(gH)\big)\label{eq:Dkentta}
\end{align}
where we have denoted $E_0:=\zeta^\pi\big(s(x_0)^{^{-1}},x_0\big)^*E(x_0)\zeta^\pi\big(s(x_0)^{^{-1}},x_0\big)$.

Denote by $N_1$ the $\mu_G$-measurable subset of those $g\in G$ such that Equation \eqref{eq:commutantsint2} does not hold. Since we have, for every $f\in L^1(G)$, $\int_G f\,d\mu_G=\int_{G/H}\int_H f(gh)\,d\mu_H(h)\,d\mu(gH)$, where $\mu_H$ is the essentially unique left Haar measure on $H$, we have
$$
0=\mu_G(N_1)=\int_G\chi_{N_1}\,d\mu_G=\int_{G/H}\underbrace{\int_H\chi_{N_1}(gh)\,d\mu_H(h)}_{\geq0}\,d\mu(gH),
$$
implying that, for $\mu_G$-a.a.\ $g\in G$ (i.e.,\ for $\mu$-a.a.\ $gH\in G/H$) $\int_H\chi_{N_1}(gh)\,d\mu_H(h)=0$. It follows that there is $g_0\in G$ such that $\chi_{N_1}(g_0h)=0$ for $\mu_H$-a.a.\ $h\in H$. Since $\mu_G(N_1)=0$, we may assume that $g_0\in G\setminus N_1$. Thus, we find that, for $\mu_H$-a.a.\ $h\in H$, $\pi(h)E_0\pi(h)^*=\pi\big(g_0^{-1}s(g_0H)\big)\pi\big(h^{-1}g_0^{-1}s(g_0H)\big)^*E_0\pi\big(h^{-1}g_0^{-1}s(g_0H)\big)\pi\big(g_0^{-1}s(g_0H)\big)^*
=\pi\big(g_0^{-1}s(g_0H)\big)E(g_0H)\times\\
\times\pi\big(g_0^{-1}s(g_0H)\big)^*=E_0$ where we have used the fact that $g_0\in G\setminus N_1$ in the final equality. Using the strong continuity of $\pi$, this means that $E_0\pi(h)=\pi(h)E_0$ for all $h\in H$. Using Equation \eqref{eq:Dkentta}, this means that $E(x)=E_0$ for $\mu$-a.a.\ $x\in G/H$. Thus the intersection of the commutant of the range of $\msf P_\pi^G$ and that of the range of $U_\pi^G$ is included within the set of those operators $E\in\mc L(L_\mu^2\otimes\hil_\pi)$ defined by some $E_0\in\mc L(\hil_\pi)$ commuting with the range of $\pi$ through $(EF)(x)=E_0F(x)$ for all $F\in L_\mu^2\otimes\hil_\pi$ and $x\in G/H$. The converse inclusion is immediate. Thus the intersection we are studying corresponds to the commutant of the range of $\pi$.

Let $E\in\mc L(L_\mu^2\otimes\hil_\pi)$ commute with $\msf P_\pi^G$ and $U_\pi^G$ and let $E_0$ be the corresponding operator in the commutant of $\pi$. Using the definition $\pi(h)=\bigoplus_{[\eta]\in\hat{H}}\eta(h)\otimes\id_{\mc M_\eta}$ for all $h\in H$, we find that there is a bounded sequence $\hat{H}\ni[\eta]\mapsto E_\eta\in\mc L(\mc M_\eta)$ such that $E(x)=E_0=\bigoplus_{[\eta]\in\hat{H}}\id_{\mc K_\eta}\otimes E_\eta$ for $\mu$-a.a.\ $x\in G/H$. Define, for all $[\eta]\in\hat{H}$, $i=1,\ldots,D_\eta$, and $m=1,\ldots,M_\eta$, the isometry $V_{\eta,i,m}:\mc K\to\mc K\otimes\hil_\pi$ as earlier. Denoting $\beta_{\eta,m,n}:=\<f_{\eta,m}|E_\eta f_{\eta,n}\>$ for all $[\eta]\in\hat{H}$ and $m,\,n=1,\ldots,M_\eta$, we find that, for any $\fii\in\mc D$,
\begin{align*}
&\<J\fii|(\id_{\mc K}\otimes E)J\fii\>=\int_{G/H}\<(J\fii)(x)|(\id_{\mc K}\otimes E_0)(J\fii)(x)\>\,d\mu(x)\\
=&\int_{G/H}\sum_{[\eta],[\tj]\in\hat{H}}\sum_{i=1}^{D_\eta}\sum_{j=1}^{D_\tj}\sum_{m=1}^{M_\eta}\sum_{n=1}^{M_\tj}\<V_{\eta,i,m}^*(J\fii)(x)|\underbrace{V_{\eta,i,m}^*(\id_{\mc K}\otimes E_0)V_{\tj,j,n}}_{\begin{scriptsize}
=\<f_{\eta,m}|E_\eta f_{\eta,n}\>\delta_{[\eta],[\tj]}\delta_{i,j}
\end{scriptsize}}V_{\tj,j,n}^*(J\fii)(x)\>\,d\mu(x)\\
=&\sum_{[\eta]\in\hat{H}}\sum_{m,n=1}^{M_\eta}\<f_{\eta,m}|E_\eta f_{\eta,n}\>\int_{G/H}\sum_{i=1}^{D_\eta}\<K_{\eta,i,m}(x)\fii|K_{\eta,i,n}(x)\fii\>\,d\mu(x)\\
=&\sum_{[\eta]\in\hat{H}}\sum_{m,n=1}^{M_\eta}\beta_{\eta,m,n}\int_{G/H}\sum_{j,k=1}^{D_\eta}\underbrace{\sum_{i=1}^{D_\eta}\overline{\zeta^\eta_{i,j}(g^{-1},gH)}\zeta^\eta_{i,k}(g^{-1},gH)}_{\begin{scriptsize}
\delta_{j,k}
\end{scriptsize}}\<L_{\eta,j,m}U(g)^*\fii|L_{\eta,k,n}U(g)^*\fii\>\,d\mu(gH)\\
=&\sum_{[\eta]\in\hat{H}}\sum_{m,n=1}^{M_\eta}\beta_{\eta,m,n}\int_{G/H}\sum_{i=1}^{D_\eta}\<L_{\eta,i,m}U(g)^*\fii|L_{\eta,i,n}U(g)^*\fii\>\,d\mu(gH).
\end{align*}
Noting that the set of decomposable bounded operators in $\bigoplus_{[\eta]\in\hat{H}}\ell_{B_\eta}^2$, where $B_\eta$ is the set of indices $m=1,\ldots,M_\eta$ for any $[\eta]\in\hat{H}$, coincides with the set of $\bigoplus_{[\eta]\in\hat{H}}\big(\<f_{\eta,m}|E_\eta f_{\eta,n}\>\big)_{m,n=1}^{M_\eta}$, where $\hat{H}\ni[\eta]\mapsto E_\eta\in\mc L(\mc M_\eta)$ is a bounded sequence, the claim now follows from the extremality characterization stated at the beginning of this proof.
\end{proof}

\begin{remark}\label{rem:genext}
Given a measurable space $(\Omega,\Sigma)$, we say that an instrument $\mc I:\Sigma\times\mc T(\hil)\to\mc T(\mc K)$ is an {\it extreme instrument} if it is a convex extreme point of the convex set of all instruments with outcome space $(\Omega,\Sigma)$, input Hilbert space $\hil$, and output Hilbert space $\mc K$. Let $\mc I$ be a $(G/H,U,V)$--covariant instrument defined through Equation \eqref{eq:genCovInstrStr} by a minimal set of $(G/H,U,V)$--intertwiners $L_{\eta,i,m}$, $[\eta]\in\hat{H}$, $i=1,\ldots,D_\eta$, $m=1,\ldots,M_\eta$. For brevity, let us denote the set of indices $(\eta,i,m)$, where $[\eta]\in\hat{H}$, $i=1,\ldots,D_\eta$, and $m=1,\ldots,M_\eta$, by $B$.  % and $e_{\eta,i}\otimes f_{\eta,m}=:\fii_\beta$ when $\beta=(\eta,i,m)\in B$.
Using the minimal Stinespring dilation of Lemma \ref{lemma:genminlemma} and an earlier extremality characterization given in \cite{InstrutI} and recalling the section $s:G/H\to G$, we find that the above $\mc I$ is an extreme instrument if and only if, for a family $\{f^\beta_\gamma\}_{\beta,\gamma\in B}\subset L^\infty_\mu$ such that $G/H\ni x\mapsto\big(f^\beta_\gamma(x)\big)_{\beta,\gamma\in B}\in\mc L(\ell_B^2)$ is $\mu$-essentially bounded, the condition
$$
\int_{G/H}\sum_{\beta,\gamma\in B}f^\beta_\gamma(x)\<L_\beta(U\circ s)(x)^*\fii|L_\gamma(U\circ s)(x)^*\fii\>\,d\mu(x)=0
$$
for all $\fii\in\mc D$ implies $f^\beta_\gamma(x)=0$ for all $\beta,\,\gamma\in B$ and $\mu$-a.a.\ $x\in G/H$. This fact is proven in Appendix D.
\hfill $\triangle$
\end{remark}

Let us give an extremality condition which is particularly convenient when the input representation $U$ is irreducible. We formulate this result, not using minimal intertwiners, but using a particular minimal covariant dilation of a $(G/H,U,V)$--covariant instrument into a canonical system of imprimitivity. Note that we do not have assume that $H$ is compact.

\begin{proposition}\label{prop:irrep}
Let $\mc I$ be a $(G/H,U,V)$--covariant instrument and let $\pi:H\to\mc U(\hil_\pi)$ be a strongly continuous unitary representation, where $\hil_\pi$ is separable, and $J:\hil\to\mc K\otimes L^2_\mu\otimes\hil_\pi$ be an isometry such that $(L^2_\mu\otimes\hil_\pi,\msf P_\pi,U_\pi,J)$ is a minimal $(G/H,U,V)$--covariant Stinespring dilation for $\mc I$. If $\pi$ is irreducible, then $\mc I$ is an extreme instrument of the $(G/H,U,V)$--covariance structure. If $U$ is irreducible, also the converse claim holds.
\end{proposition}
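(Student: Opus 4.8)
The plan is to deduce the proposition from the extremality characterization of \cite{HaPe15} already invoked in the proofs of Theorems~\ref{theor:ExtInstr} and~\ref{theor:genExtInstr}: since $(L^2_\mu\otimes\hil_\pi,\msf P_\pi,U_\pi,J)$ is minimal and $(G/H,U,V)$--covariant, $\mc I$ is an extreme instrument of the $(G/H,U,V)$--covariance structure if and only if the only $E\in\mc L(L^2_\mu\otimes\hil_\pi)$ satisfying $E\msf P_\pi(X)=\msf P_\pi(X)E$ for all $X\in\mc B(G/H)$, $EU_\pi(g)=U_\pi(g)E$ for all $g\in G$, and $J^*(\id_{\mc K}\otimes E)J=0$ is $E=0$. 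The first step is to identify the $E$ satisfying the first two conditions. Since $(\msf P_\pi,U_\pi)$ is the canonical system of imprimitivity associated with $\pi$, the intersection of the commutants of the ranges of $\msf P_\pi$ and $U_\pi$ consists precisely of the operators acting by $(EF)(x)=E_0F(x)$ for $\mu$-a.a.\ $x$, as $E_0$ ranges over the commutant $\pi(H)'$ of the range of $\pi$; this is Mackey's imprimitivity theorem (see \cite{varadarajan}) and holds for an arbitrary closed subgroup $H\leq G$, so compactness of $H$ plays no role. The inclusion $E_0\mapsto E$ is a one-line verification using that $\zeta^\pi(g^{-1},x)=\pi\big(s(x)^{-1}gs(g^{-1}x)\big)$ with $s(x)^{-1}gs(g^{-1}x)\in H$, hence commutes with $E_0\in\pi(H)'$; the reverse inclusion is exactly the Fubini-and-section computation carried out inside the proof of Theorem~\ref{theor:genExtInstr}, where compactness of $H$ is never actually used for that step (the Radon--Nikodym factor $\sqrt{\rho(g^{-1},x)}$ in~\eqref{eq:genIndU} is a common nonzero scalar on both sides and cancels from the commutation relation).

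The second ingredient is that, for any $E$ commuting with $U_\pi$, the operator $J^*(\id_{\mc K}\otimes E)J\in\mc L(\hil)$ commutes with $U$: using $JU(g)=\big(V(g)\otimes U_\pi(g)\big)J$ one gets $U(g)^*J^*(\id_{\mc K}\otimes E)JU(g)=J^*\big(V(g)^*V(g)\otimes U_\pi(g)^*EU_\pi(g)\big)J=J^*(\id_{\mc K}\otimes E)J$. With these two facts the first assertion is immediate: if $\pi$ is irreducible, Schur's lemma gives $\pi(H)'=\C\id_{\hil_\pi}$, so every admissible $E$ equals $c\,\id_{L^2_\mu\otimes\hil_\pi}$ with $c\in\C$, and then $0=J^*(\id_{\mc K}\otimes c\,\id)J=c\,J^*J=c\,\id_\hil$ forces $c=0$; hence $\mc I$ is an extreme instrument of the $(G/H,U,V)$--covariance structure.

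For the converse I would assume $U$ irreducible and argue by contraposition. If $\pi$ is not irreducible, pick $E_0\in\pi(H)'$ that is not a scalar multiple of $\id_{\hil_\pi}$ and let $\tilde E\in\mc L(L^2_\mu\otimes\hil_\pi)$ act by $(\tilde EF)(x)=E_0F(x)$; then $\tilde E$ commutes with $\msf P_\pi$ and with $U_\pi$ and is not a scalar multiple of the identity. By the intertwining fact and Schur's lemma for the irreducible $U$, $J^*(\id_{\mc K}\otimes\tilde E)J=\lambda\,\id_\hil$ for some $\lambda\in\C$; and since $\id_{L^2_\mu\otimes\hil_\pi}$ also commutes with $\msf P_\pi$ and $U_\pi$ with $J^*(\id_{\mc K}\otimes\id)J=J^*J=\id_\hil$, the operator $E:=\tilde E-\lambda\,\id_{L^2_\mu\otimes\hil_\pi}$ is nonzero, commutes with $\msf P_\pi$ and $U_\pi$, and satisfies $J^*(\id_{\mc K}\otimes E)J=0$; by the extremality characterization $\mc I$ is not an extreme instrument of the covariance structure. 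I expect the only genuinely delicate point to be the commutant identification of the first paragraph when $H$ is not assumed compact and $G/H$ carries only a quasi-invariant measure; everything else reduces to Schur's lemma and the isometry and intertwining properties of $J$.
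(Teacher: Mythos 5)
Your proof is correct and follows essentially the same route as the paper: reduce to the extremality characterization of \cite{HaPe15} for the minimal covariant dilation, identify the intersection of the commutants of $\msf P_\pi$ and $U_\pi$ with $\pi(H)'$ via $E_0\mapsto E_0^\bullet$, and then apply Schur's lemma to $\pi$ (forward direction) and to $U$ (converse). Two small differences, both fine: you argue the converse by contraposition where the paper argues directly, and you are actually slightly more careful than the paper by working with the \emph{intersection} of the commutants rather than with $(\mathrm{ran}\,U_\pi)'$ alone (the paper's parenthetical "when $\pi$ is irreducible, then also $U_\pi$ is irreducible" is not needed for, and does not in general follow from, what is being proved — only the triviality of the joint commutant is used); your explicit remark about the Radon--Nikodym factor cancelling in the commutation relation is exactly why the argument survives without compactness of $H$.
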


\begin{proof}
For the duration of this proof, define the map $\mc L(\hil_\pi)\ni E\mapsto E^\bullet\in\mc L(L^2_\mu\otimes\hil_\pi)$ through $(E^\bullet f)(x)=Ef(x)$ for all $E\in\mc L(\hil_\pi)$, $f\in L^2_\mu\otimes\hil_\pi$, and $x\in G/H$. Suppose first that $\pi$ is irreducible. This means that the commutant $({\rm ran}\,\pi)'$ of the range of $\pi$ is $\C\id_{\hil_\pi}$. The commutant $({\rm ran}\,U_\pi)'$ of the range of $U_\pi$ is, according to the proof of Theorem \ref{theor:genExtInstr}, the image of $({\rm ran}\,\pi)'$ under the map $E\mapsto E^\bullet$. Clearly, this means that $({\rm ran}\,U_\pi)'=\C\id_{L^2_\mu\otimes\hil_\pi}$. (This just means that, when $\pi$ is irreducible, then also $U_\pi$ is irreducible which is well known.) Obviously, the map $({\rm ran}\,U_\pi)'\ni D\mapsto J^*(\id_{\mc K}\otimes D)J\in\mc L(\hil)$ is now injective, meaning that $\mc I$ is an extreme instrument of the $(G/H,U,V)$--covariance structure.

Suppose then that $U$ is irreducible and $\mc I$ is an extreme instrument of the $(G/H,U,V)$--covariance structure. Using the intertwining property $JU(g)=\big(V(g)\otimes U_\pi(g)\big)J$ for all $g\in G$ and the fact that $({\rm ran}\,U_\pi)'$ is the image of $({\rm ran}\,\pi)'$ under $E\mapsto E^\bullet$, it easily follows that $U(g)J^*(\id_{\mc K}\otimes E^\bullet)J=J^*(\id_{\mc K}\otimes E^\bullet)JU(g)$ for all $g\in G$ and $E\in({\rm ran}\,\pi)'$, implying that, for all $E\in({\rm ran}\,\pi)'$, there is $z(E)\in\C$ such that $J^*(\id_{\mc K}\otimes E^\bullet)J=z(E)\id_\hil$, i.e.,\ $0=J^*\id_{\mc K}\otimes\big(E^\bullet-z(E)\id_{L^2_\mu\otimes\hil_\pi}\big)J=J^*\id_{\mc K}\otimes\big(E-z(E)\id_{\hil_\pi}\big)^\bullet J$. Since $\mc I$ is an extreme instrument of the $(G/H,U,V)$--covariance structure, the extremality condition given in \cite{HaPe15} (which has also appeared in the proof of Theorem \ref{theor:genExtInstr}) implies that $E=z(E)\id_{\hil_\pi}$ for all $E\in({\rm ran}\,\pi)'$, i.e.,\ $({\rm ran}\,\pi)'=\C\id_{\hil_\pi}$ meaning that $\pi$ is irreducible.
\end{proof}

\begin{remark}\label{rem:kiet}
Let, for each $[\eta]\in\hat{H}$, $M_\eta\in\{0,\infty\}\cup\N$, and let $L_{\eta,i,m}:\mc D\to\mc K$, $[\eta]\in\hat{H}$, $i=1,\ldots,D_\eta$, $m=1,\ldots,M_\eta$, constitute a minimal set of $(G/H,U,V)$--intertwiners. Define, for all $[\eta]\in\hat{H}$ and $i=1,\ldots,D_\eta$, the isometry $V_{\eta,i}:\mc K\to\mc K_\eta\otimes\eta$ through $V_{\eta,i}\psi=e_{\eta,i}\otimes\psi$ for all $\psi\in\mc K$. This allows us to define the operators $B_{\eta,m}:\mc D\to\mc K_\eta\otimes\mc K$ for all $[\eta]\in\hat{H}$ and $m=1,\ldots,M_\eta$ through
$$
B_{\eta,m}=\sum_{i=1}^{D_\eta}V_{\eta,i}L_{\eta,i,m}.
$$
Thus, $L_{\eta,i,m}=V_{\eta,i}^*B_{\eta,m}$ for all $[\eta]\in\hat{H}$, $i=1,\ldots,D_\eta$, and $m=1,\ldots,M_\eta$ and one easily finds that
\begin{equation}\label{eq:BIntertwine}
B_{\eta,m}U(h)=\big(\eta(h)\otimes V(h)\big)B_{\eta,m},\qquad [\eta]\in\hat{H},\quad m=1,\ldots,M_\eta,\quad h\in H.
\end{equation}
This intertwining property can often be easier to verify that the property of Equation \eqref{eq:genHinv} using Clebsch-Gordan methods.

The instrument defined by the intertwiners $L_{\eta,i,m}$, $[\eta]\in\hat{H}$, $i=1,\ldots,D_\eta$, $m=1,\ldots,M_\eta$, is an extreme instrument of the $(G/H,U,V)$--covariance structure if and only if the set
$$
\left\{\int_G U(g)B_{\eta,m}^*B_{\eta,n}U(g)^*\,d\mu_G(g)\,\bigg|\,m,\,n=1,\ldots,M_\eta,\ [\eta]\in\hat{H}\right\}
$$
is strongly independent. The operators
$$
\int_G U(g)B_{\eta,m}^*B_{\eta,n}U(g)^*\,d\mu_G(g)=\int_{G/H} U(g)B_{\eta,m}^*B_{\eta,n}U(g)^*\,d\mu(gH)
$$
are defined in the same way as the integrated operators in the claim of Theorem \ref{theor:genExtInstr}. The above equality follows from Equation \eqref{eq:BIntertwine} upon choosing $\mu$ so that the associated left Haar measure $\mu_H$ of $H$ (i.e.,\ the left Haar measure of $H$ such that $\int_G f\,d\mu_G=\int_{G/H}\int_H f(gh)\,d\mu_H(h)\,d\mu(gH)$ for all $f\in L^1(G)$) is normalized, i.e.,\ $\mu_H(H)=1$. Similarly, we have
$$
\int_{G/H}\sum_{i=1}^{D_\eta}U(g)L_{\eta,i,m}^*L_{\eta,i,n}U(g)^*\,d\mu(gH)=\int_G\sum_{i=1}^{D_\eta}U(g)L_{\eta,i,m}^*L_{\eta,i,n}U(g)^*\,d\mu_G(g)
$$
for all $[\eta]\in\hat{H}$ and $m,\,n=1,\ldots,M_\eta$ which can be substituted in the claim of Theorem \ref{theor:genExtInstr}. In particular, these operators commute with the representation $U$.\hfill $\triangle$
\end{remark}

\begin{example}
We finally study the case of covariant phase space measurements and the corresponding instruments. The pre-measurement system is a quantum system with $N$ degrees of freedom and associated with the Hilbert space $L^2(\R^N)$ and the post-measurement system has $N'$ degrees of freedom and is associated with the Hilbert space $L^2(\R^{N'})$ in the position representation. The position shifts act on the states by shifting the argument of a state vector associated with a pure state, i.e.,\ through the unitary representation $U_N:\R^N\to\mc U\big(L^2(\R^N)\big)$, $\big(U_N(\vec{q})\fii\big)(\vec{x})=\fii(\vec{x}-\vec{q})$ for all $\vec{q}\in\R^N$, $\fii\in L^2(\R^N)$, and a.a.\ $\vec{x}\in\R^N$. The momentum boosts are hence associated with the unitary representation $V_N:\R^N\to\mc U\big(L^2(\R^N)\big)$, $V_N(\vec{p})=\mc F^*U_N(\vec{p})\mc F$ for all $\vec{p}\in\R^N$, where $\mc F$ is the unitary Fourier transform operator, i.e.,\ for all $\vec{p}\in\R^N$, $\fii\in L^2(\R^N)$, and a.a.\ $\vec{x}\in\R^N$, $\big(V_N(\vec{p})\fii\big)(\vec{x})=e^{i\vec{x}^T\vec{p}}\fii(\vec{x})$. By defining
$$
W_N(\vec{q},\vec{p}):=e^{\frac{i}{2}\vec{q}^T\vec{p}}U_N(\vec{q})V_N(\vec{p}),\qquad\vec{q},\,\vec{p}\in\R^N,
$$
we are able to encapsulate position shifts and momentum boosts into phase space translations giving rise to a projective unitary representation $W_N:\R^N\times\R^N\to\mc U\big(L^2(\R^N)\big)$. Indeed, one easily checks that, upon defining the $(2N\times 2N)$-matrix
$$
S_N:=\left(\begin{array}{cc}
0&\id_N\\
-\id_N&0
\end{array}\right)
$$
in the block form and denoting the phase space points by $\vec{z}=(\vec{q},\vec{p})\in\R^{2N}$, we have
\begin{equation}\label{eq:Wmult}
W_N(\vec{z}+\vec{w})=e^{\frac{i}{2}\vec{z}^T S_N\vec{w}}W_N(\vec{z})W_N(\vec{w}),\qquad\vec{z},\,\vec{w}\in\R^N.
\end{equation}
This projective representation is called as the {\it Weyl representation}. In quantum optics literature, the operators $W_N(\vec{z})$, $\vec{z}\in\R^N$, are associated to the {\it displacement operators}.

Let us next introduce the {\it Weyl-Heisenberg group} $H_N$ which coincides, as a set, with $\R^{2N}\times\T$ and whose group law is given by
$$
(\vec{z},s)(\vec{w},t)=(\vec{z}+\vec{w},ste^{-i\vec{z}^T S_N\vec{w}}),\qquad\vec{z},\,\vec{w}\in\R^{2N},\quad s,\,t\in\T.
$$
Let us also define the map $D_N:H_N\to\mc U\big(L^2(\R^N)\big)$ through
$$
D_N(\vec{z},s)=\overline{s}W_N(\vec{z}),\qquad\vec{z}\in\R^{2N},\quad s\in\T.
$$
Using Equation \eqref{eq:Wmult}, one easily sees that $D_N$ is an ordinary strongly continuous unitary representation. In fact, $H_N$ can be seen as a central extension of the additive group $\R^{2N}$ by the multiplier $(\vec{z},\vec{w})\mapsto e^{-i\vec{z}^T S_N\vec{w}}$ and $D_N$ as the lifting of the Weyl representation $W_N$ to $H_N$.

Let $Y$ be a real $(2N\times 2N')$--matrix such that $Y^T S_{N'} Y=S_N$. We let $U=D_N$ and define $V:H_N\to\mc U\big(L^2(\R^{N'})\big)$ through $V(\vec{w},s):=D_{N'}(Y\vec{w},s)$ for all $\vec{w}\in\R^{2N}$ and $s\in\T$. One may easily check that $V$ is an ordinary unitary representation as well. The value space of the measurements we are interested in is $\R^{2N}$, so that the stability subgroup is $H:=\{0\}\times\T$. Since the restrictions $U|_H$ and $V|_H$ coincide and have values in the respective centres of $\mc L\big(L^2(\R^N)\big)$ and $\mc L\big(L^2(\R^{N'})\big)$, the intertwining property of Equation \eqref{eq:Hinv} becomes irrelevant. Moreover, there is only one $\eta\in\hat{H}$ (the trivial one) appearing in this scenario. This means that the relevant (minimal) sets of $(\R^{2N},U,V)$--intertwiners are (weakly independent) sets $\{L_m\}_{m=1}^M\subset\mc L\big(L^2(\R^N),L^2(\R^{N'})\big)$, with $M\in\N\cup\{\infty\}$, of Hilbert-Schmidt operators such that
$$
\sum_{m=1}^M \tr{L_m^*L_m}=\frac{1}{\pi^N}.
$$
Indeed, perusing \cite{kiukas_etal2006} and Section 6.1 of \cite{HaPe15}, we see that, in item (a) of the beginning of this section, we may choose $\mc D=L^2(\R^N)$ and, in item (b), $\|\cdot\|_1$ can be chosen as the ordinary Hilbert norm so that the intertwiners are simply bounded operators. The Hilbert-Schmidt property follows from the square-integrability of $U$, i.e.,\ for all unit vectors $\fii,\,\psi\in L^2(\R^N)$,
$$
\int_\T\int_{\R^{2N}}|\<\fii|U(\vec{z},s)\psi\>|^2\,d\vec{z}\,ds=\int_{\R^{2N}}|\<\fii|W_N(\vec{z})\psi\>|^2\,d\vec{z}=\pi^N
$$
which, in turn, implies, according to Lemma 2 of \cite{kiukas_etal2006} that, for positive $A\in\mc L\big(L^2(\R^N)\big)$ and $T\in\mc T\big(L^2(\R^N)\big)$, the function $\R^{2N}\ni\vec{z}\mapsto\tr{W_N(\vec{z})TW_N(\vec{z})^*A}$ is Lebesgue-integrable if and only if $A\in\mc T\big(L^2(\R^N)\big)$ in which case $\int_{\R^{2N}}\tr{W_N(\vec{z})TW_N(\vec{z})^*A}\,d\vec{z}=\pi^N\tr{T}\tr{A}$.

We say that an instrument $\mc I:{\mc B}(\R^{2N})\times\mc T\big(L^2(\R^N)\big)\to\mc T\big(L^2(\R^{N'})\big)$ is a {\it covariant phase space instrument} if it is $(\R^{2N},U,V)$--covariant, i.e.,\ for all $\vec{z}\in\R^{2N}$, $X\in{\mc B}(\R^{2N})$, and $\rho\in\mc S\big(L^2(\R^N)\big)$,
$$
\mc I\big(X+\vec{z},W_N(\vec{z})\rho W_N(\vec{z})^*\big)=W_{N'}(Y\vec{z})\mc I(X,\rho)W_{N'}(Y\vec{z})^*.
$$
For any covariant phase space instrument $\mc I$ there is $M\in\N\cup\{\infty\}$ and a minimal set $\{L_m\}_{m=1}^M$ of $(\R^{2N},U,V)$--intertwiners like those above such that
$$
\mc I(X,\rho)=\int_X\sum_{m=1}^M W_{N'}(Y\vec{z})L_m W_N(\vec{z})^*\rho W_N(\vec{z})L_m^*W_{N'}(Y\vec{z})^*\,d\vec{z}
$$
for all $X\in\mc B(\R^{2N})$ and $\rho\in\mc S\big(L^2(\R^N)\big)$. The observable measured by $\mc I$ is easily seen to coincide with $\msf M_S$,
$$
\msf M_S(X)=\frac{1}{\pi^N}\int_X W_N(\vec{z})SW_N(\vec{z})^*\,d\vec{z},\qquad X\subseteq\R^{2N}\ {\rm (measurable)},
$$
defined by $S=\pi^N\sum_{m=1}^M L_m^*L_m\in\mc S\big(L^2(\R^N)\big)$. Moreover, this covariant phase space instrument $\mc I$ is an extreme point of the $(\R^{2N},U,V)$--covariance structure if and only if $M=1$. Indeed, if $M=1$, extremality follows immediately from Theorem \ref{theor:genExtInstr}. If, on the other hand, $M>1$, then, using Lemma 2 of \cite{kiukas_etal2006}, we have that $\int_{\R^{2N}}W_N(\vec{z})L_m^*L_n W_N(\vec{z})\,d\vec{z}$ is a multiple of the identity for $1\leq m,\,n\leq M$. According to Theorem \ref{theor:genExtInstr}, $\mc I$ cannot be an extreme instrument of the $(\R^{2N},U,V)$--covariance structure. 

According to Remark \ref{rem:genext}, a covariant phase space instrument $\mc I$ associated with the intertwiners $L_m$, $m=1,\ldots,M\in\N\cup\{\infty\}$ is an extreme instrument if and only if, for $\{f_{m,n}\}_{m,n=1}^M\subset L^\infty(\R^{2N})$ such that $\R^{2N}\ni\vec{z}\mapsto\big(f_{m,n}(\vec{z})\big)_{m,n=1}^M\in\mc L(\ell^2_{\N_M})$ (where $\N_M$ is the set of indices $m=1,\ldots,\,M$) is an essentially bounded field, the condition
$$
\int_{\R^{2N}}\sum_{m,n=1}^M f_{m,n}(\vec{z})W_N(\vec{z})L_m^*L_nW_N(\vec{z})^*\,d\vec{z}=0
$$
implies $f_{m,n}=0$ for all $m,\,n=1,\ldots,\,M$. However, this extremality characterization is greatly simplified recalling that an extreme instrument is also an extreme instrument of the convex subset of covariant phase space instruments and thus only has one intertwiner, i.e.,\ $M=1$. This can also be proven directly: Assume that the covariant phase space instrument associated with the minimal set $\{L_m\}_{m=1}^M$ of intertwiners is an extreme instrument. We make the counter assumption that $M\geq2$, so that $L_1$ and $L_2$ are non-zero, implying that $\|L_1\|_{HS}\neq0\neq\|L_2\|_{HS}$ where $\|K\|_{HS}=\sqrt{\tr{K^*K}}$ is the Hilbert-Schmidt norm of the Hilbert-Schmidt operator $K$. Let us define the constant functions $f_{1,1}\equiv\|L_1\|_{HS}^{-2}$, $f_{2,2}\equiv-\|L_2\|_{HS}^{-2}$, and $f_{m,n}\equiv0$ otherwise for $m,\,n=1,\ldots,M$. Using Lemma 2 of \cite{kiukas_etal2006}, it easily follows that
$$
\int_{\R^{2N}}\sum_{m,n=1}^M f_{m,n}(\vec{z})W_N(\vec{z})L_m^*L_nW_N(\vec{z})^*\,d\vec{z}=0\quad\Longrightarrow\quad(f_{m,n})_{m,n=1}^M\equiv0,
$$
where the final implication following from the extremality characterization clearly does not hold. Thus, $M=1$. It finally follows that a covariant phase space instrument $\mc I$ is an extreme instrument if and only if (any) minimal set of intertwiners associated with $\mc I$ is a singleton $\{L\}$ and, for any $f\in L^\infty(\R^{2N})$,
$$
\int_{\R^{2N}}f(\vec{z})W_N(\vec{z})L^*LW_N(\vec{z})^*\,d\vec{z}=0\quad\Longrightarrow\quad f\equiv0.
$$
We note that a covariant phase space instrument is an extreme instrument if and only if its pointwise Kraus rank \cite{InstrutI} is 1 and the covariant phase space observable it measures is an extreme POVM \cite{OptObs}. 
\hfill $\triangle$
\end{example}

\section{Conclusions}

In this work we have presented a comprehensive study of covariant quantum measurements studied in the form of POVMs and instruments. We have given exhaustive characterizations for these covariant measurement devices and for their extremality properties. In particular, in Examples \ref{ex:Sym} and \ref{ex:genSym}, we have introduced a parametrized family $\{\msf M^\alpha\}_{\alpha\geq0}$ of POVMs covariant w.r.t.\ the symmetric group $S_D$ in dimension $D$ where $\msf M^0$ is a rank-1 PVM and, whenever $\alpha>0$, $\msf M^\alpha$ is extreme (within the set of all POVMs) rank-1 informationally complete POVM. Since being a rank-1 PVM and a rank-1 extreme informationally complete POVM are complementary properties for optimal quantum observables according to \cite{OptObs}, we observe the remarkable fact that these complementary classes are just a `small deviation' away from each other in the sense that even a small positive value of $\alpha$ produces a POVM in the second optimality class whereas $\msf M^0$ is firmly in the first class.

There are several questions that remain to be studied in the field of symmetric quantum measurements. Post-processing is a method of producing a new POVM from another one using only classical data processing. In the discrete case, this processing is described by probability (Markov) matrices $(p_{x|y})$: a POVM $\Mo=(\Mo_x)_{x\in\Set}$ is post processed from a POVM $\msf N=(\msf N_y)_{y\in\mathbb Y}$ if there exist conditional probabilites $0\le p_{x|y}\le 1$ such that $\sum_{x\in\Set}p_{x|y}=1$ and $\Mo_x=\sum_{y\in\mathbb Y}p_{x|y}\msf N_y$; we denote this pre-ordering by  $\msf M\leq_{\rm p.p.}\msf N$.
  The post-processing-maximal POVMs, i.e.,\ those POVMs $\msf M$ such that $\msf M\leq_{\rm p.p.}\msf N$ for some POVM $\msf N$ implies $\msf N\leq_{\rm p.p.}\msf M$, have been identified as exactly the rank-1 POVMs \cite{OptObs}. Since it might happen that there is no rank-1 covariant POVM, it is reasonable to study the maximality w.r.t.\ the post-processing pre-order restricted to the class of $(\Set,U)$--covariant POVMs where the $G$-space $\Set$ may vary. Without restricting generality, we may assume that the probability matrices involved are $G$-equivariant.\footnote{Suppose that $\Set$ and $\mathbb Y$ are $G$-spaces and $\Mo$ [resp.\ $\msf N$] is a $(\Set,U)$--covariant [resp.\ $(\mathbb Y,U)$--covariant] POVM such that 
$\Mo_x=\sum_{y\in\mathbb Y}p'_{x|y}\msf N_y$ for some probability matrix $(p'_{x|y})$.
Define the probability matrix $p_{x|y}:=(\# G)^{-1}\sum_{g\in G} p'_{gx|gy}$ which is equivariant: $p_{x|gy}=p_{g^{-1}x|y}$. Since $\Mo_{x}=U(g)^*\Mo_{gx} U(g)=\sum_{y\in\mathbb Y}p'_{gx|y}U(g)^*\msf N_yU(g)=
\sum_{y'\in\mathbb Y}p'_{gx|gy'}U(g)^*\msf N_{gy'}U(g)=\sum_{y'\in\mathbb Y}p'_{gx|gy'}\msf N_{y'}$
one gets $\sum_{y\in\mathbb Y}p_{x|y}\msf N_y=(\#G)^{-1}\sum_{g\in G}\sum_{y'\in\mathbb Y}p'_{gx|gy'}\msf N_{y'}=\Mo_x$.  }
Another important problem arises in the case where there are no rank-1 covariant POVMs:
Might it happen that the only covariant instruments measuring  a covariant POVM $\Mo$ are nuclear (i.e.\ determine the future) although $\Mo$ is not of rank 1?
 Without the requirement of covariance, an observable determines the future if and only if it is of rank 1, implying that post-processing maximality and determination of the future are identical properties. Whether this result also holds for the respective optimality properties restricted to covariance structures is still an open problem.

Determination of the past, i.e.\ informational completeness, is often closely tied to covariance. Indeed, most of the relevant informationally complete POVMs, e.g.\ the covariant phase space observable generated by the vacuum, arise from covariance structures. However, it remains to be determined under which conditions does a covariance structure contain informationally complete observables. Similarly, whether a covariance structure allows a PVM is an interesting question which, however, has been solved in the case of an Abelian symmetry group \cite{HaPe11,Holevo83}.

An observable $\msf M$ determines its values if, for any outcome $x$ (or, in the continuous case, for any set of outcomes) and $\varepsilon>0$ there is an input state $\rho$ such that $p_\rho^{\msf M}(x)=\tr{\rho\msf M_x}>1-\varepsilon$. It easily follows that $\msf M$ determines its values if $\|\msf M_x\|=1$ for all outcomes $x$; this is called as the norm-1 property. Value determination within covariance structures is a further valid avenue of research. In \cite{OptObs}, it was shown that value determination is related to (although not exactly the same as) pre-processing purity: an observable $\msf M=(\msf M_x)_x$ is pre-processing pure if and only if, from $\msf M_x=\Phi^*(\msf N_x)$ for some POVM $\msf N=(\msf N_x)_x$, some channel $\Phi$, and all $x$, it follows that $\msf N_x=\Psi^*(\msf M_x)$ for some channel $\Psi$ and all $x$. This means that $\msf M$ cannot be realized by adding `quantum noise' in the form of a channel to the pre-measurement state and then measuring a genuinely `cleaner' POVM. Such a scenario is called as pre-processing. Within a covariance structure, we cas restrict the quantum noise into covariant channels.\footnote{If $\msf M_x=\Phi^*(\msf N_x)$ where $\msf M$ [resp.\ $\msf N$] is $(\Set,U)$--covariant [resp.\ $(\Set,V)$--covariant] then $\msf M_x=\tilde \Phi^*(\msf N_x)$ where the covariant channel $\tilde \Phi$ is defined by $\tilde\Phi(\rho)=(\#G)^{-1}\sum_{g\in G}V(g)^*\Phi\big(U(g)\rho U(g)^*\big)V(g)$.} In absence of covariance, pre-processing purity was shown in \cite{OptObs} to correspond to the observable being essentially a direct sum of a PVM and some other POVM. How the presence of symmetries affects this characterization is left as a future research problem.

\subsection*{Acknowledgements}
E.H.\ has received funding from the National Natural Science Foundation of China (grant no.\ 11875110).

\section*{Appendix A}

Fix a finite group $G$ and let $m:\,G\times G\to\T$ be a 2-cocycle, i.e.\ it satisfies the cocycle condition
$m(g,hk)m(h,k)\equiv m(gh,k)m(g,h)$. Define a function
$$
t(g):=\prod_{h\in G} m(g,h)\in\T
$$
so that, for all $g,\,h\in G$,
$$
\frac{t(g)t(h)}{t(gh)}=\prod_{k\in G}\frac{m(g,hk)m(h,k)}{m(gh,k)}=m(g,h)^{\#G}.
$$
Hence, we have the least positive integer $p\le\#G$ such that $m(g,h)^p\equiv t'(g)t'(h)/t'(gh)$ for some function $t':G\to\T$. Write $t'(g)=e^{ip\fii(g)}$ where $\fii$ is real valued and define a new 2-cocycle $m'$ via $m'(g,h):=e^{i\fii(gh)}e^{-i\fii(g)}e^{-i\fii(h)}m(g,h)$. Hence, $m'(g,h)^p \equiv 1$. By defining a 2-cocycle $m''(g,h):=m'(g,h)/m'(e,e)$ we also have
$m''(g,h)^p \equiv 1$ and, in addition, $m''(e,e)=1$.

One can replace the projective unitary representation $g\mapsto U(g)$ with the new projective unitary representation $U'(g):=m'(e,e)e^{i\fii(g)}U(g)$. Indeed, $U(gh)=m(g,h)U(g)U(h)$ implies $U'(gh)=m''(g,h)U'(g)U'(h)$. Furthermore, the covariance condition $\Mo_{gx}=U(g) \Mo_x U(g)^*$ equals with $\Mo_{gx}=U'(g) \Mo_x U'(g)^*$ so that, without restricting generality, we may assume that
the multiplier $m$ of $U$ satisfies $m(e,e)=1$ and $m(g,h)^p\equiv1$ for some (minimal) integer $p>0$.

\section*{Appendix B}

Let us make the same assumptions as in Section \ref{sec:finInstr} and fix an $(\Set,U,V)$--covariant instrument $\mc I=(\mc I_x)_{x\in\Set}$ and a minimal Stinespring dilation $(\mc M,\msf P,J)$ for $\mc I$. We first show that there is a unitary representation $\overline{U}:G\to\mc U(\mc M)$ such that $JU(g)=\big(V(g)\otimes\overline{U}(g)\big)J$ for all $g\in G$. In the sequel, we denote, for all $Y\subseteq\Set$, $\mc I_Y:=\sum_{x\in Y}\mc I_x$. Let us pick  $n\in\N$, $B_1,\ldots,B_n\in\mc L(\mc K)$, $x_1,\ldots,x_n\in\Set$, and $\fii_1,\ldots,\fii_n\in\hil$ and define $\xi:=\sum_{i=1}^n(B_i\otimes\msf P_{x_i})J\fii_i$ and $\xi_g:=\sum_{i=1}^n(B_iV(g)^*\otimes\msf P_{gx_i})JU(g)\fii_i$ for all $g\in G$. Using the $(\Set,U,V)$--covariance, we have
\begin{align*}
\|\xi_g\|^2&=\sum_{i,j=1}^n\<JU(g)\fii_i|\big(V(g)B_i^*B_jV(g)^*\otimes\msf P_{gx_i}\msf P_{gx_j}\big)JU(g)\fii_j\>\\
&=\sum_{i,j=1}^n\<U(g)\fii_i|\mc I_{\{gx_i\}\cap\{gx_j\}}^*\big(V(g)B_i^*B_jV(g)^*\big)U(g)\fii_j\>\\
&=\sum_{i,j=1}^n\<\fii_i|\mc I_{\{x_i\}\cap\{x_j\}}^*(B_i^*B_j)\fii_j\>=\|\xi\|^2
\end{align*}
for all $g\in G$. The minimality of $(\mc M,\msf P,J)$ implies that we may define, for each $g\in G$, a unique isometry $\tilde{U}(g)\in\mc L(\mc K\otimes\mc M)$ such that $\tilde{U}(g)(B\otimes\msf P_x)J=\big(BV(g)^*\otimes\msf P_{gx}\big)JU(g)$ for all $B\in\mc L(\mc K)$ and $x\in\Set$. It is easily checked (using again the minimality) that $\tilde{U}(gh)=\tilde{U}(g)\tilde{U}(h)$ for all $g,\,h\in G$ from whence it easily follows that $\tilde{U}:G\to\mc U(\mc K\otimes\mc M)$ is a unitary representation.

Let $\xi\in\mc K\otimes\mc M$ be as above and pick $g\in G$ and $B\in\mc L(\mc K)$. Using covariance, we get
\begin{align*}
\<\xi|\tilde{U}(g)(B\otimes\id_{\mc M})\xi\>&=\sum_{i,j=1}^n\<(B_i\otimes\msf P_{x_i})J\fii_i|(BB_jV(g)^*\otimes\msf P_{gx_j})JU(g)\fii_j\>\\
&=\sum_{i,j=1}^n\<\fii_i|\mc I_{\{x_i\}\cap\{gx_j\}}^*\big(B_i^*BB_jV(g)^*\big)U(g)\fii_j\>\\
&=\sum_{i,j=1}^n\<\fii_i|\mc I_{\{gg^{-1}x_i\}\cap\{gx_j\}}^*\big(V(g)V(g)^*B_i^*BB_jV(g)^*\big)U(g)\fii_j\>\\
&=\sum_{i,j=1}^n\<U(g)^*\fii_i|\mc I_{\{g^{-1}x_i\}\cap\{x_j\}}^*\big(V(g)^*B_i^*BB_j\big)\fii_j\>\\
&=\sum_{i,j=1}^n\<\big(B^*B_iV(g)\otimes\msf P_{g^{-1}x_i}\big)JU(g)^*\fii_i|(B_j\otimes\msf P_{x_j})J\fii_j\>\\
&=\sum_{i,j=1}^n\<\tilde{U}(g)^*(B^*B_i\otimes\msf P_{x_i})J\fii_i|(B_j\otimes\msf P_{x_j})J\fii_j\>=\<\xi|(B\otimes\id_{\mc M})\tilde{U}(g)\xi\>
\end{align*}
which, together with the minimality, implies that $\tilde{U}(g)(B\otimes\id_{\mc M})=(B\otimes\id_{\mc M})\tilde{U}(g)$ for all $g\in G$ and $B\in\mc L(\mc K)$. This means that there is a unique unitary representation $\overline{U}:G\to\mc U(\mc M)$ such that $\tilde{U}(g)=\id_{\mc K}\otimes\overline{U}(g)$ for all $g\in G$. Furthermore, for any $g\in G$, $x\in\Set$, and $\xi$ as above,
\begin{align*}
\big(\id_{\mc K}\otimes\overline{U}(g)\msf P_x\overline{U}(g)^*\big)\xi&=\sum_{i=1}^n\tilde{U}(g)(\id_{\mc K}\otimes\msf P_x)\tilde{U}(g)^*(B_i\otimes\msf P_{x_i})J\fii_i\\
&=\sum_{i=1}^n\tilde{U}(g)\big(B_iV(g)\otimes\msf P_x\msf P_{g^{-1}x_i})JU(g)^*\fii_i\\
&=\sum_{i=1}^n\tilde{U}(g)\big(B_iV(g)\otimes\msf P_{\{x\}\cap\{g^{-1}x_i\}})JU(g)^*\fii_i\\
&=\sum_{i=1}^n(B_i\otimes\msf P_{\{gx\}\cap\{x_i\}})J\fii_i=(\id_{\mc K}\otimes\msf P_{gx})\xi.
\end{align*}
Minimality again implies that $\overline{U}(g)\msf P_x\overline{U}(g)^*=\msf P_{gx}$ for all $g\in G$ and $x\in\Set$.

It follows that the pair $(\overline{U},\msf P)$ is an example of an imprimitivity system. Let us define, for each orbit $\orb\in\Orb$, the Hilbert space $\mc M^{\orb}:=\big(\sum_{x\in\orb}\msf P_x\big)\mc M$ the map $\overline{U}^{\orb}:G\to\mc U(\mc M^{\orb})$, $\overline{U}^{\orb}(g)=\sum_{x\in\orb}\msf P_x\overline{U}(g)|_{\mc M^{\orb}}$ for all $g\in G$, and the PVM $\msf P^{\orb}=(\msf P^{\orb}_x)_{x\in\orb}:=(\msf P_x)_{x\in\orb}$ in $\mc M^{\orb}$. It easily follows that $\overline{U}^{\orb}$ is still a unitary representation and $\overline{U}^{\orb}(g)\msf P^{\orb}_x\overline{U}^{\orb}(g)^*=\msf P^{\orb}_{gx}$ for all $g\in G$ and $x\in\orb$. This means that, for any orbit $\orb$, $(\overline{U}^{\orb},\msf P^{\orb})$ is a transitive system of imprimitivity as $G$ acts transitively in any orbit. Mackey's imprimitivity theorem tells us that, for any orbit $\orb$, we may assume (possibly by tweaking the isometry $J$) that there is a (finite-dimensional) Hilbert space $\hil^{\orb}$ and a unitary representation $\pi^{\orb}:H_{\orb}\to\mc U(\hil^{\orb})$ such that $\mc M^{\orb}=\C^{\# \orb}\otimes\hil^{\orb}$,
\begin{equation}\label{eq:transitiveU}
\big(\overline{U}^{\orb}(g)f\big)(x)=\zeta^{\orb}(g^{-1},x)f(g^{-1}x),\qquad g\in G,\quad f\in\mc M^{\orb},\quad x\in\orb,
\end{equation}
where $\zeta^{\orb}:=\zeta^{\pi^{\orb}}$ is the cocycle associated with $\pi^{\orb}$, and
\begin{equation}\label{eq:transitiveP}
\msf P^{\orb}_x f=f(x),\qquad x\in\orb,\quad f\in\mc M^{\orb}.
\end{equation}
Note that we identify $\mc M^{\orb}$ with the Hilbert space of functions $f:\orb\to\hil^{\orb}$. In total, $(\overline{U},\msf P)$ is a direct sum of these {\it canonical systems of imprimitivity} $(\overline{U}^{\orb},\msf P^{\orb})$ over $\orb\in\Orb$.

\section*{Appendix C}

Let us now directly see how the extremality characterization within the set of all instruments presented in Remark \ref{rem:ext} implies the extremality within the set of $(\Set,U,V)$--covariant instruments. We continue to use the notations fixed in Section \ref{sec:finInstr}. Let us assume that an $(\Set,U,V)$--covariant instrument $\mc I=(\mc I_x)_{x\in\Set}$ is an extreme instrument. Let
$$
\{L_{\eta,i,m}^{\orb}\,|\,m=1,\ldots,M_\eta,\ i=1,\ldots,D_\eta,\ [\eta]\in\hat{H}_{\orb},\ \orb\in\Orb\}
$$
be a minimal set of $(\Set,U,V)$--intertwiners, where $M_\eta\in\{0\}\cup\N$ for all $\orb\in\Orb$ and $[\eta]\in\hat{H}_{\orb}$. Let $\beta_{\eta,m,n}^{\orb}\in\C$, $\orb\in\Orb$, $[\eta]\in\hat{H}_{\orb}$, $m,\,n=1,\ldots,M_\eta$, be such that
$$
\sum_{\orb\in\Orb}\sum_{g\in G}\sum_{[\eta]\in\hat{H}_{\orb}}\sum_{i=1}^{D_\eta}\sum_{m,n=1}^{M_\eta}\beta_{\eta,m,n}^{\orb}L_{\eta,i,m}^{\orb\,*}L_{\eta,i,n}^{\orb}=0.
$$
Denote $\gamma_{x,\eta,\tj,i,j,m,n}=(\# H_{Gx})\beta_{\eta,m,n}^{Gx}$ for all $x\in\Set$ whenever $[\eta]=[\tj]\in\hat{H}_{Gx}$, $i=j\in\{1,\ldots,D_\eta\}$, and $m,\,n=1,\ldots,M_\eta$. Otherwise, $\gamma_{x,\eta,\tj,i,j,m,n}=0$. Using similar tricks as earlier (and denoting by $\delta_{j,k}$ the Kronecker symbol, i.e.,\ $\delta_{j,k}=1$ if $j=k$ and, otherwise, $\delta_{j,k}=0$), we find
\begin{align*}
&\sum_{\orb\in\Orb}\sum_{x\in\orb}\sum_{[\eta],[\tj]\in\hat{H}_{\orb}}\sum_{i=1}^{D_\eta}\sum_{j=1}^{D_\tj}\sum_{m=1}^{M_\eta}\sum_{n=1}^{M_\tj}\gamma_{x,\eta,\tj,i,j,m,n}K_{x,\eta,i,m}^*K_{x,\tj,j,n}\\
=&\sum_{\orb\in\Orb}\sum_{x\in\orb}\sum_{[\eta]\in\hat{H}_{\orb}}\sum_{i=1}^{D_\eta}\sum_{m,n=1}^{M_\eta}(\# H_{\orb})\beta_{\eta,m,n}^{\orb}K_{x,\eta,i,m}^*K_{x,\eta,i,n}\\
=&\sum_{\orb\in\Orb}\sum_{x\in\orb}\sum_{[\eta]\in\hat{H}_{\orb}}\sum_{j,k=1}^{D_\eta}\sum_{m,n=1}^{M_\eta}(\# H_{\orb})\underbrace{\sum_{i=1}^{D_\eta}\overline{\zeta^\eta_{i,j}\big(s_{\orb}(x)^{-1},x\big)}\zeta^\eta_{i,k}\big(s_{\orb}(x)^{-1},x\big)}_{=\delta_{j,k}}\times\\
&\times\beta_{\eta,m,n}^{\orb}U\big(s_{\orb}(x)\big)L_{\eta,j,m}^{\orb\,*}L_{\eta,k,n}^{\orb}U\big(s_{\orb}(x)\big)^*\\
=&\sum_{\orb\in\Orb}\sum_{x\in\orb}\sum_{[\eta]\in\hat{H}_{\orb}}\sum_{i=1}^{D_\eta}\sum_{m,n=1}^{M_\eta}(\# H_{\orb})\beta_{\eta,m,n}^{\orb}U\big(s_{\orb}(x)\big)L_{\eta,i,m}^{\orb\,*}L_{\eta,i,n}^{\orb}U\big(s_{\orb}(x)\big)^*\\
=&\sum_{\orb\in\Orb}\sum_{x\in\orb}\sum_{h\in H_{\orb}}\sum_{[\eta]\in\hat{H}_{\orb}}\sum_{i=1}^{D_\eta}\sum_{m,n=1}^{M_\eta}\beta_{\eta,m,n}^{\orb}U\big(s_{\orb}(x)h\big)L_{\eta,i,m}^{\orb\,*}L_{\eta,i,n}^{\orb}U\big(s_{\orb}(x)h\big)^*\\
=&\sum_{\orb\in\Orb}\sum_{g\in G}\sum_{[\eta]\in\hat{H}_{\orb}}\sum_{i=1}^{D_\eta}\sum_{m,n=1}^{M_\eta}\beta_{\eta,m,n}^{\orb}U(g)L_{\eta,i,m}^{\orb\,*}L_{\eta,i,n}^{\orb}U(g)^*=0.
\end{align*}
Using the extremality of $\mc I$, we now find that $\gamma_{x,\eta,\tj,i,j,m,n}=0$ for all orbits $\orb\in\Orb$, $x\in\orb$, $[\eta],\,[\tj]\in\hat{H}_{\orb}$, $i=1,\ldots,D_\eta$, $j=1,\ldots,D_\tj$, $m=1,\ldots,M_\eta$, and $n=1,\ldots,M_\tj$, implying that $\beta_{\eta,m,n}^{\orb}=0$ for all $\orb\in\Orb$, $[\eta]\in\hat{H}_{\orb}$, and $m,\,n=1,\ldots,M_\eta$. Thus, $\mc I$ is also an extreme instrument of the $(\Set,U,V)$--covariance structure.

\section*{Appendix D}

We now prove the extremality characterization of Remark \ref{rem:genext}. We fix the $(G/H,U,V)$--covariant instrument $\mc I$ of said Remark and retain the notation and definitions therein. Let $(L^2_\mu\otimes\hil_\pi,\msf P_\pi^G,U_\pi^G,J)$ be the minimal $(G/H,U,V)$--covariant Stinespring dilation for $\mc I$ constructed in Lemma \ref{lemma:genminlemma}. According to \cite{InstrutI}, $\mc I$ is extreme if and only if, for $E\in\mc L(L_\mu^2\otimes\hil_\pi)$ such that $\msf P_\pi^G(X)E=E\msf P_\pi^G(X)$ for all $X\in\mc B(G/H)$, the condition $J^*(\id_{\mc K}\otimes E)J=0$ implies $E=0$. Let us fix $E\in\mc L(L_\mu^2\otimes\hil_\pi)$ such that $\msf P_\pi^G(X)E=E\msf P_\pi^G(X)$ for all $X\in\mc B(G/H)$. It follows that there is a $\mu$-measurable field $G/H\ni x\mapsto E(x)\in\mc L(\hil_\pi)$ such that $(DF)(x)=D(x)F(x)$ for all $F\in L^2_\mu\otimes\hil_\pi$ and $x\in G/H$. We define $f^\beta_\gamma\in L^\infty_\mu$ through $f^{\eta,i,m}_{\tj,j,n}(x)=\<e_{\eta,i}\otimes f_{\eta,m}|E(x)(e_{\tj,j}\otimes f_{\tj,n})\>$ for all $x\in G/H$ and $(\eta,i,m),\,(\tj,j,n)\in B$. Using Equation \eqref{eq:genHinv}, we have, for all $(\eta,i,m)\in B$ and $g\in G$,
$$
\sum_{k=1}^{D_\eta}\zeta^\eta_{i,k}(g^{-1},gH)V(g)L_{\eta,k,m}U(g)^*=(V\circ s)(gH)L_{\eta,i,m}(U\circ s)(gH)^*.
$$
Using this and the definitions of Lemma \ref{lemma:genminlemma}, we get, for all $\fii\in\mc D$,
\begin{align*}
&\<J\fii|(\id_{\mc K}\otimes E)J\fii\>=\int_{G/H}\big\<(J\fii)(x)\big|\big(\id_{\mc K}\otimes E(x)\big)(J\fii)(x)\big\>\,d\mu(x)\\
=&\int_{G/H}\sum_{[\eta],[\tj]\in\hat{H}}\sum_{i,k=1}^{D_\eta}\sum_{j,l=1}^{D_\tj}\sum_{m=1}^{M_\eta}\sum_{n=1}^{M_\tj}\overline{\zeta^\eta_{i,k}(g^{-1},gH)}\zeta^\tj_{j,l}(g^{-1},gH)\times\\
&\times\<V(g)L_{\eta,k,m}U(g)^*\fii|V(g)L_{\tj,l,n}U(g)^*\fii\>\<e_{\eta,i}\otimes f_{\eta,m}|E(gH)(e_{\tj,j}\otimes f_{\tj,n})\>\,d\mu(gH)\\
=&\int_{G/H}\sum_{\beta,\gamma\in B}f^\beta_\gamma(x)\<L_\beta(U\circ s)(x)^*\fii|L_\gamma(U\circ s)(x)^*\fii\>\,d\mu(x).
\end{align*}
Noticing that $G/H\ni x\mapsto\big(f^\beta_\gamma(x)\big)_{\beta,\gamma\in B}\in\mc L(\ell^2_B)$ is $\mu$-essentially bounded and that any family $\{f^\beta_\gamma\}_{\beta,\gamma\in B}\subset L^\infty_\mu$ with this property can be reached with a $\mu$-essentially bounded $\mu$-measurable field $G/H\ni x\mapsto E(x)\in\mc L(\hil_\pi)$ through $f^{\eta,i,m}_{\tj,j,n}(x)=\<e_{\eta,i}\otimes f_{\eta,m}|E(x)(e_{\tj,j}\otimes f_{\tj,n})\>$ for all $x\in G/H$ and $(\eta,i,m),\,(\tj,j,n)\in B$ and using the fact that such bounded fields of operators exactly correspond to bounded operators commuting with $\msf P_\pi^G$, we obtain the desired extremality characterization. Also note that, using familiar countability arguments, $E(x)=0$ for $\mu$-a.a.\ $x\in G/H$ for a $\mu$-essentially bounded $\mu$-measurable field $G/H\ni x\mapsto E(x)\in\mc L(\hil_\pi)$ is equivalent with $f^\beta_\gamma(x)=0$ for $\mu$-a.a.\ $x\in G/H$ and all $\beta,\,\gamma\in B$ where $\{f^\beta_\gamma\}_{\beta,\gamma\in B}\subset L^\infty_\mu$ is defined as above and the $\mu$-null set of those $x\in G/H$ for which $f^\beta_\gamma(x)\neq0$ does not have to depend on $\beta,\,\gamma\in B$.


\begin{thebibliography}{0}

\bibitem{BuKeDAPeWe2005}
F. Buscemi, M. Keyl, G. M. D'Ariano, P. Perinotti, and R.F. Werner,
Clean positive operator valued measures, {\it J. Math. Phys.} {\bf 46}, 082109 (2005)

\bibitem{BuLa89}
P. Busch and P. Lahti,
The determination of the past and the future of a physical system in quantum mechanics,
{\it Found. Phys.} {\bf 19}, 633-678 (1989)

\bibitem{kirja}
P. Busch, P. Lahti, J.-P. Pellonp\"a\"a, and K. Ylinen,
{\it Quantum Measurement, Text and Monographs is Physics} (Springer, Berlin, 2016)

\bibitem{CaHeTo2009}
C. Carmeli, T. Heinosaari, A. Toigo,
Covariant quantum instruments,
{\it J. Funct. Anal.} {\bf 257},   3353-3374 (2009)

\bibitem{Cattaneo}
U. Cattaneo,
On Mackey's imprimitivity theorem,
{\it Comment. Math. Helvetici} {\bf 54},   629-641
(1979)

\bibitem{ChiDA2006}
G. Chiribella and G.M. D'Ariano, Extremal covariant measurements, 
{\it J. Math. Phys.} {\bf 47}, 092107 (2006)

\bibitem{CyHe77}
H. Cycon and K.-E. Hellwig,
Conditional expectations in generalized probability theory,
{\it J. Math. Phys.} {\bf 18}, 1154-1161 (1977)

\bibitem{QTOS76}
E.B. Davies,
{\it Quantum Theory of Open Systems}
(Academic Press, London, 1976)

\bibitem{DaviesLewis}
E.B. Davies and J.T. Lewis:
An operational approach to quantum probability,
{\it Commun.\ Math.\ Phys.}\ {\bf 17},   239-260 (1970)

\bibitem{HaHePe2012}
E. Haapasalo, T. Heinosaari, and J.-P. Pellonp\"a\"a,
Quantum measurements on finite dimensional
systems: relabeling and mixing,
{\it Quantum Inf. Process.} {\bf 11}, 1751-1763 (2012)

\bibitem{HaPe11}
E. Haapasalo and J.-P. Pellonp\"a\"a,
Extreme covariant quantum observables in the case of an Abelian symmetry group and a transitive value space,
{\it J. Math. Phys.} {\bf 52}, 122102 (2011)

\bibitem{HaPe15}
E. Haapasalo and J.-P. Pellonp\"a\"a,
Covariant KSGNS construction and quantum instruments,
{\it Rev. Math. Phys.} {\bf 29},   1-47 (2017)

\bibitem{OptObs}
E. Haapasalo and J.-P. Pellonp\"a\"a,
Optimal quantum observables,
{\it J. Math. Phys.} {\bf 58}, 122104 (2017)

\bibitem{Holevo83}
A.S. Holevo,
Generalized imprimitivity systems for abelian groups,
{\it Soviet. Math.} (Iz. VUZ) {\bf 27}(2),   53-80 (1983)

\bibitem{Holevo}
A.S. Holevo,
On a generalization of canonical quantization,
{\it Math. USSR Izvestiya} {\bf 28},   175-188 (1987)

\bibitem{Holevo98}
A.S. Holevo,
Radon-Nikodym derivatives of quantum instruments,
{\it J. Math. Phys.} {\bf 39},   1373-1387 (1998)

\bibitem{HoPe2009}
A.S. Holevo and J.-P. Pellonp\"a\"a,
Extreme covariant observables for type I symmetry groups,
{\it Found. Phys.} {\bf }39,   625-641 (2009)

\bibitem{kiukas_etal2006}
J. Kiukas, P. Lahti, and K. Ylinen,
Normal covariant quantization maps,
{\it Math. Anal. Appl.} {\bf 319}, 783-801 (2006)

\bibitem{Mackeykirja}
G.W. Mackey,
{\it Unitary Group Representations in Physics, Probability, and Number Theory}
(Addison-Wesley, Reading, Massachusetts, 1978, 1989)

\bibitem{Parthasarathy99}
K.R. Parthasarathy,
Extremal decision rules in quantum hypothesis testing,
{\it Infinite Dimens. Anal.} {\bf 2}, 557-568 (1999)

%\bibitem{Pellonpaa2011}
%J.-P. Pellonp\"a\"a,
%Complete characterization of extreme quantum observables in infinite dimensions,
%{\it J. Phys. A: Math. Theor.} {\bf 44}, 085304 (2011)

\bibitem{InstrutI}
J.-P. Pellonp\"a\"a,
Quantum instruments: I. Extreme instruments,
{\it J. Phys. A: Math. Theor.} {\bf 46}, 025302 (2013)

\bibitem{varadarajan}
V.S. Varadarajan,
{\it Geometry of Quantum Theory} (Springer, New York, 1985)

\end{thebibliography}
\end{document}